\definecolor{shadecolor}{rgb}{0.9, 0.9, 0.81}
\let\oldmarginpar\marginpar
\renewcommand\marginpar[1]{\-\oldmarginpar[\raggedleft\footnotesize #1]%
{\raggedright\footnotesize #1}}
\newtheorem{thm}{Theorem}[section]
\newtheorem{prop}{Proposition}[section]
\newtheorem{defn}{Definition}[section]
\newtheorem{lemma}{Lemma}[section]
\newtheorem{assumption}{Assumption}[section]
\newtheorem{remark}{Remark}[section]
\newtheorem{example}{Example}[section]
\newcommand{\bpm}{\begin{pmatrix}}
\def\epm{\end{pmatrix}}
\def\le{\left}
\def\ri{\right}
\def\be{\begin{equation}}
\def\ee{\end{equation}}
\def\bea{\begin{eqnarray}}
\def\eea{\end{eqnarray}}
\def\bp{\begin{prop}}
\def\ep{\end{prop}}
\def \pa{\partial}
\def\1{{\bf 1}}
\def\bs{\boldsymbol}
\def\d{{\rm d} }
\def\eq{\begin{equation}}
\def\endeq{\end{equation}}
\def\eqarr{\begin{eqnarray}}
\def\endeqarr{\end{eqnarray}}
\def\eqnn{\begin{equation*}}
\def\endeqnn{\end{equation*}}
\def\ds{\displaystyle}
\newcommand{\C}{\mathbb{C}}
\newcommand{\R}{\mathbb{R}}
\definecolor{light-blue}{rgb}{0.8,0.85,1}
\definecolor{blue}{rgb}{0,0,1}
\definecolor{red}{rgb}{1,0,0}
\renewcommand{\theequation}{\arabic{section}-\arabic{equation}}
\begin{document}

\baselineskip 16pt plus 1pt minus 1pt

\title{{\bf Spectra of random Hermitian matrices with a small-rank external source:  supercritical and subcritical regimes}}

\author{Bertola, Buckingham, Lee, Pierce}

\author{{\bf M. Bertola}\footnote{Department of Mathematics and Statistics, Concordia University, Montr\'eal, Qu\'ebec H3G1M8 and Centre de recherches math\'ematiques, Universit\'e de Montr\'eal, Qu\'ebec H3T1J4;  bertola@mathstat.concordia.ca},
{\bf R. Buckingham}\footnote{Department of Mathematical Sciences, University of
Cincinnati, Ohio 45221; buckinrt@uc.edu},
{\bf S. Y. Lee}\footnote{Department of Mathematics, California Institute of Technology, Pasadena, California 91125; duxlee@caltech.edu},
{\bf V. Pierce}\footnote{Department of Mathematics, University of Texas -- Pan American, Edinburg, Texas, 78539; piercevu@utpa.edu}}
\date{\today}
\maketitle

\begin{abstract}
Random Hermitian matrices with a source term arise, for instance, in the study of non-intersecting Brownian walkers \cite{Adler:2009a, Daems:2007} and sample covariance matrices \cite{Baik:2005}.
 We consider the case when the $n\times n$ external 
source matrix has two distinct real eigenvalues:  $a$ with multiplicity 
$r$ and zero with multiplicity $n-r$.  The source is small in the 
sense that $r$ is finite or $r=\mathcal O(n^\gamma)$, for $0<  \gamma<1$.  For a 
Gaussian potential, P\'ech\'e \cite{Peche:2006} showed that for $|a|$ 
sufficiently small (the subcritical regime) the external source has 
no leading-order effect on the eigenvalues, while for $|a|$ sufficiently 
large (the supercritical regime) $r$ eigenvalues exit the bulk of the 
spectrum and behave as the eigenvalues of $r\times r$ Gaussian unitary ensemble (GUE).  
We establish the universality of these results for a general class 
of analytic potentials in the supercritical and subcritical regimes.
\end{abstract}

\tableofcontents
\section{Introduction}

The physical motivation
behind studying Hermitian random matrix ensembles is as a
model of the Hamiltonian for complex systems, where the eigenvalues of the
random Hermitian matrix represent the energy levels of a system without
time-reversal invariance \cite{Wigner:1951}.

Let ${\bf A}$ be a fixed Hermitian matrix.  We consider the set of all
$n\times n$ Hermitian matrices ${\bf M}$ endowed with the probability measure
\eq
\label{prob-measure}
\mu_n(d{\bf M}) = \frac{1}{Z_n}e^{-n\text{Tr}(V({\bf M})-{\bf A}{\bf M})}d{\bf M}; \quad Z_n:=\int e^{-n\text{Tr}(V({\bf M})-{\bf A}{\bf M})}d{\bf M},
\endeq
where $d{\bf M}$ is the entry-wise Lebesgue measure
and the integration is over all Hermitian matrices.  

When ${\bf A}={\bs 0}$ (no external source) and 
$V({\bf M})={\bf M}^2/2$, \eqref{prob-measure}
describes the Gaussian Unitary Ensemble, or GUE.  For ${\bf A}\neq {\bf 0}$ 
and $V({\bf M})={\bf M}^2/2$, this measure arises in the study of 
Hamiltonians that can be written as the sum of a random matrix and a 
deterministic source matrix \cite{Brezin:1996}.
We are specifically interested in small-rank sources of the form
\eq
\label{A}
{\bf A} = \text{diag}(\underbrace{a,\dots,a}_r,\underbrace{0,\dots,0}_{n-r})
\endeq
assuming that either $r=\mathcal O(n^\gamma)$, $0<\gamma<1$ or $r$ is finite 
(in which case we define $\gamma:=0$).  The ratio of $r$ to $n$, which is 
asymptotically small, will be denoted as 
\eq
\kappa:=\frac{r}{n}.
\endeq
P\'ech\'e \cite{Peche:2006} studied the limiting distribution of the largest
eigenvalue in the Gaussian case ($V({\bf M})={\bf M}^2/2$) under these 
assumptions and found three distinct behaviors.  In
the supercritical case, $r$ eigenvalues are expected to exit the
bulk and are found to distribute as the eigenvalues of an
$r\times r$ GUE matrix.  For the subcritical case, the
largest eigenvalue is expected to lie at the right band endpoint and
behave as the largest eigenvalue of an $n\times n$ GUE matrix.  
In the critical case, when the outliers
lie at the band endpoint, the distribution for the largest eigenvalue
is an extension of the standard GUE Tracy-Widom function
\cite{Tracy:1994} which arises when $r=0$ (see also \cite{Adler:2009a, Baik:2005,Baik:2006}).

One of the primary goals of random matrix theory is to determine universality
classes of matrix ensembles, that is, find different probability measures
on the space of matrices for which the spectral properties are the same in
the large-$n$ limit.  With this goal in mind we consider more general
functions $V({\bf M})$, with our specific assumptions listed in Section 
\ref{assumptions}.  Basically, we assume $V({\bf M})$ is a generic 
single-gap analytic potential with sufficient growth at infinity.  We 
show that P\'ech\'e's results \cite{Peche:2006} hold for these general potentials in the 
supercritical and subcritical cases.  The universality of the critical case 
will be considered elsewhere \cite{Bertola:2010}.


\subsection{Definition of the supercritical, subcritical, and critical regimes}
Let $g(z)$ be the $g$-function associated with the orthogonal
polynomials with potential $V(z)$ (see, for instance, \cite{Deift:1998-book} 
or \cite{Deift:1999a}).  
It may be written as
\eq
\label{g-form}
g(z):=\int_{\mathbb{R}}\log(z-s)\rho_\text{min}(s)ds,
\endeq
where $\rho_\text{min}$ is the unique probability measure minimizing the 
functional
\eq
\mathcal{F}[\rho]:=\int_\mathbb{R}V(s)\rho(s)ds-\int_\mathbb{R}\int_\mathbb{R}\rho(s)\rho(s')\log|s-s'|dsds'.
\endeq 
We will assume this equilibrium measure $\rho_\text{min}$ is supported on 
a single band $[\alpha,\beta]$ (see Assumption \ref{assumptionAV} (v) in Section 
\ref{assumptions}).  Define
\begin{eqnarray}
\label{P1-nolog}
P_1(z) & := & -V(z)+2g(z)+l_1, \\
\label{P2-nolog}
P_2(z) & := & -V(z)+az+g(z)+l_2, \\
\label{P3-nolog}
P_3(z) & := & -P_1(z) + P_2(z) = az-g(z)-l_1+l_2.
\end{eqnarray}
Here $g(z)$ and $l_1$ are uniquely determined by the conditions
that $P_1(z)_\pm$ is purely imaginary on 
{the support of the equilibrium measure}  
and has negative real part on its complement in $\R$ (here the subscripts $_\pm$ denote the boundary values from above/below the real axis).  How 
$l_2$ is chosen will be described at the end of this section.

It is also known that $\Re g(z)$ is a  {\bf continuous} function on $\R$ and {\bf harmonic} on the complement of the support of $\rho_{\text{min}}$ (up to a sign it is also known as  the {\em logarithmic potential} in potential theory).

\begin{defn}
\label{acrit}
Define $a_c$ to be the (unique) value of $a$ so that $P_2'(\beta)=0$. 
\end{defn}
The uniqueness is promptly seen because $P_2'(\beta) = -V'(\beta) + a +g'(\beta)$; 
in fact the effective potential $P_1$  is known \cite{Deift:1998a} to satisfy 
\be
P_1'(z)= \mathcal O{(z-\beta)^\frac 12}
\ee
and in particular $P_1'(\beta)=0$ and hence $P_2'(\beta) = a-g'(\beta)  = a-\frac 12 V'(\beta)$. Thus the critical value of $a$ is given by
\begin{equation} 
a_c = g'(\beta) = \frac{1}{2} V'(\beta)\,.\label{110}
 \end{equation}
We first have 
\begin{lemma}
\label{acpositive}
The critical value $a_c = g'(\beta)$ is positive. Moreover $g'(\alpha)<0$.
\end{lemma}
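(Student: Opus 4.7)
The approach is to work directly from the integral representation of $g$ and use positivity of the equilibrium density. Differentiating formula \eqref{g-form} under the integral sign gives, for $z\in\C\setminus[\alpha,\beta]$,
\begin{equation}
g'(z) \;=\; \int_\alpha^\beta \frac{\rho_{\min}(s)}{z-s}\,ds.
\end{equation}
For any real $z>\beta$ the integrand is everywhere positive on $(\alpha,\beta)$, so $g'(z)>0$; for any real $z<\alpha$ the integrand is everywhere negative, so $g'(z)<0$. The claim is then the statement that these strict inequalities persist when one pushes $z$ to the endpoint $\beta$ (respectively $\alpha$).

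Next I would justify passing the limits $z\to\beta^+$ and $z\to\alpha^-$ inside the integral. Under the single-band regularity assumption (Assumption \ref{assumptionAV}(v)), the minimizer $\rho_{\min}$ vanishes like a square root at the soft edges, $\rho_{\min}(s)\sim C_{\pm}\sqrt{(s-\alpha)(\beta-s)}$ as $s\to\alpha,\beta$. Consequently the integrands $\rho_{\min}(s)/(\beta-s)$ and $\rho_{\min}(s)/(s-\alpha)$ are each in $L^1[\alpha,\beta]$, since the singular factor at the relevant endpoint becomes only $(\beta-s)^{-1/2}$ or $(s-\alpha)^{-1/2}$. Dominated convergence then yields the finite limits
\begin{equation}
g'(\beta)=\int_\alpha^\beta \frac{\rho_{\min}(s)}{\beta-s}\,ds, \qquad g'(\alpha)=\int_\alpha^\beta \frac{\rho_{\min}(s)}{\alpha-s}\,ds,
\end{equation}
with the integrands keeping the signs identified above and with $\rho_{\min}$ strictly positive on the open interval $(\alpha,\beta)$. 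This gives $g'(\beta)>0$ and $g'(\alpha)<0$. Combining with identity \eqref{110} yields the asserted positivity of $a_c$.

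The only delicate point in the argument is the endpoint integrability, which hinges on the \emph{regular} one-cut hypothesis built into the standing assumptions: without square-root vanishing of $\rho_{\min}$ at the edges the boundary values $g'(\alpha^-)$ and $g'(\beta^+)$ could be infinite, in which case one would still obtain the correct sign but would need to interpret the equalities in \eqref{110} accordingly. Granting that assumption, the lemma reduces to a sign count for a convergent integral of a fixed sign, and no further analytic input about $V$ is needed beyond what already guarantees the square-root edge behaviour.
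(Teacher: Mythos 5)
Your argument is correct and coincides with the paper's own proof: both use the integral representation $g'(z)=\int_\alpha^\beta\rho_{\min}(s)/(z-s)\,ds$, the sign of the integrand off the support, and the square-root edge vanishing to justify finiteness of $g'(\beta)$ and $g'(\alpha)$; you merely spell out the dominated-convergence step that the paper leaves implicit. One small correction: the square-root edge behaviour is Assumption \ref{assumptionAV}(vi) (regularity), not (v) (single band), and in fact the paper points out that the single-band hypothesis is not needed here at all provided $\alpha=\inf\operatorname{supp}\rho_{\min}$ and $\beta=\sup\operatorname{supp}\rho_{\min}$.
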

\begin{proof}
From \eqref{g-form} we see that $g'(z) = \int_{\alpha}^\beta \frac 1{z-s} \rho_{\text{min}}(s)\d s$ is {\em positive} for $z>\beta$. It is also known that the density $\rho_{\text{min}}$ vanishes like a square root at the endpoints $\alpha, \beta$ and hence the integral representation of $g'(\beta)$ is convergent and immediately shows it to be positive.  
Similarly $g'(\alpha)$ is {\em negative}. Note that this proof does not require the support to consists of a single band as long as we understand $\beta  = \sup {\rm supp}\, \rho_{\text{min}}$ and $\alpha  = \inf {\rm supp} \,\rho_{\text{min}}$. 
\end{proof}
Lemma \ref{acpositive} implies that there is no loss of generality in studying only the case $a>0$ since there is always a  positive critical $a_c$ (and a negative one); the negative case ($a<0$) is equivalent to the positive case by replacing $a\mapsto -a$, $V(z)\mapsto V(-z)$.

\begin{lemma}\label{lemmaP3}
The critical point structure of $\Re\, P_3(z)$ is: 
\begin{itemize}
\item For $a > a_c$, $\Re\, P_3(z)$ is strictly increasing on $\mathbb{R}\setminus[\alpha,\beta]$;

\item For $a=a_c$, $\Re\, P_3(z)$ is strictly increasing on $\mathbb{R}\setminus[\alpha,\beta]$ and $\Re\,P_3'(\beta)=0$;

\item For $0<a < a_c$, $\Re\, P_3(z)$ has unique local minimum on $\mathbb{R}\setminus(\alpha,\beta)$. This minimum occurs at a point $b^\star\in (\beta,\infty)$.

\end{itemize}
\end{lemma}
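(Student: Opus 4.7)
The plan is to reduce everything to monotonicity of $g'$ on the real axis away from $[\alpha,\beta]$. Since $P_3(z) = az - g(z) - l_1 + l_2$ is real-valued on the two rays $(-\infty,\alpha)$ and $(\beta,\infty)$, on those rays $\Re P_3 = P_3$ and the critical points of $\Re P_3$ are precisely the zeros of
\[
P_3'(z) = a - g'(z).
\]
So the whole statement boils down to describing where $g'(z)=a$ on $\R\setminus[\alpha,\beta]$.

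First I would differentiate the integral representation of $g'$ used in Lemma \ref{acpositive} once more to get
\[
g'(z) = \int_\alpha^\beta \frac{\rho_\text{min}(s)}{z-s}\,\d s, \qquad g''(z) = -\int_\alpha^\beta \frac{\rho_\text{min}(s)}{(z-s)^2}\,\d s < 0
\]
on $\R\setminus[\alpha,\beta]$. Together with the boundary/asymptotic values $g'(\pm\infty)=0$ (which is immediate from $g'(z)\sim 1/z$ as $|z|\to\infty$ since $\rho_\text{min}$ has total mass one and compact support), $g'(\beta)=a_c>0$, and $g'(\alpha)<0$ (both from Lemma \ref{acpositive}), this forces $g'$ to strictly decrease from $0$ down to $g'(\alpha)<0$ on $(-\infty,\alpha)$, and to strictly decrease from $a_c$ down to $0$ on $(\beta,\infty)$. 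In particular $g'<0$ throughout $(-\infty,\alpha)$, so for any $a>0$ we have $P_3'(z) = a - g'(z) > a > 0$ on the left ray. Hence the left ray contributes only strict monotonicity and no critical points, regardless of the regime.

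The three cases are then decided entirely on $(\beta,\infty)$, where $P_3'$ increases strictly from $a-a_c$ as $z\to\beta^+$ to $a$ as $z\to+\infty$. For $a>a_c$, $P_3'>0$ throughout, so $P_3$ is strictly increasing on $\R\setminus[\alpha,\beta]$. For $a=a_c$, $P_3'>0$ on $(\beta,\infty)$ while the boundary value is $P_3'(\beta)=a_c-g'(\beta)=0$, giving the stated boundary equality together with strict monotonicity. For $0<a<a_c$, $P_3'$ passes from negative to positive, so strict monotonicity plus the intermediate value theorem yields a unique zero $b^\star\in(\beta,\infty)$; the sign change makes $b^\star$ a local (in fact global on that ray) minimum, and because $\Re P_3$ is strictly increasing on $(-\infty,\alpha)$, this $b^\star$ is the unique local minimum on $\R\setminus(\alpha,\beta)$.

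The main obstacle is merely verifying that $g'(\beta)$ and $g'(\alpha)$ are well-defined finite values taking the claimed signs; this is precisely what the square-root vanishing of $\rho_\text{min}$ at the band endpoints provides, and it was already handled in the proof of Lemma \ref{acpositive}. No new analytic ingredient is required beyond that convexity estimate $g''<0$, so the argument is essentially a one-variable monotonicity analysis of a Cauchy transform of the equilibrium measure.
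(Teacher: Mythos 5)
Your proof is correct and follows essentially the same route as the paper: both show $g''<0$ on $\R\setminus[\alpha,\beta]$ from the integral representation, analyze the sign of $P_3'=a-g'$ on the two rays, and separate the three cases by comparing $a$ with $a_c=g'(\beta)$. The only cosmetic difference is that you invoke the asymptotics $g'(\pm\infty)=0$ uniformly to pin down the exact range of $g'$ on each ray, whereas the paper only invokes $P_3'(z)\to a$ in the case $0<a<a_c$ and otherwise argues pointwise; both yield the same monotonicity picture.
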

\begin{proof}
From the representation \eqref{g-form} of $g$ one sees immediately that 
\be
g''(z) =- \int_\R \frac 1{(z-s)^2} \rho_{\text{min}}(s) \d s\label{11}
\ee
which shows clearly that for $z\in \R \setminus {\rm supp\,} \rho_{\text{min}}$ the real part of $g$ is concave downward. Thus $\Re g'(z)$ is decreasing  in $\R \setminus {\rm supp\,} \rho_{\text{min}}$; moreover, from 
\be
g'(z) = \int_\R \frac 1{(z-s)} \rho_{\text{min}}(s) \d s \label{12}
\ee
we see that $\Re g'$  is negative  for $z<\inf {\rm supp\,} \rho_{\text{min}} = \alpha$ and positive  for  $z>\sup {\rm supp\,} \rho_{\text{min}}=\beta$.

 From the definition we see that $P_3'(z) = a - g'(z)$ and hence we infer:
\begin{itemize}

\item 
For $a > a_c=g'(\beta)>0$,  $ {\Re}\,(P_3'(z))=a-g'(z)>g'(\beta)-g'(z)$ is positive on $[\beta, \infty)$ therefore $\Re\,P_3$ is strictly increasing.  On the other hand  $a- g'(z)$ is clearly positive on  $(-\infty, \alpha]$ because $a>0$ and $- g'>0$ from \eqref{12}; 

\item For $a=a_c$, $ {\Re}\,P_3'(\beta) = 0$ and   $ {\Re}\, P_3'(z)$ is a monotonically increasing positive function on $(\beta, \infty)$, and a monotonically increasing positive function on $(-\infty, \alpha]$. Therefore there is a single critical point of $ {\Re}\, P_3(z)$ at $z=\beta$.  As $ {\Re}\, P_3''(z) = - {\Re}\, g''(z) > 0$ this must be a minimum;

\item For $0<a < a_c$,  $ {\Re}\, P_3'(\beta) < 0$ and  $P_3'(z)\to a > 0$ for $z \to \infty$; moreover  $ {\Re}\, P_3'(z)$ is a monotonically increasing function on $[\beta , \infty)$, and a (monotonically increasing) positive function on $(-\infty, \alpha]$.  
Since $P_3'(\beta)<0$ there must be a unique point $b^\star>\beta$ where $P_3'(b^\star)=0$.   As $ {\Re}\, P_3''(z) = - {\Re}\, g''(z) > 0 $ this must be the local minimum (or, equivalently, the global minimum on $(\beta,\infty)$).
\end{itemize}
 \end{proof}
We can now define four regimes:  supercritical, subcritical, critical, and 
jumping outliers.  We define the subcritical and critical regimes first.

\begin{defn}\label{subcritical}
The matrix model specified by \eqref{prob-measure} is in the {\bf subcritical regime} if $a < a_c$ and 
$P_2(x)<P_3(b^\star)$ for all $x\geq b^\star$.
\end{defn}

\begin{defn}\label{critical}
The matrix model specified by \eqref{prob-measure} is in the {\bf critical regime} if $a=a_c$ and $P_2(x)<P_2(\beta)$ for all $x>\beta$. 
\end{defn}

Now the supercritical regime can be efficiently defined as the remaining cases, except the small---codimension one---cases that we distinguish as the ``jumping outlier regime."
\begin{defn}
\label{supercritical}
The model is in the {\bf supercritical regime} if $P_2$ has a unique point of  global maximum on $\{x > \max\{\beta, b^\star\}\}$ at a point $x=a^\star\in \R$ and any of the three conditions below is satisfied:
\begin{itemize}
\item $a>a_c$.
\item $a=a_c$ and  $P_2(\beta)<P_2(x)$ for some $x>\beta$.
\item $0<a< a_c$ and $P_3(b^{\star})<P_2(x)$ for some $x>b^{\star}$.
\end{itemize}
Note that $a^\star$ is always greater than $\beta$ and $b^\star$.

If the global maximum of $P_2$ on $(\max\{\beta,b^\star\}, \infty)$ is  attained at several distinct points then we will say that we are in the {\bf jumping outlier regime} that also includes the following remaining case. 
\begin{itemize}
\item $0<a<a_c$ and  $P_2(x)=P_3(b^\star)$ for some $x> b^\star$. (The case $x=b^\star$ cannot occur for {\bf regular} $V$.)
\end{itemize}
\end{defn}


In the present paper we consider the supercritical and subcritical regimes.  The critical 
and jumping outlier regimes will be considered elsewhere 
\cite{Bertola:2010}.

The definition of the  supercritical regime is complicated and the reader may wonder whether the above definitions ever hold in actual examples. 
It is however not difficult to engineer a situation where they do occur, explained in the following example

\begin{example} [Second and third bullet in Definition \ref{supercritical}]
Consider a potential $V$ such that a new spectral band (i.e. interval of support of $\rho_{\text{min}}$) is about to emerge. Then $P_1$ has a local maximum $-E$  at $x_0>\beta$ outside of the main band which is slightly negative but small in absolute value. It is simple to arrange examples where $E$ is arbitrarily small. 
Since $a=a_c$ we have $\Re\, P_3'(\beta)=0$ (see \eqref{110}) and since $\Re\,P_3$ is concave upwards, it must be increasing for $x>\beta$. On the other hand $P_2 = P_1+P_3$ and thus 
\be
\Re\left(P_2(x_0)- P_2(\beta)\right) = -E + \Re(P_3(x_0)-P_3(\beta)).
\ee
Since $E$ can be chosen arbitrarily small, and since $\Re(P_3(x_0)-P_3(\beta))>0$, we see that necessarily we can have the situation described in the second bullet. 
By a continuity argument on $a<a_c$,  this also provides an example for the third bullet since  $P_3(b^\star)<P_3(\beta)\approx P_2(\beta)< P_2(x_0)$.   
\end{example}

Though we do not consider in this paper, one can also create an example in the jumping outlier regime.

We show in Proposition \ref{suffcond} that if $V$ is convex, then we are either in the super or subcritical depending on $a>a_c$ or $0\leq a <a_c$, respectively; in particular, in this case, the situation described in the second and third bullet points of Definition \ref{supercritical} cannot occur.

\bp
\label{suffcond}
Suppose that $V(z)$ satisfies Assumptions \ref{assumptionAV} and in addition it is convex ($V''>0$). Then 
\begin{enumerate}
\item[{\bf (i)}] for $a>a_c$ the model is supercritical and the maximum of $P_2$ at $a^\star$ is nondegenerate;
\item[{\bf (ii)}] for $0<a<a_c$ the model is subcritical;
\item[{\bf (iii)}] there is no jumping outlier regime.
\end{enumerate}
\ep
\begin{proof}
It is known that the convexity of $V$ is a sufficient condition for the support of the equilibrium measure to be a single band $[\alpha, \beta]$. 
Moreover from Lemma \ref{acpositive} we see $V'(\beta)=2g'(\beta)>0>V'(\alpha)=2g'(\alpha)$.
 Note that $P_j$ are all real-valued in $[\beta,\infty)$ (the cut of the logarithm  runs in $(-\infty, \beta]$).\\
{\bf (i)} We then observe that 
\be
P_2'' = -V'' + g''<0 
\ee
since both $-V$ (by assumption) and $g$ (by \eqref{11}) are concave downward. $P_2$ may have at most a single global (nondegenerate) maximum in $[\beta, \infty)$ because $P_2'(\beta) = a - \frac 1 2 V'(\beta) = a-a_c>0$. This also proves  ({\bf iii}). \\
{\bf (ii)} If $0<a<a_c$, $P_2$ strictly decreases on $[\beta,\infty)$ because $P_2'(\beta) = a - \frac 1 2 V'(\beta) = a-a_c<0$.; 
Also we have $P_2(b^\star)=P_1(b^\star)+P_3(b^\star)<P_3(b^\star)$.  Therefore $P_2(x)<P_3(b^\star)$ for all $x\geq b^\star$ as in Definition \ref{subcritical}.
\end{proof}


It should be noted here that, contrary to the work done for $V = z^2/2$, the position of $a$ relative to $a_c$ is not sufficient (for general $V$) to define the critical and subcritical regimes.  If, however, $V$ is  convex (for example an even monomial with positive coefficient) then by Proposition \ref{suffcond} the position of $a$ relative to $a_c$ determines the supercritical/subcritical regime as in \cite{Bleher:2004b, Aptekarev:2005, Bleher:2007, Adler:2009a, Peche:2006}.
The secondary conditions in Definition \ref{supercritical} are dealing with whether the Lagrange multiplier $\ell_2$ in the effective potentials $P_j$ can be chosen such that the off diagonal entries of the jump matrices for the deformed Riemann-Hilbert problems that we will construct in Sections 2, 3, and 4, decay to zero at an exponential rate. 
The problem of finding necessary and sufficient conditions on $V(x)$ and $a$ for the matrix model to be in the supercritical/subcritical regime is quite difficult, as much as it is difficult to find necessary and sufficient conditions for $V(x)$ to be a single-band potential.   
{We now specify the constant $l_2$:
\begin{defn}The constant $l_2$ in Definition \ref{P2-nolog} will be chosen as follows:
\label{defell2}
\begin{itemize}
\item In the supercritical case, the constant $l_2$ is chosen so that 
the unique global maximum of $P_2(z)$ on $(\beta,\infty)$ is zero (i.e. $\Re P_2(a^\star)=0$).
\item  In the subcritical case, the constant 
\eq
l_3:=-l_1+l_2
\endeq
is chosen so that $P_3(b^{\star})=0$.
\end{itemize}
\end{defn}}

\subsection{The kernel and its connection to multiple orthogonal polynomials}

Let $\rho_m(\lambda_1,\dots,\lambda_m)$ be the probability density that
the $n\times n$ matrix ${\bf M}$ chosen using \eqref{prob-measure} has 
eigenvalues
$\{\lambda_1,\dots,\lambda_m\}$ (here $m\leq n$).  Then, the $m$-point correlation function is
$R_m(\lambda_1,\dots,\lambda_m):=\frac{n!}{(n-m)!}\rho_m(\lambda_1,\dots,\lambda_m)$.  Br\'ezin and Hikami
\cite{Brezin:1996, Brezin:1997a, Brezin:1997b, Brezin:1998} showed that,
in the Gaussian case, the $m$-point correlation functions can
all be expressed in terms of a single kernel $K(x,y)$:
\eq
R_m(\lambda_1,\dots,\lambda_m) = \det(K(\lambda_i,\lambda_j))_{i,j=1,\dots,n}.
\endeq
Zinn-Justin \cite{Zinn-Justin:1997,Zinn-Justin:1998} extended this result to
the case of more general $V({\bf M})$.  Bleher and Kuijlaars \cite{Bleher:2004a}
rewrote the kernel in terms of {\it multiple orthogonal polynomials}, a
significant result because it allows one to analyze the asymptotic behavior of
these polynomials via the associated Riemann-Hilbert problem.

This approach was followed by Aptekarev, Bleher, and
Kuijlaars \cite{Bleher:2004b, Aptekarev:2005, Bleher:2007} in the Gaussian
case when the matrix ${\bf A}$ has two eigenvalues $\pm a$, each of
multiplicity
$n/2$.  When $a$ is sufficiently large the eigenvalues of ${\bf M}$
accumulate on two disjoint intervals (the supercritical case).  As $a$ 
decreases the two bands collide (the critical case).  Below this 
critical value of $a$, the eigenvalues accumulate on a single interval 
(the subcritical case).  Related behavior also appears in the 
theory of nonintersecting one-dimensional Brownian motions;  see, for 
instance, Adler, Orantin, and van Moerbeke \cite{Adler:2009b} for the 
critical case.

In general, the existence and number of bands on which eigenvalues accumulate
for large-rank sources for general $V({\bf M})$ is a complicated problem.  
For more on this question
see McLaughlin \cite{McLaughlin:2007} in which the quartic case
$V({\bf M}) = {\bf M}^4/4$ is worked out.  Bleher, Delvaux, and Kuijlaars 
\cite{Bleher:2010} have studied the external source problem with 
two eigenvalues of equal multiplicity and where $V({\bf M})$ 
is a sum of even-degree monomials with positive coefficients.
The external source with a finite number of different eigenvalues with various multiplicity 
for supercritical case has been considered in \cite{Delvaux:2010}.

The starting point of our analysis is the Riemann-Hilbert problem 
associated to the multiple
orthogonal polynomials.  Suppose ${\bf Y}(z)$ is a $3\times 3$
matrix-valued function of the complex variable $z$ satisfying
\eq
\label{rhp}
\begin{cases}
{\bf Y}(z) \text{ is analytic for } z\notin\mathbb{R}, \\
{\bf Y}_+(x) = {\bf Y}_-(x)\bpm 1 & e^{-nV(x)} & e^{-n(V(x)-ax)} \\ 0 & 1 & 0 \\ 0 & 0 & 1 \epm \text{ for } x\in\mathbb{R}, \\
{\bf Y}(z) = \left({\bf I}+O\left(\frac{1}{z}\right)\right)\bpm z^n & 0 & 0 \\ 0 & z^{-(n-r)} & 0 \\ 0 & 0 & z^{-r} \epm \text{ as } z\to\infty.
\end{cases}
\endeq
Here ${\bf Y}_\pm(x) := \lim_{\varepsilon\to 0}{\bf Y}(x\pm i\varepsilon)$ denote the
non-tangential limits of ${\bf Y}(z)$ as $z$ approaches the real axis from the
upper and lower half-planes.  Whenever posing a Riemann-Hilbert problem
we assume (unless otherwise stated) that the solution has continuous boundary
values along the jump contour when approached from either side.
Under our assumption (iv) in Section \ref{assumptions}, the unique 
solution ${\bf Y}(z)$ can be written
explicitly in terms of multiple orthogonal polynomials of the second
kind (see \cite{Bleher:2004b}, Section 2).
 In the case of two distinct  
eigenvalues $a$ and $0$, 
the kernel $K_n(x,y)$ may be written in terms of the function ${\bf Y}(z)$ as
\eq
\label{mop-kernel}
K_n(x,y) = \frac{e^{-\frac{1}{2}n(V(x)+V(y))}}{2\pi i(x-y)} \bpm 0 & 1 & e^{nay} \epm {\bf Y}(y)^{-1}{\bf Y}(x)\bpm 1 \\ 0 \\ 0 \epm.
\endeq
In the technical analysis of this Riemann-Hilbert problem we use and improve 
certain
ideas introduced by Bertola and Lee \cite{Bertola:2009a} to study the
first finitely many eigenvalues in the birth of a new spectral band for
the random Hermitian matrix model without source.

We note here that Baik \cite{Baik:2008} has recently expressed the
kernel $K_n(x,y)$ in terms of the standard (not multiple) orthogonal
polynomials.   
This offers an alternative method for approaching the
problem we consider here.
Based on this approach, Baik and Dong \cite{private} have obtained the universality result similar to ours, for the case of finite $r$ but possibly for non-degenerate eigenvalues, i.e. ${\bf A}=\text{diag}(a_1,a_2,\cdots,a_r,0,\cdots,0)$.
Since the rank of the matrices involved in
this alternative formulation grow with $r$, analyzing the Riemann-Hilbert
problem \eqref{rhp} seems more feasible if $r$ is allowed to grow sublinearly
with $n$.

\subsection{Assumptions on \texorpdfstring{${\bf A}$ and ${\bf V(z)}$}{AV} and results}
\label{assumptions}

First we gather the assumptions we will make in the rest of the paper.
\begin{assumption}
\label{assumptionAV}
We make the following requirements 
\begin{itemize}
\item[(i)]  $a>0$.
\item[(ii)]  ${\bf A}$ is a \emph{small-rank} external source of the form 
\eqref{A} with either $r$ a fixed positive integer or $r=\mathcal O(n^\gamma)$, with $0\leq \gamma<1$.  When $r$ 
is fixed we say $\gamma=0$.
\item[(iii)]  $V(z)$ is real-analytic.
\item[(iv)] $\ds\lim_{|z|\to\infty}\frac{V(z)}{\log(1+z^2)} = \infty \quad \text{and} \quad \lim_{|z|\to\infty}\frac{V(z)-az}{\log(1+z^2)} = \infty.$
\item[(v)] $V(z)$ is a {\em single-band} potential  (for example it can be convex).
\item[(vi)]  The density of the equilibrium measure of $V(z)$ has square root decay at its 
two endpoints (i.e. it is {\em regular} in the sense of \cite{Deift:1999a}).
\item[(vii)] For the supercritical regime, $P_2(z)$ behaves quadratically 
near $a^{\star}$.  Specifically, 
\eq
\label{P2-near-a*}
-V'(z) + a + g'(z) = -c(z-a^{\star}) + O( (z-a^{\star})^2) \text{ as } z\to a^{\star}
\endeq
for some constant $c>0$.  
\end{itemize}
\end{assumption}
Assumption (i) is for convenience, as the case when $a<0$ is equivalent by 
sending $a\to-a$ and $V(z)\to V(-z)$.  Regarding assumption (ii), 
in the general case when ${\bf A}$ has $m>2$ distinct eigenvalues the 
kernel can be written in terms of multiple orthogonal polynomials associated 
to an $(m+1)\times (m+1)$ Riemann-Hilbert problem, which is beyond the scope of 
this paper.  Assumption (iii) allows us 
to use the nonlinear steepest-descent method for Riemann-Hilbert problems, 
while (iv) guarantees the existence of the multiple orthogonal polynomials 
needed to ensure the Riemann-Hilbert problem has a solution.  

Assumption 
(v) avoids the necessity of using Riemann-theta functions for the solution 
of the outer model Riemann-Hilbert problem.  We expect similar results to 
hold generically in the multi-band case.  

Both (vi) and (vii) are 
genericity assumptions.  Assumption (vi) allows us to use Airy parametrices near the 
band endpoints.  Assumption (vii) produces Hermite (or Gaussian) behavior 
of the outlying zeros.  
Our results are computations of the large $n$ behavior of the kernel function 
\eqref{mop-kernel}.  We explicitly compute the kernel in a neighborhood of $a^{\star}$ for the supercritical regime and in a neighborhood of $b^{\star}$ for the subcritical regime. In the remaining portions of the complex plane, our result is that the kernel function converges to the kernel for the classical orthogonal polynomial problem with respect to $V(x)$.  In particular, our results include that, away from the $a^{\star}$ and $b^\star$,  the standard universality classes apply (i.e. a sine kernel in the bulk of the spectrum, and Airy kernels at the edges).  
\begin{shaded}
\begin{thm}\label{theorem-super-kernel}
Suppose $V(z)$ and $a$ satisfy assumptions (i)--(vii) and definition 
\ref{supercritical} of the supercritical regime.  
Let $\zeta_x$ and $\zeta_y$ be the local coordinates 
corresponding to $x$ and $y$ near $a^{\star}$ as defined in 
\eqref{super-zeta}.  
Uniformly for $\zeta_x, \zeta_y$ in compact sets we have the following asymptotics for $r= C n^\gamma$ for some $C>0$, $0\leq\gamma<1$.
\eq
\label{kernel-thm-form}
K_n(x(\zeta_x),y(\zeta_y)) = e^{-\frac{n}{2}P_3(x)+\frac{n}{2}P_3(y)}\frac{\sqrt{f''(0)}}{k_{r-1}^{(r)}}\kappa^{-1/2} \left(K_r^\text{GUE}(\zeta_x,\zeta_y)+\mathcal{O}(n^{-(1-\gamma)/2}) \right),
\endeq
where $P_3(x)$ is given by \eqref{P3-nolog}, $f(z;\kappa)$ is defined in \eqref{super-def-of-f}, $k_{r-1}^{(r)}$ is 
defined by \eqref{def-of-kr}, $\kappa=r/n$, and 
\eq
K_r^\text{GUE}(\zeta_x,\zeta_y):=\frac{H_r^{(r)}(\zeta_x-\zeta_0)H_{r-1}^{(r)}(\zeta_y-\zeta_0)-H_{r-1}^{(r)}(\zeta_x-\zeta_0)H_r^{(r)}(\zeta_y-\zeta_0)}{\zeta_x-\zeta_y}e^{-\frac{r}{4}(\zeta_x-\zeta_0)^2-\frac{r}{4}(\zeta_y-\zeta_0)^2}
\endeq
is the kernel for $r$ eigenvalues of the Gaussian Unitary Ensemble of scale 
$r$ centered at $\zeta_0$, which is defined by the change of variables 
\eqref{super-P2-zeta}.  Here $H_i^{(r)}(\zeta)$ are the rescaled monic 
Hermite polynomials satisfying the orthogonality condition \eqref{def-of-kr}.
\end{thm}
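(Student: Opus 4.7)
The plan is to apply the Deift-Zhou nonlinear steepest descent method to the $3\times 3$ Riemann-Hilbert problem \eqref{rhp}. The first transformation $\mathbf{Y}\to\mathbf{T}$ uses the three effective potentials $P_1$, $P_2$, $P_3$ together with the Lagrange multipliers $\ell_1$, $\ell_2$ (fixed as in Definition \ref{defell2}) to normalize $\mathbf{T}$ at infinity and to expose the correct exponential scales in each entry of the jump matrix. Lemmas \ref{acpositive} and \ref{lemmaP3}, combined with the supercritical hypothesis that $\Re P_2$ attains its global maximum, equal to zero by Definition \ref{defell2}, on $(\max\{\beta,b^\star\},\infty)$ only at $a^\star$, guarantee that after opening lenses around $[\alpha,\beta]$ in the standard way and further deforming the $(1,3)$ and $(3,1)$ entries off the real axis, the jumps for the resulting matrix $\mathbf{S}$ are exponentially close to the identity outside of small fixed disks around $\alpha$, $\beta$, and $a^\star$.

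Next I would construct the global parametrix $\mathbf{P}^{(\infty)}$ on the complement of these three disks. Because of Assumption (v), this is a standard single-band outer parametrix and no Riemann theta functions are needed. At the soft edges $\alpha$ and $\beta$, Assumption (vi) allows the insertion of the usual Airy parametrices. The crux of the argument is the construction of a local parametrix $\mathbf{P}^{(a^\star)}$ in a disk about $a^\star$. Assumption (vii) implies that $P_2$ has a nondegenerate quadratic maximum at $a^\star$, so that the local coordinate $\zeta$ of \eqref{super-zeta} and the function $f$ of \eqref{super-def-of-f} rescale $e^{-nP_2(z)}$ to a Gaussian weight of scale $r$. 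The local model problem is then solved exactly by the rescaled monic Hermite polynomials $H_r^{(r)}$, $H_{r-1}^{(r)}$ and their Cauchy transforms, and the matching of $\mathbf{P}^{(a^\star)}$ against $\mathbf{P}^{(\infty)}$ on the disk boundary fixes the prefactor $\sqrt{f''(0)}/k_{r-1}^{(r)}$ and the $\kappa^{-1/2}$ in \eqref{kernel-thm-form}.

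The final step is a small-norm analysis. The ratio $\mathbf{R}(z):=\mathbf{S}(z)\mathbf{P}(z)^{-1}$ solves a Riemann-Hilbert problem with identity jumps except on the three disk boundaries. On the boundary of the Airy disks the jump is $I+O(n^{-1})$, while on the boundary of the $a^\star$-disk the mismatch between $\mathbf{P}^{(a^\star)}$ and $\mathbf{P}^{(\infty)}$ is $I+O(r^{-1/2})=I+O(n^{-(1-\gamma)/2})$, which dominates and yields the error rate in \eqref{kernel-thm-form}. Running the transformation chain $\mathbf{Y}\to\mathbf{T}\to\mathbf{S}$ backwards in the kernel formula \eqref{mop-kernel} collects the diagonal conjugation factors into $e^{-\frac{n}{2}P_3(x)+\frac{n}{2}P_3(y)}$, while the substitution of $\mathbf{P}^{(a^\star)}$ near $a^\star$ produces the finite $r$-dimensional Hermite determinantal structure $K_r^{\mathrm{GUE}}$.

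The main obstacle is the construction and uniform control of $\mathbf{P}^{(a^\star)}$ when $r=r(n)$ is allowed to grow subpolynomially with $n$. One must check that the local rescaling really produces Hermite polynomials of the correct degree $r$ with the Gaussian weight at scale $r$, that large-$r$ asymptotics of $H_r^{(r)}$, $H_{r-1}^{(r)}$ and of their Cauchy transforms are uniform on the boundary $\partial D(a^\star)$, and that the matching to $\mathbf{P}^{(\infty)}$ holds uniformly for $r\le Cn^\gamma$ so as to yield the claimed $O(n^{-(1-\gamma)/2})$. This is precisely the regime in which the refinement of the Bertola-Lee construction referenced in the introduction is needed, and implementing it for arbitrary analytic $V$ satisfying Assumption \ref{assumptionAV} is the technical heart of the proof.
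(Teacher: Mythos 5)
Your outline gets the scaffolding right (steepest descent on the $3\times 3$ problem, Airy parametrices at $\alpha,\beta$, a Hermite-type local model at $a^\star$, small-norm estimate, back-substitution), and you correctly flag that the uniform control of the local parametrix for $r=r(n)$ growing is the technical heart. But you have not identified the device the paper uses to make that step go through, and without it the argument as sketched actually fails for $\gamma\geq 1/3$. The paper does not work with the unmodified $P_j$ built from the equilibrium $g$-function for $V$: it first solves a \emph{deformed} equilibrium problem (Proposition \ref{propdeform}), replacing $V$ by $\widetilde V = V - \kappa\ln(z-a^\star) - \kappa\sum_{j=1}^K \delta_j/(2(z-a^\star)^j)$ and reducing the total mass of the equilibrium measure to $1-\kappa$. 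This deformation does two jobs at once. First, the $\kappa\ln(z-a^\star)$ term is what makes the $(1,1)$ and $(3,3)$ entries of $\mathbf{W}$ acquire the singular factor $(z-a^\star)^{\mp r}$ at $a^\star$ (equation \eqref{super-Wzero}), which is precisely the $\zeta^{\mp r}$ normalization that the Fokas--Its--Kitaev problem for the degree-$r$ Hermite polynomials demands; without it the local model of \eqref{R-RHP} has no source of the $\zeta^{\pm r}$ exponents and the parametrix constructed from $H_r^{(r)},H_{r-1}^{(r)}$ cannot be matched to a bounded outer parametrix. Second, the finitely many pole corrections $\delta_j$ (chosen via Theorem \ref{confchange} and Proposition \ref{propdeltas}) are what bring the mismatch on $\partial\mathbb{D}_{a^\star}$ from $\mathcal O(r/\zeta)$ down to $\mathcal O(1/\zeta)+\mathcal O(r/\zeta^{K+1})$; since $\zeta=\mathcal O(n^{(1-\gamma)/2})$ on the disk boundary, the $\mathcal O(r/\zeta)$ term is $\mathcal O(n^{\gamma-(1-\gamma)/2})$, which does not decay once $\gamma\geq 1/3$. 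Taking $K$ as in \eqref{super-k-def} is exactly what makes the second error term subdominant to $\mathcal O(n^{-(1-\gamma)/2})$ for any $\gamma<1$.

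A smaller but concrete slip: you assert the mismatch on $\partial\mathbb D_{a^\star}$ is $\mathbf I+\mathcal O(r^{-1/2})=\mathbf I+\mathcal O(n^{-(1-\gamma)/2})$, but $r^{-1/2}=n^{-\gamma/2}\neq n^{-(1-\gamma)/2}$. The dominant contribution actually comes from the $\mathcal O(1/\zeta)$ term in \eqref{super-R-error} combined with $\zeta=\mathcal O(n^{(1-\gamma)/2})$ on $\partial\mathbb D_{a^\star}$ (i.e.\ it is $\mathcal O(\kappa^{1/2})$, not $\mathcal O(r^{-1/2})$). Finally, because the paper transforms with the deformed $\mathcal P_j$ and $\mathfrak g$, the kernel first appears with $\mathcal Q_3$ in the conjugating exponent, and an extra use of Proposition \ref{propdeform} ($\mathfrak g\to g$ uniformly off $[\alpha,\beta]$) is needed to convert $\mathcal Q_3$ to $P_3$ as in \eqref{kernel-thm-form}; your direct back-substitution with the unmodified $P_3$ skips this reconciliation.
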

\end{shaded}

The presence of  $
\exp(-n P_3(x) / 2 ) 
$ in (\ref{kernel-thm-form}) 
 does not affect spectral properties of the kernel (because it amounts to a conjugation of the kernel by a diagonal operator) and therefore the implication is that asymptotically the eigenvalues near $a^{\star}$ are equivalent to those of a scaled $r\times r$ GUE problem; if $V(x)$ is a quadratic potential this  agrees with the results of \cite{Peche:2006}.
\begin{remark}
If the critical point of $P_2$ at $a^\star$ is more degenerate, $P_2(z) = \mathcal O((z-a^\star)^{2k})$, then one may follow similar steps as in \cite{Bertola:2009a} and \cite{Bertola:2009} and conclude that the relevant statistics of the outliers are determined by the kernel of a unitary ensemble with potential given by a polynomial of degree $2k$ instead of a Gaussian, namely, obtained from  a deformation of the Freud orthogonal polynomials.
\end{remark}
 
Theorem \ref{theorem-super-kernel} shows that, as expected, the equilibrium 
measure has no mass near $a^{\star}$.  We have 
the mean density of states 
\eq
\rho_n(x(\zeta_x) ) = \lim_{\zeta_y \to \zeta_x} \frac{1}{n} K_n( x(\zeta_x), y(\zeta_y) ) = \frac{\sqrt{f''(0)}}{k_{r-1}^{(r)}} \kappa^{1/2} \left( \rho_r^{(r)}( \zeta_x ) + \mathcal{O}(n^{-(1-\gamma)/2} r^{-1} )  \right)
\endeq
where $\rho_r^{(r)}(\zeta)$ is the mean density of eigenvalues for the $r \times r$ GUE ensemble.  If $r$ is fixed, this quantity is $\mathcal{O}(n^{-1/2})$ for large $n$, and if $r= n^\gamma$, using that $\rho_r^{(r)}(\zeta_x) \to \frac{1}{2\pi} \sqrt{ 4 - \zeta_x^2} $ as $r \to \infty$, we find 
\begin{equation}
\lim_{n\to \infty} \rho_n(x(\zeta_x)) = \frac{\sqrt{f''(0)}}{2 \pi k_{r-1}^{(r)}} \kappa^{1/2} \sqrt{ 4 - \zeta_x^2} \,.
\end{equation}
In either case, our conclusion is that for large $n$ the mean density of eigenvalues is asymptotically small (of order $\kappa^{1/2}$) in the neighborhood of $a^{\star}$ chosen in the theorem.   See Chapters 5 and 6 of \cite{Deift:1998-book}, Theorem 1.1 of \cite{Bleher:2004b}, and Theorem 8.1 of \cite{McLaughlin:2007} for similar results regarding the derivation of the asymptotic mean distribution of eigenvalues from the kernel.

\begin{shaded}
\begin{thm}
\label{theorem-sub-kernel}
Suppose the pair $(V(z),a)$ satisfies Definition \ref{subcritical} of 
the subcritical regime.  
There exists a closed disk of fixed radius centered at $b^{\star}$ such that, 
for $x$ and 
$y$ in this disk, for large $n$, and for $r=C n^\gamma$ for some $C>0$ and $0\leq \gamma<1$,  there is a $c>0$ such that 
\eq
K_n(x,y) = \mathcal{O}(n^{-(1-\gamma)/2} e^{-cn}).
\endeq
\end{thm}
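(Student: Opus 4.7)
My plan is to carry out the nonlinear steepest-descent analysis of the $3\times 3$ Riemann--Hilbert problem \eqref{rhp} adapted to the subcritical regime, and then to read off the stated estimate from the kernel representation \eqref{mop-kernel}. The whole analysis hinges on the following sign structure, guaranteed by Definition~\ref{subcritical} together with the normalization $P_3(b^\star)=0$ from Definition~\ref{defell2}: by Lemma~\ref{lemmaP3}, $\Re P_3\ge 0$ on $(\beta,\infty)$ with a strict quadratic minimum at $b^\star$ and $\Re P_3>0$ on $(-\infty,\alpha]$; by hypothesis, $\Re P_2<0$ on $[b^\star,\infty)$; and using $P_2=P_1+P_3$ together with $\Re P_1<0$ off the band (which is part of the characterization of $g$) one checks $\Re P_2<0$ also on $(-\infty,\alpha]$.

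\emph{First step (opening up the RHP).} I would perform the standard $g$-function conjugation, producing a normalized matrix $\mathbf U(z)$ whose jumps on $\mathbb R$ are expressed through $e^{\pm nP_1},e^{\pm nP_2},e^{\pm nP_3}$ and whose behavior at infinity is the identity. Next, I would open a Deift--Zhou lens about $[\alpha,\beta]$, factorizing the oscillatory jump there in the usual way, and then deform the remaining ``source'' off-diagonal entries (supported on $\mathbb R\setminus[\alpha,\beta]$ and carrying $e^{\pm nP_2}$ and $e^{\pm nP_3}$) onto upper/lower conjugate contours. The sign estimates above guarantee that all jumps on these deformed contours are of the form $\mathbf I+O(e^{-cn})$, with a uniform $c>0$ obtained away from small discs around $\alpha,\beta$ and around $b^\star$ (near $b^\star$ one only loses an algebraic factor, not the exponential rate, since $\Re P_3$ has a simple quadratic zero there).

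\emph{Second step (parametrices and small-norm argument).} After these transformations, I would build a global parametrix. Since we are in the subcritical regime there are no outliers: the upper-left $2\times 2$ block is the usual Szeg\H{o}-type single-band outer model (matched at $\alpha,\beta$ by Airy parametrices, valid by Assumption~\ref{assumptionAV}(vi)), while the third row and column of the parametrix are essentially trivial but carry a normalization of size $\kappa^{1/2}=(r/n)^{1/2}$ dictated by the third factor $z^{-r}$ of the asymptotic boundary condition in \eqref{rhp}. The standard small-norm theory then yields an error matrix $\mathbf E(z)=\mathbf I+O(n^{-1})$ uniformly on a fixed disc around $b^\star$.

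\emph{Third step (back-substitution into the kernel).} Unravelling the conjugations in \eqref{mop-kernel} produces a schematic expression of the form
\[
K_n(x,y)=\frac{1}{2\pi i(x-y)}\,e^{\frac{n}{2}(P_2(x)+P_2(y))}\,\bigl[\text{entries of }\mathbf S^{-1}(y)\mathbf S(x)\bigr],
\]
where the factor $e^{nay}$ in the row vector is absorbed, together with the $e^{-n(V(x)+V(y))/2}$ prefactor and the conjugation in $y$, into the $\frac{n}{2}(P_2(x)+P_2(y))$ exponent (the $l_j$ constants cancel in the combination). On a fixed disc about $b^\star$ the subcritical bound $\Re P_2\le-c_0<0$ (by continuity, possibly shrinking the disc) controls the exponent, and the entries of $\mathbf S^{-1}(y)\mathbf S(x)$ contributing to the kernel are uniformly $O(\kappa^{1/2})$ because every occurrence of the third row or column carries the third-column normalization of the global parametrix. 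Combining these yields exactly $K_n(x,y)=O(\kappa^{1/2}e^{-cn})=O(n^{-(1-\gamma)/2}e^{-cn})$.

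\emph{Main obstacle.} The exponential decay is conceptually straightforward once the sign structure of $P_2,P_3$ is in hand. The real work, which I expect to dominate the write-up, is the careful bookkeeping through the three transformations to confirm (i) that the exponent in the back-substituted kernel is indeed $\tfrac{n}{2}(P_2(x)+P_2(y))$ modulo terms that are innocuous on the disc, and (ii) that every kernel contribution involves a third-row or third-column entry of the global parametrix, producing the $\kappa^{1/2}$ prefactor. The latter is the same mechanism (in reverse) that generates the $\kappa^{-1/2}$ prefactor in Theorem~\ref{theorem-super-kernel}; its verification here is the analytically delicate step, since in the subcritical case there is no outlier point to ``feed'' the third column and one must see the $\kappa^{1/2}$ amplitude emerge from the normalization alone.
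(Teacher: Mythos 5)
Your plan captures the correct sign structure ($\Re P_3\ge 0$ with a quadratic zero at $b^\star$, $\Re P_2<0$, $\Re P_1<0$ off the band) and the broad strokes of the steepest-descent approach, but there is a genuine gap at $b^\star$ that makes the argument fail as stated, and your account of where the $n^{-(1-\gamma)/2}$ prefactor comes from is not correct.

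\textbf{The local parametrix is not optional.} You propose no local parametrix at $b^\star$ and assert that near $b^\star$ one ``only loses an algebraic factor'' because $\Re P_3$ has a quadratic zero. This misreads where the difficulty sits. The outer lenses carry the jump entry $\pm e^{n(\mathcal P_3-\eta)}$, and (as in Figure~\ref{sublens}) they must pass through $b^\star$ — pushing them away makes $\Re P_3$ strictly positive at the crossing and the jump exponentially \emph{growing}. Consequently, on the portion of the lens inside a fixed disc $\mathbb D_{b^\star}$, the jump entry is of size $O(1)$ near $b^\star$ and is not close to the identity; the small-norm Neumann-series argument simply does not apply. The paper's cure is the Hermite-type local parametrix $\mathbf R(\zeta)$ of Definition~\ref{localparam-sub}, which is engineered (via the deformed equilibrium problem of Proposition~\ref{propdeform-sub} and the conformal map of Theorem~\ref{confchange-sub}) to have \emph{exactly} the $(2,3)$-jump $-e^{n\mathcal P_3}$ inside the disc, so that those jumps cancel from the error problem (see the list after \eqref{sub-VE-b*1}: $\mathbf V^{(\mathbf E)}=\mathbf I$ on $(\partial\Omega_1\cap\partial\Omega_2)\cap\mathbb D_{b^\star}$ and $(\partial\Omega_5\cap\partial\Omega_6)\cap\mathbb D_{b^\star}$). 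Without this device your step 2 does not produce a small-norm problem.

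\textbf{The error rate and the $\kappa^{1/2}$ prefactor.} You claim $\mathbf E(z)=\mathbf I+O(n^{-1})$ from the Airy discs and attribute the $\kappa^{1/2}$ amplitude to a normalization of the third column of the outer parametrix ``dictated by the factor $z^{-r}$.'' Neither is right: the $z^{-r}$ normalization is already absorbed in the transformation $\mathbf Y\mapsto\mathbf W$ (compare \eqref{super-Y-to-W}), so the outer parametrix $\boldsymbol\Psi$ is normalized to the identity at infinity in all three components and carries no $\kappa$-dependent amplitude in its third column. The true bottleneck is the matching of $\mathbf R(\zeta)$ with $\boldsymbol\Psi$ on $\partial\mathbb D_{b^\star}$: since $\zeta=O(n^{(1-\gamma)/2})$ there, the $O(1/\zeta)$ term in the Hermite asymptotics gives $\mathbf V^{(\mathbf E)}=\mathbf I+O(n^{-(1-\gamma)/2})+O(n^{-k-1+(k+2)\gamma})$ (Lemma~\ref{sub-VE-lemma}(b)), hence $\mathbf E(z)=\mathbf I+O(n^{-(1-\gamma)/2})$ (Lemma~\ref{sub-E}). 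That error bound, combined with the $(x-y)$ cancellation in the kernel, is what produces the $n^{-(1-\gamma)/2}$ prefactor in the theorem — not a third-column normalization. For $\gamma>0$ your claimed $O(n^{-1})$ is strictly better than what is actually true, which is a warning sign that the mechanism has been misidentified. To repair the proof you would need to import the deformed $g$-function/Hermite-parametrix machinery (Proposition~\ref{propdeform-sub}, Theorems~\ref{rhoprop-sub} and~\ref{confchange-sub}, and Definition~\ref{localparam-sub}) essentially wholesale.
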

\end{shaded}

\bigskip
\noindent
{\bf Acknowledgments.}  The authors would like to thank Jinho Baik,
Ken McLaughlin, Sandrine P\'ech\'e and Dong Wang for several illuminating discussions.  We thank Baik and Wang for sharing their unpublished results. 
M. Bertola 
was supported by NSERC.  R. Buckingham was 
supported by the Taft Research Foundation.  V. Pierce was supported by 
NSF grant DMS-0806219. 

\section{The supercritical regime}
\subsection{Modified equilibrium problem}
\label{supercrit}

In this section we will use a positive integer $K$; the general statements are valid for any $K$ but we will choose (for future use) 
\eq
\label{super-k-def}
K\geq \max \le\{\frac{3\gamma-1}{1-\gamma},0\ri\}
\endeq
where, we recall, $\gamma$ is the exponent of growth of $r=Cn^\gamma$ for some $C$ and $0\leq \gamma<1$.
%

We will need to build a {\em perturbation} of the equilibrium problem that leads to the definition of $g(z)$; we will denote by $\mathfrak g$ the resulting $g$--function of this perturbation scheme.
The construction, rather involved, will be broken down in steps.

The unperturbed equilibrium measure is supported on the single interval $[\alpha, \beta]$ (by assumption) with external field $V(z)$. Recall that $a^\star$ is lying {\em outside} of $[\alpha,\beta]$ and we fix a compact interval $J$ containing $[\alpha,\beta]$ in its interior and $a^\star\not \in J$.
\bp
\label{propdeform}
For any $K\in \mathbb N$ there is a neighborhood of the origin in $(\kappa,\vec \delta) \in \C^{1+K}$  such that the equilibrium measure $\widetilde \sigma(x)\d x$ of total mass $1-\kappa$ for the external field 
\be
\widetilde V(z):=V(z)  -  \delta V(z), \ \ \ \delta V(z):= \kappa \ln (z-a^\star ) + \kappa \sum_{j=1}^K \frac {\delta_j}{2(z-a^\star)^j}  
\label{deformV}
\ee
is supported on a single interval $[\alpha(\kappa,\vec \delta), \beta (\kappa,\vec \delta)]$ still contained in the interior of $J$: the endpoints $\alpha(\kappa,\vec \delta), \beta (\kappa,\vec \delta)$ are analytic functions of the specified variables.

Furthermore the (normalized) $g$--function of this problem 
\be
\mathfrak g(z):= \int \ln(z-w) \frac{\widetilde  \sigma(w)}{1-\kappa}\d w
\ee
converges uniformly over closed subsets not containing $[\alpha,\beta]$ to the unperturbed $g$--function.
\ep
\begin{remark}
In this proposition we treat the deformation parameters $\kappa, \vec \delta$ as independent from each other; later on, in Proposition \ref{propdeltas}, they will be uniquely determined in terms of the sole parameter $\kappa$. 
\end{remark}
\begin{proof}
It is well known (see, for example, \cite{Deift:1998a}) that 
\begin{itemize}
\item $\mathfrak{g}(z)$ is analytic for $z\notin (-\infty, \widetilde \beta] $, where $\widetilde \beta = \sup\ \text{supp}(\widetilde \sigma)$;
\item $\mathfrak{g}(z)$ has continuous boundary values and satisfies 
\eq
\label{super-g-rhp2}
\begin{split}
& \mathfrak{g}_+(z) - \mathfrak{g}_-(z) = 2 \pi i, \; z \in (-\infty, \widetilde \alpha), \\
& (1-\kappa) \left( \mathfrak{g}_+(z; \kappa) + \mathfrak{g}_-(z; \kappa) \right) = V(z) - \kappa \log(z - a^{\star}) -\sum_{j=1}^{2k}\frac{\delta_j}{2(z-a^{\star})^j}  - \ell_1, \; z \in (\widetilde \alpha,\widetilde  \beta)
\end{split}
\endeq
where $\widetilde \alpha = \inf\ \text{supp}(\widetilde \sigma)$ 
with the real axis oriented left to right;
\item $\mathfrak{g}(z; \kappa) = \log(z) + \mathcal{O}\left(\ds\frac{1}{z}\right) $ as $z \to \infty$.
\end{itemize}

Vice versa, the $g$--function may be characterized by the scalar Riemann--Hilbert problem (\ref{super-g-rhp2}) with the additional requirement that $\Im \mathfrak g_+$ is a nondecreasing function.

To show the analytic dependence of $\mathfrak g$ on $\kappa,\vec \delta$ we proceed as follows:
define the function 
\eq
\label{super-R-def}
R(z):=((z-\widetilde \alpha)(z-\widetilde \beta))^{1/2},
\endeq
where the principal branch of the square root is chosen so 
$R(z;\kappa)=z+\mathcal{O}(1)$ as $z\to\infty$.  
Taking the derivative of \eqref{super-g-rhp2} with respect to $z$ and using the
Plemelj formula gives
\eq
\label{super-gprime}
{\mathfrak g}'(z) = \frac{R(z)}{2\pi i(1-\kappa)}\int_{\widetilde \alpha}^{\widetilde \beta} \frac{V'(s) - \kappa/(s-a^{\star}) + \sum_{j=1}^{2k}j\delta_j/(2(s-a^{\star})^{j+1}) }{(s-z)R_+(s)} ds
\endeq
where $R_+(s)$ refers to the limit in $s$ from the upper half-plane.  
The large-$z$ expansion of (\ref{super-gprime}) along with the condition 
\eq
\mathfrak{g}'(z; \kappa) = \frac{1}{z} + \mathcal{O}\left( \frac{1}{z^2}\right)
\endeq
gives two conditions on $\alpha(\kappa), \beta (\kappa)$:
\begin{equation} \label{super-Jacobian1}
\int_{\widetilde \alpha}^{\widetilde \beta} \frac{V'(s) - \kappa/(s-a^{\star}) + \sum_{j=1}^{2k}j\delta_j/(2(s-a^{\star})^{j+1}) }{ R_+(s)} ds = 0 
\end{equation}
and 
\begin{equation} \label{super-Jacobian2}
\frac{1}{2\pi i} \int_{\widetilde \alpha}^{\widetilde \beta} \frac{V'(s) - \kappa/(s-a^{\star}) + \sum_{j=1}^{2k}j\delta_j/(2(s-a^{\star})^{j+1}) }{ R_+(s) } s \,ds = 1-\kappa \,.
\end{equation}
These two equations uniquely determine $\widetilde \alpha, \widetilde \beta$ as analytic functions of the parameters by the implicit function theorem.

The inequality $\Im \mathfrak g_+'>0$ remains valid, using a continuity argument,  for suitably small values of $\kappa, \vec \delta$ because it is valid (with the strict inequality) for the unperturbed $g$--function (by our initial assumption).

Therefore the expression (\ref{super-gprime}) yields a {\em bona fide} $g$--function for the modified external field $\widetilde V$ in a neighborhood of $(\kappa, \vec \delta ) = (0,\vec 0)$.

The expression for $\mathfrak g$ may be obtained by integration; specifically 
\be
\mathfrak g(z) = \int_{\widetilde\alpha}^z \mathfrak g'(s)\d s - \ell_1
\ee
and $\ell_1$ is also determined by the requirement that $\mathfrak g(z) = \ln (z) + \mathcal O(z^{-1})$ (without the constant term). Explicitly 
\be
\ell_1 = \lim_{z\to\infty} \le(\int_{\widetilde\alpha}^z \mathfrak g'(s)\d s - \ln z\ri)
\ee
which expression shows that $\ell_1$ is also analytic in the parameters, given the already proven analyticity of $\mathfrak g'$.

The statement about the convergence follows easily by noticing that $\mathfrak g'$ converges to $g'(z)$ as desired (note that they both have behavior $1/z + \mathcal O(z^{-2})$). This is best seen from the integral representations and the already proven analytic dependence on the deformation parameters.
\end{proof}

In parallel with the definition of the functions $P_j$ we shall define 
\begin{shaded}
\begin{eqnarray}
\label{super-mathcalP1}
{\mathcal P}_1(z;\kappa) & := & -\widetilde V(z) + 2(1-\kappa){\mathfrak g}(z;\kappa)  + \ell_1, \\
\label{super-mathcalP2}
{\mathcal P}_2(z;\kappa) & := & -\widetilde V(z) + az +\delta V  + (1-\kappa){\mathfrak g}(z;\kappa) +  
{l_2}, \\
\label{super-mathcalP3}
{\mathcal P}_3(z;\kappa) & := & -{\mathcal P}_1(z;\kappa) + {\mathcal P}_2(z;\kappa)\\
\widetilde V(z)& : =&V(z)  -  \delta V(z), \ \ \ \delta V(z):= \kappa \ln (z-a^\star ) + \kappa \sum_{j=1}^K \frac {\delta_j}{2(z-a^\star)^j}  
\end{eqnarray}
\end{shaded}  
\noindent where $l_2$ as in Definition \ref{defell2} and  is independent of the deformation.

Using deformation and continuity arguments we have that (for a sufficiently small value of the deformation parameters $\kappa, \vec \delta$) the real part of $\mathcal P_2$ has a global maximum in a neighborhood of $a^\star$. The main tool in the analysis  of the supercritical case when $r$ grows shall be the next theorem.
\begin{thm}
\label{rhoprop}
There exists a conformal change of coordinate $\rho = \rho(z;\kappa,\vec \delta)$ { fixing $z=a^\star$} (i.e. $\rho(a^\star;\kappa,\vec \delta) \equiv 0$)  that depends analytically on the parameters $\kappa, \vec \delta$ such that 
\be
\mathcal P_2 (z):= -V(z) + a z + (1-\kappa)\mathfrak g + l_2 +2\kappa\ln(z-a^\star) + \sum_{j=1}^K \frac {\delta_j}{(z-a^\star)^j} 
\ee
can be written as 
\be
\mathcal P_2(z;\kappa,\vec \delta) = -\frac 1 2 (\rho- {\mathfrak a} )^2 + 2\kappa \ln \rho+\mathfrak b  +  \sum_{j=1}^{K} \frac 
{ \gamma_j}{\rho^j}\label{mainid}
\ee
where the parameters ${\mathfrak a} = {\mathfrak a} (\kappa,\vec \delta)$, $\mathfrak b  = \mathfrak b (\kappa,\vec\delta)$ and $\vec \gamma = \vec \gamma(\kappa;\vec \delta)$ are analytic functions of the indicated parameters.
Furthermore the Jacobian 
\be
\frac {\pa \vec \gamma}{\pa \vec \delta} 
\ee 
is nonsingular in a neighborhood of the origin (i.e. for $\kappa$ and $\vec\delta$ sufficiently small).
\end{thm}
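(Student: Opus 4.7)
The plan is to cast \eqref{mainid} as an analytic implicit function theorem problem near the unperturbed point $(\kappa,\vec\delta)=(0,0)$. Setting $w:=z-a^\star$, decompose $\mathcal{P}_2=H(w;\kappa,\vec\delta)+2\kappa\ln w+\sum_{j=1}^K \delta_j w^{-j}$, where the analytic part $H:=-V(z)+az+(1-\kappa)\mathfrak{g}(z;\kappa,\vec\delta)+l_2$ is holomorphic in $w$ near $0$ and in $(\kappa,\vec\delta)$ near the origin by Proposition \ref{propdeform}; at the base point, $H(w;0,0)=P_2(z)$ has a nondegenerate maximum at $w=0$ with $H_{ww}(0;0,0)=-c<0$ by Assumption \ref{assumptionAV}(vii). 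I would seek $\rho(w)=\sum_{n\geq 1} c_n(\kappa,\vec\delta)\,w^n$ with $c_1\neq 0$, together with scalars $\mathfrak{a},\mathfrak{b},\gamma_1,\ldots,\gamma_K$ all analytic in $(\kappa,\vec\delta)$, so that \eqref{mainid} holds.

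Two observations reduce the identity to a purely analytic matching problem. First, $\ln\rho(w)=\ln w+\ln(\rho/w)$ with $\rho/w=c_1+O(w)$ analytic and non-vanishing, so the $2\kappa\ln$ singularities on both sides match up to an analytic remainder. Second, $\rho^{-j}=w^{-j}(w/\rho)^j$, so each term $\gamma_j/\rho^j$ has a pole of order $j$ at $w=0$ and contributes to the principal part in the same range as $\delta_j/w^j$. Consequently the difference of the two sides of \eqref{mainid} is analytic at $w=0$, and the identity is equivalent to $K$ principal-part equations (at orders $-K,\ldots,-1$) together with an infinite family of Taylor-coefficient equations at non-negative orders, in the unknowns $(c_1,c_2,\ldots;\mathfrak{a},\mathfrak{b};\vec\gamma)$.

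At the base point, Morse's lemma supplies the explicit solution $\rho_0(w)=\sqrt{2(P_2(a^\star)-P_2(z))}$ with $c_1^{(0)}=\sqrt{c}$, $\mathfrak{a}_0=0$, $\mathfrak{b}_0=P_2(a^\star)$, $\vec\gamma_0=0$ (fixing the sign of $c_1$). The linearization at this base point has a block-triangular structure: the non-negative Taylor equations reduce to the standard linearization of Morse's lemma, invertible because $P_2''(a^\star)\neq 0$; the principal-part equations, after expanding $\rho^{-l}=c_1^{-l}w^{-l}(1+O(w))$ and matching coefficients of $w^{-j}$, form an upper-triangular linear system $\sum_{l\geq j} M_{jl}\gamma_l=\delta_j$ with diagonal entries $M_{jj}=c_1^{-j}=c^{-j/2}\neq 0$. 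Hence the full Jacobian is invertible, and the analytic implicit function theorem produces the desired analytic family $(\rho,\mathfrak{a},\mathfrak{b},\vec\gamma)(\kappa,\vec\delta)$ near the origin.

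For the final assertion, differentiating the principal-part equations $\sum_l M_{jl}(c_1,c_2,\ldots)\,\gamma_l=\delta_j$ with respect to $\vec\delta$ at the origin, the contributions through $\partial c_n/\partial\vec\delta$ drop out because they are multiplied by $\vec\gamma_0=0$, leaving $\partial\vec\gamma/\partial\vec\delta$ at the origin equal to $M_0^{-1}$, which is upper-triangular with diagonal $c^{j/2}$; non-singularity then persists in a neighborhood by continuity of the determinant. The principal technical obstacle is formalizing the implicit function theorem in the infinite-dimensional setting of analytic germs at $w=0$ (for instance in the Banach space of bounded holomorphic functions on a sufficiently small disk) and verifying uniform invertibility of the block-triangular linearization across a full neighborhood of the base point; once that is set up, the rest is a mechanical application of the theorem together with the leading-order computation above.
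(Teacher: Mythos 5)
Your proposal is correct, and it takes a genuinely different route from the paper. The paper does not invoke the implicit function theorem directly. Instead it parametrizes a curve in $(\kappa,\vec\delta)$-space, differentiates the target identity \eqref{mainid} along that curve, and solves the resulting relation for $\partial\rho$ to obtain a vector field on the Banach manifold $\mathcal M = \Omega_1 \times \C^{K+2}$. The scalars $\partial\mathfrak a,\partial\mathfrak b,\partial\vec\gamma$ are then pinned down by demanding that the numerator of $\partial\rho$ vanish at the $K+2$ zeros of the denominator, solved via Cramer's rule and an argument on symmetric functions of those roots; after that the paper integrates the flow (verifying the Lipschitz property in Appendix \ref{confmap-appendix}). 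Your approach replaces this flow integration by a direct analytic implicit function theorem on the same identity, with the key reduction being the splitting into principal-part equations at $w^{-K},\dots,w^{-1}$ and Taylor-coefficient equations at nonnegative orders, whose linearization at the base point is block-triangular (Morse-lemma block $\times$ upper-triangular $M_0$). That is a cleaner conceptual reduction, and it yields the Jacobian $\partial\vec\gamma/\partial\vec\delta = M_0^{-1}$ more transparently. The paper's approach, on the other hand, keeps the denominator's root structure in view and is well adapted to the deferred Lipschitz check. Both routes carry essentially the same infinite-dimensional technical burden (which you correctly flag and which the paper handles in the appendix), so neither is meaningfully shorter.

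One small remark: you compute the diagonal entries of $\partial\vec\gamma/\partial\vec\delta$ at the origin as $c^{j/2}$ (equivalently $c_1^j$), whereas the paper asserts they are $1$. Your computation appears to be the correct one: the $w^{-j}$ coefficient of $\rho_0^{-j}$ is $c_1^{-j}$, so matching $\delta_j w^{-j}$ against $\gamma_j\rho_0^{-j}$ gives $\partial_j\gamma_j = c_1^j$. This does not affect the conclusion, since the only point of the computation is that the Jacobian is upper-triangular with nonvanishing diagonal and hence invertible.
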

\begin{proof}
To simplify the notation we set $a^\star =0$ (up to a translation this entails no loss of generality).
We can write $\mathcal P_2$ as 
\be
\mathcal P_2 = - f(z;\kappa,\delta) +2\kappa\ln(z) + \sum_{j=1}^K \frac {\delta_j}{z^j} \,.
\ee
By the definition of $a^\star$ (which is now translated to $0$), the function $f(z;\kappa,\vec \delta)$ has the property that 
\be
f(z; 0, \vec 0) = \frac  C2  z^2(1 + \mathcal O(z))\ , C>0
\ee
and hence 
\be
f(z; \kappa, \vec \delta) = \frac C2 z^2(1 + \mathcal O(z)) + \mathcal O(\kappa,\vec \delta). 
\ee
Let us fix any (smooth) curve in the parameter space $\kappa(t), \vec \delta(t)$ and denote  by a $\pa $ its tangent vector; we then must show the identity (\ref{mainid}) for $t$ near 0. We suppress the notation of the dependence on $\kappa, \vec \delta$ for brevity in what follows, with the understanding that $f(z), {\mathfrak a} , \gamma_j, \mathfrak b $ all depend on them.

In this part we only sketch the main idea, leaving a full proof for Appendix \ref{confmap-appendix}.
Let $\mathbb D(r)$ be the open disk of radius $r>0$ and let $\Omega_{1}$ be the Banach {\bf manifold} of {\bf univalent, analytic functions} $\rho:\mathbb D(r)\to \C$ which fix the origin $\rho(0)=0$; this is a closed Banach submanifold of all univalent analytic functions because the evaluation map is continuous.
   Define now 
\be
\mathcal M:= \Omega_1 \times \C^{K+1} = \le \{{\bf p} =  (\rho,{\mathfrak a} , \mathfrak b ,\vec \gamma),\ \ \rho\in \Omega_1, \ \ {\mathfrak a} ,\mathfrak b  \in \C \ri \}
\ee
which is naturally also an {\em infinite dimensional} Banach manifold.
We are going to show that the ordinary differential equation in $\pa$ that derives from (\ref{mainid}) is integrable on $\mathcal M$; taking the implicit differentiation of (\ref{mainid}) we obtain
\bea\nonumber
&
  -\pa  f(z) +2\pa  \kappa\ln(z) + \sum_{j=1}^K \frac { \pa  \delta_j }{z^j}  = \le({\mathfrak a}  - \rho + 2 \frac \kappa \rho    - \sum_{j=1}^{K} \frac {j  \delta_j}{\rho^{j+1}} \ri) \pa  \rho 
+\pa  \mathfrak b  
- \pa  {\mathfrak a}  (\rho-{\mathfrak a} ) + 2\pa  \kappa \ln \rho + \sum_{j=1}^K \frac { \pa  \gamma_j }{\rho^j}
\\\nonumber
&\displaystyle\Longrightarrow\quad
\pa  \rho = 
\frac{
-\pa  f(z) +2\pa  \kappa\ln\le( \frac {z}{\rho}\ri) + \sum_{j=1}^K \frac {\pa  \delta _j}{z^j} +\pa  \mathfrak b  +  \pa  {\mathfrak a}  (\rho-{\mathfrak a} ) 
 - \sum_{j=1}^K \frac { \pa  \gamma_j }{\rho^j}}{
{\mathfrak a}  - \rho +  \frac \kappa \rho  - \sum_{j=1}^{K} \frac {j \gamma_j}{\rho^{j+1}} 
 }\\
&\displaystyle \Longrightarrow\quad
  \pa  \rho = 
\rho\frac{
- \rho^{K} \pa  f(z) +2\rho^{K} \ln\le( \frac {z}{\rho}\ri)\pa  \kappa + \sum_{j=1}^K \frac {\rho^{K} \pa  \delta_j}{z^j} +\rho^{K} (\rho-{\mathfrak a} ) \pa  {\mathfrak a}  
 - \sum_{j=1}^K  \pa  \gamma_j \rho^{K-j} } {
{\mathfrak a} \rho^{K+1} - \rho^{K+2} + 2 \kappa \rho^K  - \sum_{j=1}^{K} j \gamma_j\rho^{K-j}
 }\label{49}.
\eea
Formula (\ref{49}) should be regarded as defining a vector field on $\mathcal M$,  and this flow together with  $\rho(z;0,\vec 0) = \sqrt{2f(z;0,\vec 0)}$ gives an initial value problem. To see this we have to remember that the tangent space to $\mathcal M$ consists of 
\be
T\mathcal M:= \Omega_0\times \C^{K+2}
\ee
where $\Omega_0$ stands for the Banach vector space of bounded analytic functions on $\mathbb D(r)$ (without the requirement of being univalent) mapping $0$ to $0$.
The denominator vanishes generically at $K+2$ values $\rho_j$; since  $\pa  \rho$ must be an analytic function, the numerator must vanish at the same points and  this yields a linear system for the $K+2$ values $\pa  \mathfrak b , \pa  {\mathfrak a} , \pa  \gamma_1,\dots, \pa  \gamma_K$. To see how this works  in more detail, let $\rho_j$ be the roots of the denominator in (\ref{49})
\be
 - \rho^{K+2} + {\mathfrak a} \rho^{K+1} + 2 \kappa \rho^K  - \sum_{j=1}^{K} j \gamma_j\rho^{K-j} =- \prod_{j=1}^{K+2} (\rho-\rho_j)
\label{52}\ .
\ee

For $\kappa, {\mathfrak a} , \vec \gamma$ sufficiently small all the roots $\rho_j$ belong to the disk $\mathbb D(r)$ where $\rho(z)$ is univalent and therefore there are corresponding values $z_1,\dots, z_{K+2}$. 

The linear system that determines $\pa{\mathfrak a} , \pa \vec \gamma$ is then 
\be
\rho_{\ell}^{K}\le(  - \pa  f(z_\ell) +2\ln\le( \frac {z_\ell }{\rho_\ell }\ri)\pa  \kappa\ri) + \sum_{j=1}^K \frac {\rho_\ell^{K}}{z_\ell ^j}  {\pa  \delta_j}+\rho_\ell ^{K}(\rho_\ell -{\mathfrak a} ) \pa  {\mathfrak a}  +\rho_{\ell}^{K} \pa \mathfrak b  + 
  \sum_{j=1}^K{\rho_\ell^{K-j}}{ \pa  \gamma_j }=0\ ,\ \ \ell=1,\dots, K + 2 .\label{linsys}
\ee
What we want to see is that this system determines $\pa {\mathfrak a}, \pa {\mathfrak b},\pa \vec \gamma$ as {\em analytic} functions of $\kappa, {\mathfrak a} , \vec \gamma$; to see this we observe that the coefficient matrix of the linear system (\ref {linsys}) is 
\bea
\begin{bmatrix}
\rho_1^{K}(\rho_1-{\mathfrak a} )  & \rho_1^K & \rho_1^{K-1} & \dots & 1\\
\rho_2^{K}(\rho_2-{\mathfrak a} ) & \rho_2^K & \rho_2^{K-1} & \dots & 1\\
\vdots&&&\\
\rho_{K+2}^{K}(\rho_{K+2}-{\mathfrak a} )   & \rho_{K+2}^K & \rho_{K+2}^{K-1} & \dots & 1
\end{bmatrix}\begin{bmatrix}
\pa {\mathfrak a} \\
\pa \mathfrak b \\
\pa \gamma_K\\
\vdots \\
\pa \gamma_1
\end{bmatrix} =  - \begin{bmatrix}
H(z_1)\\
H(z_2)\\
\vdots\\
H(z_{K+2})
\end{bmatrix}\ ,\\
H(z_{\ell}):= \rho_{\ell}^{K}\le( 2\ln\le( \frac {z_\ell }{\rho_\ell }\ri)\pa  \kappa-\pa  f(z_\ell) \ri) + \sum_{j=1}^K \frac {\rho_\ell^{K}}{z_\ell ^j}  {\pa  \delta_j}\,.
\eea
Solving this system by Cramer's rule yields $\pa{\mathfrak a} ,\pa \vec \gamma$ as {\bf symmetric} functions of the roots $\rho_\ell$; moreover it is seen that the determinant of the linear part is simply the Vandermonde determinant $\Delta(\vec \rho):= \prod_{j<\ell\leq K+2} (\rho_j-\rho_\ell)$ and since the determinants in the numerators of Cramer's formula will also vanish whenever two roots coincide, it follows that the ratio is actually {\em analytic} on the diagonals $\rho_\ell = \rho_k, \ \ell\neq k$.

We have thus proved that $\pa {\mathfrak a} , \pa \vec \gamma, \pa \mathfrak b $ are analytic symmetric functions of the $\rho_\ell$'s; it is well known that the ring of analytic symmetric functions is generated by the elementary symmetric polynomials in the roots, namely, the coefficients of the polynomial (\ref{52}). This means that  $\pa {\mathfrak a} , \pa \vec \gamma, \pa \mathfrak b $ are actually expressible in terms of analytic expressions of ${\mathfrak a} , \kappa, \vec \gamma$.

In order to complete the proof we should check that the vector field determined by (\ref{49}) is {\bf Lipshitz} with respect to the Banach norm of $T\mathcal M$; the check is rather straightforward but lengthy and a detailed analysis is deferred to App. \ref{confmap-appendix} in the simplified case $K=0$. After this, the existence and uniqueness of the integrated flow follows from standard theorems in Banach spaces.

\paragraph {\bf Jacobian at the origin.} To compute the Jacobian at the origin $(\kappa, \vec \delta)=(0,\vec 0)$ we have to set 
\be
\rho =  \sqrt {2f(z; 0,\vec 0)}\ ,\ \ {\mathfrak a} =\kappa = \delta_j = \gamma_j =\mathfrak b  =0\ .
\ee 
Taking now  $\pa_\ell $ to mean $\partial_{\delta_\ell}$ we find the equations 
\bea
\pa_\ell \rho =  \frac {-\pa_\ell f + \frac 1 {z^\ell } + \pa_\ell \mathfrak b   + \rho\pa_\ell {\mathfrak a} - \sum_{j=1}^K \frac {\pa_\ell \gamma_{j }}{\rho^j}}{-\rho}\ .
\label{50}
\eea
Since we want $\rho(0;\kappa,\vec \delta)\equiv 0$ we must impose that $\pa_\ell \rho$ in (\ref{50}) vanishes at least of order $z$ at $z=0$; this yields a linear system for the coefficients $\pa_\ell \mathfrak b ,\pa_\ell  {\mathfrak a}, \pa_\ell \gamma_{j}$ and in particular 
\bea
\le\{\begin{array}{ccl}
\pa_\ell \gamma_{j} (0,\vec 0) &=& 0\ , j> \ell\\
\pa_j \gamma_{j} (0,\vec 0) &=& 1 \\
\pa_\ell \gamma_{j}(0,\vec 0) &=& \star\ ,  j<\ell.
\end{array}\ri.\qquad
\frac {\pa \vec \gamma}{\pa  \vec \delta} =   \begin{bmatrix}
1& \star& \star& \dots\\
&1& \star & \dots\\
&&\ddots&\\
&&&1
\end{bmatrix}
\eea
with the $\star$ denoting some expression which is not relevant to us; the above Jacobian is then triangular with $1$ on the diagonal, and hence it is invertible in a neighborhood of $\kappa=0,\ \vec \delta =\vec 0$.
\end{proof}

\subsection{Determination of the \texorpdfstring{$\delta_j$'s}{deltaj}}

We now introduce the rescaled variable  ($\kappa>0$) 
\be
\zeta = \frac \rho {\sqrt\kappa} \ ,\ \ \ \zeta_0:= \frac {{\mathfrak a} }{\sqrt\kappa}\,.
\ee
Let 
\bea
\label{super-gH-def}
g_H(\zeta):= -\frac \zeta 4 \sqrt{\zeta^2 -4} + \ln (\zeta + \sqrt{\zeta^2-4})+ \frac {\zeta^2}2 + \frac {\ell_H} 2\\
\label{super-ellH}
\ell_H:=-1-2\log 2
\eea
be the $g$--function for the Gaussian Unitary Ensemble. It admits an asymptotic expansion of the form 
\be
g_H(\zeta):=\ln \zeta  + \sum_{\ell=1}^\infty \frac {c^{(0)}_\ell}{\zeta^{2\ell}}\ .
\ee
We define the constants $c_j^{(H)}$ by the requirement 
\be
\label{super-cH-def}
g_H(\zeta-\zeta_0) - \ln \zeta + \sum_{j}^{K} \frac {c_j^{(H)}}{\zeta^j} = \mathcal O(\zeta^{-K-1})\ , \ \ \zeta\to\infty.
\ee
It is easily verified that the $c_j^{(H)}$ are polynomials in $\zeta_0$.

\bp
\label{propdeltas}
The parameters $\vec \delta$ are uniquely determined as Puiseux series of $\sqrt{\kappa}$ by the requirement 
\bea
\mathcal P_2(z) = -\frac \kappa 2 (\zeta-\zeta_0)^2 + 2 \kappa \ln (\sqrt{\kappa} \zeta)  + \kappa \sum_{j=1}^K \frac {\kappa^{-\frac j 2} \gamma_j}{\zeta^j} + \mathfrak b  =  -\frac \kappa 2 (\zeta-\zeta_0)^2 + 2 \kappa \ln (\sqrt{\kappa} \zeta)  +\mathfrak b +2  \kappa \sum_{j=1}^K \frac {c^{(H)}_j}{\zeta^j}
\eea
Moreover we have 
\be
\zeta_0 = \mathcal O(\sqrt{\kappa}), \ \ \mathfrak b = \mathcal O(\kappa)\ ,\ \ \ \vec \delta = \mathcal O(\kappa^\frac 32).
\ee
\ep
\begin{proof}
Recall that $ \zeta_0$ depends on both $\kappa, \vec \delta$ analytically; although it is possible to give more detailed information about this dependence, it will not be necessary to the end of establishing the present proposition.

We need to solve the nonlinear system 
\bea
\kappa^{-\frac j2 }\gamma_j(\kappa,\vec \delta) = 2\kappa c_j^{(H)}( \zeta_0) \ ,\ \ \ j=1,\dots, K
\eea
for the unknowns $\vec \delta$. The local solvability of the system around $\vec \delta =\vec 0$ is guaranteed if we can guarantee the nonsingularity of the Jacobian matrix. But this  system can be rewritten 
\be
\vec \gamma - 2\kappa^D \vec c^{(H)} =0\ ,\ \ D:= {\rm diag}\le(\frac 32 , 2,\frac 5 2,  \dots \frac {K+2}2\ri)\label{64}.
\ee
Since $\vec c^{(H)}$ is analytic in $\vec \delta$ one promptly sees that the Jacobian is  
\be
J:= \frac {\pa \vec \gamma}{\pa \vec \delta} -2  \kappa^D \frac{\pa \vec c^{(H)}}{\pa \vec \delta}
\ee
and hence $\det J = 1 + \mathcal O(\kappa^{\frac 32})$. This guarantees that there is a polydisk $|\kappa|<C_1,\ \|\vec \delta\|<C_2$ (for suitable constants) where the system admits a solution  in Puiseux series (i.e. analytic in $\sqrt\kappa$).  It is also clear from \eqref{64} that $\vec \delta = \mathcal O(\kappa ^\frac 3 2)$. Thus, $\zeta_0(\kappa, \delta(\kappa))$ is still of order $\mathcal O(\sqrt\kappa)$ and $\mathfrak b(\kappa, \delta(\kappa))$ is still $\mathcal O(\kappa)$, since all these depend analytically on $\vec \delta$, which is not of higher order than $\kappa$.
\end{proof}

For future reference and definiteness we collect the result of the above discussion in the theorem below

\begin{shaded}
\begin{thm}
\label{confchange}
There exists a conformal change of coordinate $\zeta(z;\kappa)$ of the form 
\be
\zeta(z;\kappa)= \frac {\rho(z;\kappa)}{\sqrt{\kappa}} = \frac 1{\sqrt{\kappa}} C (z-a^\star) (1+\mathcal O(z-a^\star))\ , C>0
\label{super-zeta}
\ee
and a choice of $\vec \delta = \vec \delta(\kappa)$ for the deformed potential (\ref{deformV}) in Puiseux series of $\sqrt{\kappa}$ 
such that 
\be
\mathcal P_2(z; \kappa, \vec \delta(\kappa)) = - \frac \kappa 2 (\zeta - \zeta_0)^2 + 2 \kappa \ln (\sqrt \kappa \zeta) +\mathfrak b  + 2 \kappa \sum_{j=1}^K \frac {c^{(H)}}{\zeta^j}\ .
\label{mainidzeta}
\ee
The functions  $\zeta_0(\kappa), \beta(\kappa), \vec \delta(\kappa)$ admit a Puiseux expansion and are of orders 
\be
\zeta_0 = \mathcal O(\sqrt\kappa)\ ,\ \ \beta = \mathcal O(\kappa)\ ,\ \ \vec \delta = \mathcal O(\kappa).
\ee
The expressions $c^{(H)}_j$ are polynomials of degree $j$ in $\zeta_0$ determined by the formula \eqref{super-cH-def}.
\end{thm}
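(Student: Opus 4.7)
The plan is to assemble Theorem \ref{rhoprop} with Proposition \ref{propdeltas} by making the rescaling $\zeta=\rho/\sqrt{\kappa}$ explicit; the statement is essentially a cosmetic repackaging of two results already in hand.

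First, I would invoke Theorem \ref{rhoprop} to produce the conformal coordinate $\rho(z;\kappa,\vec\delta)$ fixing $a^\star$, depending analytically on $(\kappa,\vec\delta)$, and such that
\begin{equation}
\mathcal{P}_2 \;=\; -\tfrac{1}{2}(\rho-\mathfrak a)^2 + 2\kappa\ln\rho + \mathfrak b + \sum_{j=1}^K \frac{\gamma_j}{\rho^j},
\end{equation}
with $\mathfrak a,\mathfrak b,\vec\gamma$ analytic in $(\kappa,\vec\delta)$ and with $\partial\vec\gamma/\partial\vec\delta$ invertible at the origin. Setting $\zeta := \rho/\sqrt\kappa$ and $\zeta_0 := \mathfrak a/\sqrt\kappa$, the univalence of $\rho$ near $a^\star$ together with the linear behavior $\rho = C(z-a^\star)(1+\mathcal O(z-a^\star))$ furnished by Theorem \ref{rhoprop} transfers directly to $\zeta$, yielding the leading expansion \eqref{super-zeta}.

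Next I would substitute $\rho = \sqrt{\kappa}\,\zeta$ into the displayed formula for $\mathcal P_2$, obtaining
\begin{equation}
\mathcal{P}_2 \;=\; -\frac{\kappa}{2}(\zeta-\zeta_0)^2 + 2\kappa\ln(\sqrt\kappa\,\zeta) + \mathfrak b + \sum_{j=1}^K \frac{\kappa^{-j/2}\gamma_j}{\zeta^j}.
\end{equation}
Matching this to the target form \eqref{mainidzeta} reduces to choosing $\vec\delta = \vec\delta(\kappa)$ so that $\kappa^{-j/2}\gamma_j = 2\kappa\,c_j^{(H)}(\zeta_0)$ for each $j=1,\dots,K$, i.e.\ $\vec\gamma = 2\kappa^D \vec c^{(H)}(\zeta_0)$ in the notation of Proposition \ref{propdeltas}. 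This is exactly the nonlinear system solved there: the Jacobian in $\vec\delta$ equals $\partial\vec\gamma/\partial\vec\delta - 2\kappa^D\,\partial\vec c^{(H)}/\partial\vec\delta$, whose leading term is invertible at the origin by Theorem \ref{rhoprop} and whose correction is of order $\kappa^{3/2}$; the implicit function theorem in the Puiseux ring $\C\{\!\{\sqrt\kappa\}\!\}$ thus delivers a unique $\vec\delta(\kappa)$.

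Finally, I would read off the order estimates. Proposition \ref{propdeltas} supplies $\vec\delta = \mathcal O(\kappa^{3/2})$, in particular $\mathcal O(\kappa)$. Evaluating the analytic functions $\mathfrak a(\kappa,\vec\delta)$ and $\mathfrak b(\kappa,\vec\delta)$, which vanish at the origin, at $\vec\delta = \vec\delta(\kappa)$ produces $\mathfrak a = \mathcal O(\kappa)$ and $\mathfrak b = \mathcal O(\kappa)$, hence $\zeta_0 = \mathfrak a/\sqrt\kappa = \mathcal O(\sqrt\kappa)$. The real technical weight sits in Theorem \ref{rhoprop} (non-degeneracy of $\partial\vec\gamma/\partial\vec\delta$ and integrability of the vector field on the Banach manifold $\mathcal M$) and in Proposition \ref{propdeltas} (the Puiseux-series implicit function theorem); given those, the only remaining obstacle here is the bookkeeping of $\kappa$-orders through the rescaling, which the analysis above handles directly.
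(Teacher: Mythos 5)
Your proposal is correct and takes essentially the same route the paper does: the paper explicitly presents Theorem \ref{confchange} as a collection of the preceding discussion, namely Theorem \ref{rhoprop} (for the conformal map $\rho$ and the quadratic-plus-log normal form) and Proposition \ref{propdeltas} (for the choice of $\vec\delta(\kappa)$ making the coefficients match $2\kappa c_j^{(H)}(\zeta_0)$ after the rescaling $\zeta=\rho/\sqrt\kappa$), and your assembly, coefficient-matching, and order bookkeeping reproduce exactly that logic.
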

\end{shaded}


\subsection{Steepest descent analysis (supercritical case)}

We make the following change of variables from ${\bf Y}(z)$ to
${\bf W}(z)$:
\bea
\label{super-Y-to-W}
{\bf W}(z;\kappa)&\hspace{-40pt}:=\left(\begin{array}{ccc}e^{\frac{n}{2}\ell_1} & 0 & 0 \\0 & e^{-\frac{n}{2}\ell_1} & 0 \\0 & 0 & e^{\frac{n}{2}
(\ell_1-2l_2+
2\eta)
}\end{array}\right)
 {\bf Y}(z)\left(\begin{array}{ccc}e^{-\frac{n}{2}V} & 0 & 0 \\0 & e^{\frac{n}{2}V} & 0 \\0 & 0 & e^{\frac{n}{2}(V-2a z)}\end{array}\right)\times
 \\ &\times
\left.\begin{cases}
{\bf I}, &z\in \Omega_1\cup\Omega_6
\\ \left(\begin{array}{ccc}1 & 0 & 0 \\ 0 & 1 & -1 \\ 0 & 0 & 1 \end{array}\right) , &z\in \Omega_2 \cup \Omega_5
\\ \left(\begin{array}{ccc}1 & 0 & 0 \\ -1 & 1 & -1 \\0 & 0 & 1\end{array}\right) , &z\in \Omega_3
\\ \left(\begin{array}{ccc}1 & 0 & 0 \\ 1 & 1 & -1 \\0 & 0 & 1\end{array}\right) , &z\in \Omega_4
\end{cases}\right\}
\left(\begin{array}{ccc}
e^{-\frac{n}{2}{\mathcal P}_1} & 0 & 0 \\
0 & e^{\frac{n}{2}{\mathcal P}_1 } & 0 \\
0 & 0 & e^{\frac{n}{2}(2\mathcal P_2  - \mathcal P_1 -
2\eta ) }
\end{array}\right) e ^{-\frac n2 \delta V(z)} \\
  & 
  \delta V(z):= \kappa \ln (z-a^\star ) + \kappa \sum_{j=1}^K \frac {\delta_j}{2(z-a^\star)^j}    
\\
&\eta:= \kappa \ln \kappa  + \mathfrak b  + \kappa \ell_H\ ,\ \ \ (\ell_H:= -1-2\ln2).
\label{defeta}
\eea
The constant (in $z$) $\eta$ is chosen to carefully balance other constants later on in the study of the local parametrix; we recall that $\mathfrak b = \mathcal O(\kappa)$ has appeared in Theorem \ref{confchange}, which is also the source of the $\kappa \ln \kappa $ term. The constant $\ell_H$ was introduced in \eqref{super-ellH} and it is the Robin's constant for the equilibrium problem associated to the quadratic potential.
See Figure \ref{fig_biglensSuper} for a visual on the different regions $\Omega_j$'s.  The exact choice of the outer lenses 
is given below in the proof of Lemma \ref{P-nolog-lemma}(d).  
The inner lenses are chosen in the standard way for the $2\times 2$ 
Riemann-Hilbert problem for (non-multiple) orthogonal polynomials.  

The new matrix  ${\bf W}(z)$ satisfies a new Riemann Hilbert Problem  which can be directly evinced from the one for ${\bf Y}$ and is of the form  ${\bf W}_+(z)={\bf W}_-(z){\bf V^{(W)}}(z)$  with jumps 
\begin{equation} \label{super-Wjumps}
{\bf V^{(W)}}(z) = \left\{\begin{array}{ll}
\bpm 1 & e^ {n{\mathcal P}_1(z)} & e^{n({\mathcal P}_2(z)
-\eta)
} \\ 0 & 1 & 0 \\ 0 & 0 & 1 \epm, & z\in \partial\Omega_1\cap\partial\Omega_6,
\\\bpm 1&0&0\\0&1&-e^{n({\mathcal P}_3(z)-\eta)}\\0&0&1\epm, & z\in (\partial\Omega_1\cap\partial\Omega_2) \cup (\partial\Omega_5\cap\partial\Omega_6),
\\\bpm 1 & 0 & 0 \\ e^{-n{\mathcal P}_1(z)} & 1 & 0 \\ 0 & 0 & 1 \epm, & z\in (\partial\Omega_2\cap\partial\Omega_3)\cup(\partial\Omega_4\cap\partial\Omega_5),
\\\bpm 0&1&0\\-1&0&0\\0&0&1\epm, & z\in[\alpha,\beta],
\\\bpm 1& e^{n{\mathcal P}_1(z)}  &0\\0&1&0\\0&0&1\epm, & z\in\partial\Omega_2\cap\partial\Omega_5,
\end{array}\right.
\end{equation}
and the asymptotic conditions
\bea
\label{super-W-infinity}
&& {\bf W}(z)={\bf I}+{\cal O}\left(\displaystyle\frac{1}{z}\right), \quad  z\to\infty\\
\label{super-Wzero}
&& \begin{split}
{\bf W}(z) = \text{(analytic)}\bpm 
{ {\rm e}^{- n \delta V(z)}}
 & 0 & 0 \\ 0 & 1 & 0 \\ 0 & 0 & 
{ {\rm e}^{n \delta V(z)}}
  \epm& \\
\text{as }z\to a^{\star}.&
\end{split}
\eea
The orientation of the contours is given in Figure \ref{fig_biglensSuper}.
Here we have used the factorization
\eq
\bpm 1 & 1 & 1 \\ 0 & 1 & 0 \\ 0 & 0 & 1 \epm = \bpm 1 & 0 & 0 \\ 0 & 1 & -1 \\ 0 & 0 & 1 \epm  \bpm 1 & 0 & 0 \\ 1 & 1 & 0 \\ 0 & 0 & 1 \epm \bpm 0 & 1 & 0 \\ -1 & 0 & 0 \\ 0 & 0 & 1 \epm  \bpm 1 & 0 & 0 \\ 1 & 1 & 0 \\ 0 & 0 & 1 \epm  \bpm 1 & 0 & 0 \\ 0 & 1 & 1 \\ 0 & 0 & 1 \epm.
\endeq


\begin{figure}
\setlength{\unitlength}{2.7pt}
\begin{center}
\begin{picture}(100,50)(-50,-25)
\put(-10,0){\circle*{1}}
\put(-11.5,2){$\alpha$}
\put(10,0){\circle*{1}}
\put(9.5,2){$\beta$}
\put(28,0){\circle*{1}}
\put(27,2){$a^{\star}$}
\put(-50,0){\line(1,0){90}}
\put(-42,14){$\Omega_1$}
\put(-22,8.5){$\Omega_2$}
\put(-1.5,3){$\Omega_3$}
\put(-1.5,-4.5){$\Omega_4$}
\put(-22,-9.5){$\Omega_5$}
\put(-42,-15){$\Omega_6$}
\qbezier(-10,0)(0,16),(10,0)
\qbezier(-10,0)(0,-16),(10,0)
\qbezier(15,0)(15,25)(-10,25)
\qbezier(15,0)(15,-25)(-10,-25)
\qbezier(-35,0)(-35,25)(-10,25)
\qbezier(-35,0)(-35,-25)(-10,-25)
\thicklines
\put(-42,0){\vector(1,0){1}}
\put(-22,0){\vector(1,0){1}}
\put(0,0){\vector(1,0){1}}
\put(22,0){\vector(1,0){1}}
\put(35,0){\vector(1,0){1}}
\put(-10,-25){\vector(1,0){1}}
\put(-10,25){\vector(-1,0){1}}
\put(0,8){\vector(1,0){1}}
\put(0,-8){\vector(1,0){1}}
\end{picture}
\end{center}
\caption{The oriented jump contour $\Gamma$ for {\bf W}(z) and the regions $\Omega_i$ in the supercritical case.\label{fig_biglensSuper}}
\end{figure}
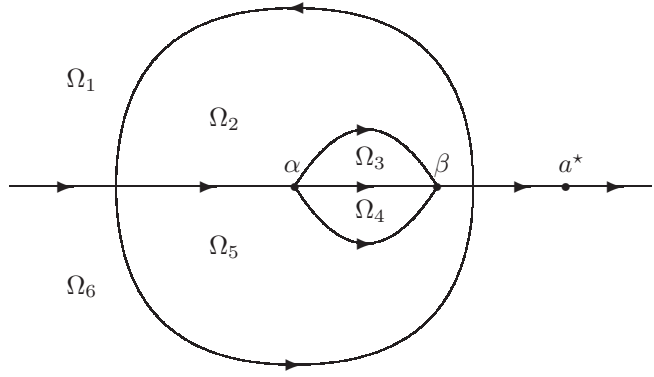

\subsubsection{The outer parametrix}

We will show below in Section \ref{super-error} that the jump matrices for ${\bf W}(z)$ decay uniformly to constant jump matrices
as $n\to\infty$ outside of small fixed neighborhoods of $\alpha$, $\beta$,
and $a^{\star}$.  These limiting jump matrices are the identity except on the
band $[\alpha,\beta]$.  We therefore define the outer parametrix
${\bf \Psi}(z)$ to be the solution to the following Riemann-Hilbert problem:
\eq \label{Vpsi}
{\bf \Psi}_+(z) = {\bf \Psi}_-(z)\bpm 0 & 1 & 0 \\ -1 & 0 & 0 \\ 0 & 0 & 1 \epm \text{ for } z\in(\alpha,\beta)\ , \qquad {\bf \Psi}(z)={\bf I}+{\cal O}\left(\frac{1}{z}\right).
\endeq
It is well known that the solution to this Riemann-Hilbert problem is
\eq
{\bf \Psi}(z) = {\bf U}^{-1}\bpm \ds \left(\frac{z-\beta}{z-\alpha}\right)^{-1/4} & 0 & 0 \\ 0 & \ds \left(\frac{z-\beta}{z-\alpha}\right)^{1/4} & 0 \\ 0 & 0 & 1 \epm {\bf U}, \quad {\bf U}:=\bpm  \frac{1}{2} &  \frac{i}{2} & 0 \\  -\frac{1}{2} &  \frac{i}{2} & 0 \\ 0 & 0 & 1 \epm,
\endeq
where
\eq
\lim_{z\to\infty}\left(\frac{z-\beta}{z-\alpha}\right)^{1/4} = 1
\endeq
and this function is cut along $[\alpha,\beta]$.
\subsubsection{The local parametrix near \texorpdfstring{$\bs{a^{\star}}$}{astar} }
Special attention is needed near the point $z=a^{\star}$ as the jump 
matrices do not decay
uniformly to the identity near this point.  
According to the definition of $a^\star$ as the point of maximum for $P_2$ and given that $\mathcal P_2$ is a deformation of $P_2$ we have
\eq
\label{super-def-of-f}
{\mathcal P}_2(z;\kappa) = - \overbrace{ \frac{C}{2}(z-a^{\star})^2(1+ \mathcal O(z-a^\star))  +\mathcal O(\delta) }^{:=f(z;\kappa)} + 2\kappa\log(z-a^{\star}) + \sum_{j=1}^{2k}\frac{\delta_j(\kappa)}{(z-a^{\star})^j} 
\endeq
where $C>0$ and the deformation $\mathcal O(\delta)$ is some analytic function of $z$ of the indicated order in $\kappa$.
  Let $\mathbb{D}_{a^{\star}}$ 
be a fixed-size 
circular disk centered at $a^{\star}$ chosen small enough so that
\eq
\label{condition-on-Da*}
\left|\Re\left[
\frac {f(z;\kappa)}2\right]\right|<|\Re\, P_1(z)|
\endeq
(recall that $\Re P_1>0$ outside of the support of the equilibrium measure)
inside the disk, and such that  the disk does not intersect the outer lenses.  
Orient $\partial\mathbb{D}_{a^{\star}}$ clockwise.  

We now apply Theorem \ref{confchange}: 
let the $k$ constants $c_j^{(H)}$, $j=1,\dots,k$ be specified by 
\eqref{super-cH-def}:
in the local scaling  coordinate $\zeta$  we have 
\be
\label{super-P2-zeta}
n\mathcal P_2 = -\frac r 2 (\zeta- \zeta_0)^2  + 2r \ln \zeta  + 2 r \sum_{j=1}^K \frac {c^{(H)}} {\zeta^j} + r \ln \kappa + n \mathfrak b.
\ee
where the scaling coordinate $\zeta$ has the following behavior on the boundary of the disk $\mathbb D_{a^\star}$ 
\eq
\label{super-zeta-order}
\zeta = \mathcal{O}(n^{(1-\gamma)/2}) \quad \text{when } z\in\partial\mathbb{D}_{a^{\star}}.
\endeq
We also recall (Theorem \ref{confchange}) that $\mathfrak b  =\mathcal O(\kappa)$ and hence $n\mathfrak b = \mathcal O(r)$.
This suggests the following definition for the model Riemann Hilbert Problem  of the local parametrix.
\begin{defn} The local parametrix within the disk $\mathbb D_{a^\star}$ shall be the unique solution ${\bf R}(z)$ to the  following model Riemann-Hilbert:
\bea
\label{R-RHP}
\begin{cases}
& {\bf R}_+(\zeta) = {\bf R}_-(\zeta)\bpm 1 & 0 & \zeta^{2r}\exp\left(-\frac{r}{2}(\zeta-\zeta_0)^2
 + 
 r\ell_H+2r\sum_{j=1}^K\frac{c_j^{(H)}}{\zeta^{j}}\right) \\ 0 & 1 & 0 \\ 0 & 0 & 1 \epm  
\\ & \phantom{{\bf R}_+(\zeta)} = {\bf R}_-(\zeta)\bpm 1 & 0 & e^{n(\mathcal P_2 - \eta)} \\ 0 & 1 & 0 \\ 0 & 0 & 1 \epm, \quad \zeta\in\mathbb{R},
\\ &{\bf R}(\zeta)={\bf I}+{\cal O}\left(\ds\frac{1}{\zeta}\right) \text{ as } \zeta\to\infty,
\\&{\bf R}(\zeta)=(\mbox{analytic})\bpm \zeta^{-r}\exp\left(-r\sum_{j=1}^K \frac{c_j^{(H)}}{\zeta^{j }}\right) & 0 & 0 \\ 0 & 1 & 0 \\ 0 & 0 & \zeta^r\exp\left(r\sum_{j=1}^K \frac{c_j^{(H)}}{\zeta^{j }}\right) \epm \text{ as } \zeta\to 0.
\end{cases}\\
\eta := \kappa\ln \kappa + \mathfrak b - \kappa\ell_H
\eea
\end{defn}
The behavior at $\zeta=0$ ($z=a^\star$) is dictated by \eqref{super-Wzero}.
We point out that the problem is essentially $2\times 2$; moreover it will be shown below that it is a slight modification of the Fokas-Its-Kitaev Riemann-Hilbert
problem for Hermite orthogonal polynomials (see \cite{Fokas:1992} and
\cite{Deift:1999b}, Section 3).  
\begin{prop}
\label{propHparametrix}
Let the rescaled Hermite polynomials
$H^{(r)}_i(\zeta)$ be the family of monic polynomials satisfying the
orthogonality condition
\begin{equation}
\label{def-of-kr}
\int_{-\infty}^\infty H^{(r)}_i(\zeta)H^{(r)}_j(\zeta) e^{-\frac r2 \zeta^2}d\zeta=r^{j-\frac 1 2} j! \sqrt{2\pi}\delta_{ij} =k^{(r)}_j\delta_{ij},
\end{equation}
where the $k^{(r)}_i$ are normalization constants.
Then the solution to \eqref{R-RHP} is
\eq
\label{super-R}
{\bf R}(\zeta) =  \exp\left(-\frac{r}{2}
\ell_H
{\bf\Lambda}_{13}\right) \,{\bf H}_{13}(\zeta) \,\zeta^{-r{\bf\Lambda}_{13}}\,\exp\left(\left(\frac{r}{2}
\ell_H
 - r\sum_{j=1}^K \frac{c_j^{(H)}}{\zeta^{j}}\right){\bf\Lambda}_{13}\right),
\endeq
where
\eq
\label{super-H}
{\bf H}_{13}(\zeta):=\bpm H^{(r)}_r(\zeta-\zeta_0) & 0 &\displaystyle \frac{1}{2\pi i}\int_{-\infty}^\infty \frac{H^{(r)}_r(s-\zeta_0) e^{-\frac{r}{2}s^2}}{s-\zeta}ds \\ 0 & 1& 0 \\ \displaystyle\frac{2\pi i}{-k^{(r)}_{r-1}}H^{(r)}_{r-1}(\zeta-\zeta_0)& 0 & \displaystyle \frac{-1}{k^{(r)}_{r-1}}\int_{-\infty}^\infty \frac{H^{(r)}_{r-1}(s-\zeta_0) e^{-\frac{r}{2}s^2}}{s-\zeta}ds \epm
\endeq
and 
\eq
{\bf\Lambda}_{13}:=\bpm 1 & 0 & 0 \\ 0 & 0 & 0 \\ 0 & 0 & -1 \epm.
\endeq
\end{prop}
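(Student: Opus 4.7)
My plan is to prove Proposition \ref{propHparametrix} by direct substitution---showing that the stated $\mathbf R(\zeta)$ solves the Riemann--Hilbert problem \eqref{R-RHP}---combined with standard uniqueness: the jump matrix is upper unitriangular, the prescribed singularity at $\zeta=0$ is removable after the explicit diagonal conjugation written in \eqref{R-RHP}, so the ratio of any two solutions is entire and tends to the identity at infinity, hence is identically $I$.

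The derivation of the formula is driven by three observations. First, the second row and column of both the jump matrix and the endpoint structure in \eqref{R-RHP} are trivial, so the $3\times 3$ problem decouples into a scalar problem on row/column $2$ (solution identically $1$) and a $2\times 2$ problem for the $(1,3)$-block $\widetilde{\mathbf R}(\zeta)$. Second, on this reduced block I would right-multiply $\widetilde{\mathbf R}(\zeta)$ by a diagonal matrix of the form $\zeta^{r\sigma_3}e^{(\alpha + r\sum_{j=1}^K c_j^{(H)}/\zeta^j)\sigma_3}$, with $\sigma_3 = \operatorname{diag}(1,-1)$ and a suitable constant $\alpha$ proportional to $\ell_H$; a direct computation shows that, with $\alpha$ chosen appropriately, this conjugation simultaneously cancels the prescribed singular behavior of $\widetilde{\mathbf R}$ at $\zeta=0$ and reduces the $(1,2)$ entry of the jump to the canonical Gaussian $e^{-\frac r2(\zeta-\zeta_0)^2}$. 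Third, the resulting $2\times 2$ problem---analytic off $\mathbb R$, with jump $\bpm 1 & e^{-\frac r2(\zeta-\zeta_0)^2} \\ 0 & 1 \epm$ on $\mathbb R$ and asymptotic $(I + O(\zeta^{-1}))\zeta^{r\sigma_3}$ at infinity---is precisely the Fokas--Its--Kitaev RHP for monic polynomials orthogonal on $\mathbb R$ with respect to the shifted Gaussian weight $e^{-\frac r2(s-\zeta_0)^2}$. By translation of the integration variable these are the rescaled monic Hermite polynomials $H_n^{(r)}(\zeta-\zeta_0)$ satisfying \eqref{def-of-kr}, so the unique solution is the standard Hermite parametrix; embedding it in a $3\times 3$ matrix with second row and column equal to the standard basis vector $e_2$ recovers $\mathbf H_{13}(\zeta)$ from \eqref{super-H}, and reversing the conjugation reproduces \eqref{super-R}.

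The main---and rather modest---obstacle is the asymptotic verification at $\zeta \to \infty$: the Hermite parametrix has leading behavior $(I + O(\zeta^{-1}))\zeta^{r\sigma_3}$, so undoing the $\zeta^{r\sigma_3}$ factor alone only gives $\mathbf R \to I$ at leading order. The full statement $\mathbf R = I + O(\zeta^{-1})$ without spurious intermediate corrections relies on the defining property \eqref{super-cH-def} of the $c_j^{(H)}$'s: they are the Laurent coefficients of $g_H(\zeta-\zeta_0) - \ln \zeta$, so the exponential factors $e^{r\sum c_j^{(H)}/\zeta^j \sigma_3}$ combine with $\zeta^{r\sigma_3}$ and the Robin constant into $e^{rg_H(\zeta-\zeta_0)\sigma_3}$, which is exactly the diagonal dressing that appears in the steepest-descent analysis of the Hermite parametrix. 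Cancellation to the prescribed precision is then automatic. The remaining checks (the exact jump form on $\mathbb R$ and analyticity at $\zeta=0$) are pure bookkeeping of constants and signs.
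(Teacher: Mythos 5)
The paper itself declines to supply a proof ("The proof is a direct manipulation and it is left to the reader"), so there is nothing to compare approaches against; what matters is whether your sketch actually carries the direct verification. Your three-step plan—decouple the trivial second row/column, dress by a diagonal factor built from $\zeta^{r\sigma_3}$, $\ell_H$, and the $c_j^{(H)}$ Laurent tail to reduce to the Fokas–Its–Kitaev problem for the (shifted) Gaussian weight, and then invoke uniqueness via Liouville—is exactly the right one, and your observation that the $c_j^{(H)}$ are chosen precisely so that $\zeta^{r}e^{r\sum c_j^{(H)}/\zeta^j}$ reconstitutes $e^{r g_H(\zeta-\zeta_0)}$ up to $O(r/\zeta^{K+1})$ is the key point behind the asymptotic normalization (and behind \eqref{super-R-error}).

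One imprecision worth flagging: you describe the dressing as a pure \emph{right}-multiplication of $\widetilde{\mathbf R}$ by $\zeta^{r\sigma_3}e^{(\alpha+r\sum c_j^{(H)}/\zeta^j)\sigma_3}$. As stated, that map alone cannot simultaneously normalize the FIK problem at $\infty$ and recover $\mathbf R\to\mathbf I$: if $\alpha\ne 0$ then $\widetilde{\mathbf R}D$ tends to $\zeta^{r\sigma_3}e^{\alpha\sigma_3}$, not to $(\mathbf I+O(\zeta^{-1}))\zeta^{r\sigma_3}$; if $\alpha=0$ you lose the freedom you need to match the $\ell_H$-dependent factors appearing in the jump of \eqref{R-RHP}. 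The formula \eqref{super-R} resolves this by also inserting the constant \emph{left} factor $\exp(-\tfrac{r}{2}\ell_H\,\boldsymbol{\Lambda}_{13})$, which is exactly the inverse of the constant part of the right factor, so the two cancel as $\zeta\to\infty$, giving $\mathbf R\to\mathbf I$, while leaving the jump untouched (a constant left factor commutes with the jump relation). Equivalently: write the right diagonal as $D=D_0 D_1(\zeta)$ with $D_0=e^{\tfrac{r}{2}\ell_H\boldsymbol{\Lambda}_{13}}$ constant and $D_1\to\mathbf I$ at $\infty$; then $\mathbf R=D_0^{-1}\mathbf H_{13}D_0\cdot D_1$ is a constant conjugation followed by a right-multiplication, and it is the conjugation, not a right-multiplication alone, that balances the Robin constant. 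Adding this one sentence would close the gap; otherwise the proposal is sound and the bookkeeping you defer (Plemelj giving the Gaussian weight $e^{-\tfrac r2(\zeta-\zeta_0)^2}$, translation invariance identifying the monic OPs and the norms $k^{(r)}_j$, cancellation of the $\zeta^{-r}e^{-r\sum c_j^{(H)}/\zeta^j}$ singularity at the origin against the prescribed behavior in \eqref{R-RHP}, $\det\mathbf R\equiv 1$ for the Liouville step) is indeed routine.
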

The proof is a direct manipulation and it is left to the reader.\\

Now  the well known asymptotics of Hermite polynomials can be written as a {\bf joint} asymptotic expansion for $\zeta $ {\bf and} $r$ large as follows
\bea
{\bf H}_{13}(\zeta)=e^{\frac{r}{2}\ell_H {\bf\Lambda}_{13}} \left({\bf I}+{\cal O}\left(\frac{1}{r(|\zeta|+1)}\right)\right)
\mathbf U^{-1}
\left(\begin{array}{ccc} 
 \sqrt[4]{\frac {\zeta-\zeta_0 -2}  {\zeta-\zeta_0 +2}} &0&0  \\0&1&0\\
0&0&\sqrt[4]{\frac {\zeta-\zeta_0 +2}  {\zeta-\zeta_0  -2}}   \end{array}\right)
\mathbf U
 e^{r g_H(\zeta-\zeta_0){\bf\Lambda}_{13}-\frac{r}{2}\ell_H{\bf\Lambda}_{13}}\ ,
 \label{Hasympt}
  \\
\mathbf U:= \frac 1 {\sqrt{2}} \begin{bmatrix}
1 & 0 & i\\
0 & 1 & 0\\
-1 & 0 & i
\end{bmatrix}
\eea
and  $g_H(\zeta)$  is given by \eqref{super-gH-def}.
For large $\zeta$, we have the expansion of $g_H$ as in \eqref{super-cH-def}.
The error term of ${\cal O}(1/r(|\zeta|+1))$ in \eqref{Hasympt} 
is from the standard Airy parametrix used in the Riemann-Hilbert problem 
for Hermite orthogonal polynomials at 
$\zeta=\pm 2$.  For more details on this calculation see equation (7.72) in
\cite{Deift:1998-book} or equation (4.16) and Appendix B in \cite{Deift:1999b},
noting that the variable $\zeta$ is rescaled by a constant factor.
Then for large $\zeta$ (such as on $\partial\mathbb{D}_{a^{\star}}$) we can estimate
\eq
\begin{split}
\label{super-R-error}
{\bf R}(\zeta) & = \left({\bf I}+{\cal O}\left(\frac{1}{r\zeta}\right)\right) \left({\bf I}+\mathcal{O}\left(\frac{1}{\zeta}\right) \right) \exp\left[r \left( g_H(\zeta-\zeta_0) -\log\zeta - \sum_{j=1}^K \frac{c_j^{(H)}}{\zeta^{j}}  \right) {\bf\Lambda}_{13} \right]\\
  & = {\bf I} + \mathcal{O}\left(\frac{1}{\zeta}\right) + \mathcal{O}\left(\frac{r}{\zeta^{K+1}}\right).
\end{split}
\endeq

\subsection{Error analysis in the supercritical case}
\label{super-error}

Let $\mathbb{D}_\alpha$ and $\mathbb{D}_\beta$ be 
small, fixed, closed disks centered at $\alpha$ and $\beta$ 
that are bounded away from the outer lenses.  
Orient the boundaries $\partial\mathbb{D}_\alpha$ and 
$\partial\mathbb{D}_{\beta}$ 
clockwise.  For $z\in\mathbb{D}_\alpha$, 
let ${\bf P}_{\mbox{Ai}}^{(\alpha)}(z)$ be the {\it Airy parametrix} 
satisfying 
\begin{itemize}
\item ${\bf P}_{\mbox{Ai}}^{(\alpha)}(z)$ has the same jumps as 
${\bf W}(z)$ for $z\in\mathbb{D}_\alpha$,
\item $ \ds {\bf P}_{\mbox{Ai}}^{(\alpha)}(z){\bf \Psi}(z)^{-1} = {\bf I} + \mathcal{O}\left(\frac{1}{n}\right)$ for $z\in\mathbb{D}_\alpha$.
\end{itemize}
The construction of the Airy parametrix is standard, involving 
Airy functions and a local change of variables.  See \cite{Bleher:2004b} 
Section 7, for example, for an Airy parametrix for a $3\times3$ Riemann-Hilbert
problem.  The Airy parametrix ${\bf P}_{\mbox{Ai}}^{(\beta)}(z)$ is 
defined analogously for $z\in\mathbb{D}_\beta$.

We now define the global parametrix ${\bf \Psi}^\infty(z)$ by
\eq
{\bf \Psi}^\infty(z):=
\begin{cases}
{\bf \Psi}(z), & z\notin \mathbb{D}_\alpha \cup \mathbb{D}_\beta \cup \mathbb{D}_{a^{\star}}, \\
{\bf \Psi}(z) {\bf R}(\zeta(z)), & z\in\mathbb{D}_{a^{\star}}, \\
{\bf P}_{\mbox{Ai}}^{(\alpha)}(z), & z\in\mathbb{D}_\alpha \\
{\bf P}_{\mbox{Ai}}^{(\beta)}(z), & z\in\mathbb{D}_\beta.
\end{cases}
\endeq
The \emph{error matrix} ${\bf E}(z)$ is given by
\eq
\label{super-error-matrix}
{\bf E}(z) := {\bf W}(z){\bf \Psi}^\infty(z)^{-1}.
\endeq
Let $\Gamma$ denote the contours given by the boundaries of the regions $\Omega_j$ in Figure \ref{fig_biglensSuper}.
The error matrix satisfies a Riemann-Hilbert problem with the following 
jumps:

$\bullet$ For $z$ outside the disks $\mathbb{D}_\alpha$, $\mathbb{D}_\beta$, 
and $\mathbb{D}_{a^{\star}}$, and excluding the band $[\alpha,\beta]$:
\eq
\label{super-VE-outside}
{\bf V^{(E)}}(z) =
{\bf \Psi}(z) {\bf V^{(W)}}(z) {\bf \Psi}(z)^{-1}, \quad z\in \Gamma\cap\left( \mathbb{D}_{\alpha} \cup \mathbb{D}_{\beta} \cup \mathbb{D}_{a^{\star}} \right)^c\cap[\alpha,\beta]^c,
\endeq
where ${\bf V^{(W)}}(z)$ is given by (\ref{super-Wjumps}).

$\bullet$ For $z$ on the boundaries of the disks $\partial\mathbb{D}_\alpha$, 
$\partial\mathbb{D}_\beta$, and $\partial\mathbb{D}_{a^{\star}}$:
\eq
\label{super-VE-boundaries}
{\bf V^{(E)}}(z) =
\begin{cases}
{\bf \Psi}(z){\bf R}(\zeta) {\bf \Psi}(z)^{-1}, & z\in\partial\mathbb{D}_{a^{\star}}, \\
{\bf P}_{\mbox{Ai}}^{(\alpha)}(z) {\bf \Psi}(z)^{-1}, & z\in\partial\mathbb{D}_\alpha, \\
{\bf P}_{\mbox{Ai}}^{(\beta)}(z) {\bf \Psi}(z)^{-1}, & z\in\partial\mathbb{D}_\beta.
\end{cases}
\endeq

$\bullet$ For $z$ inside the disk $\mathbb{D}_{a^{\star}}$:
\eq
\label{super-VE-a*}
{\bf V^{(E)}}(z) = {\bf \Psi}(z) {\bf R}_-(\zeta)
\begin{pmatrix}
1 & e^{n\mathcal{P}_1} & 0 \\
0 & 1 & 0 \\
0 & 0 & 1 \end{pmatrix} {\bf R}_-(\zeta)^{-1} {\bf \Psi}(z)^{-1},
\;\; z \in \Gamma\cap\mathbb{D}_{a^{\star}}.
\endeq

$\bullet$ Furthermore, ${\bf V^{(E)}}(z)={\bf I}$ on the contours 
$$[\alpha,\beta]\cap(\mathbb{D}_\alpha\cup\mathbb{D}_\beta)^c, \quad \Gamma\cap\mathbb{D}_\alpha, \quad \text{and} \quad \Gamma\cap\mathbb{D}_\beta.$$ 
The jump contours $\Gamma^{\bf(E)}$ for ${\bf E}(z)$ are shown in Figure 
\ref{super-error-fig}.

\begin{figure}
\setlength{\unitlength}{2.7pt}
\begin{center}
\begin{picture}(100,50)(-50,-25)
\put(-10,0){\circle{6}}
\put(-12.2,-1){$\mathbb{D}_\alpha$}
\put(10,0){\circle{6}}
\put(7.9,-1){$\mathbb{D}_\beta$}
\put(28,0){\circle{6}}
\put(20,-6){$\mathbb{D}_{a^{\star}}$}
\put(23,-4){\line(2,1){5}}
\put(-50,0){\line(1,0){37}}
\put(13,0){\line(1,0){30}}
\qbezier(-9,3)(0,13),(9,3)
\qbezier(-9,-3)(0,-13),(9,-3)
\qbezier(15,0)(15,25)(-10,25)
\qbezier(15,0)(15,-25)(-10,-25)
\qbezier(-35,0)(-35,25)(-10,25)
\qbezier(-35,0)(-35,-25)(-10,-25)
\thicklines
\put(-42,0){\vector(1,0){1}}
\put(-22,0){\vector(1,0){1}}
\put(20,0){\vector(1,0){1}}
\put(28,0){\vector(1,0){1}}
\put(37,0){\vector(1,0){1}}
\put(-10,-25){\vector(1,0){1}}
\put(-10,25){\vector(-1,0){1}}
\put(0,8){\vector(1,0){1}}
\put(0,-8){\vector(1,0){1}}
\put(-10,2.8){\vector(1,0){1}}
\put(10.2,2.8){\vector(1,0){1}}
\put(28,2.8){\vector(1,0){1}}
\end{picture}
\end{center}
\caption{The jump contours $\Gamma^{\bf(E)}$ for the Riemann-Hilbert problem for ${\bf E}(z)$ in the supercritical case.\label{super-error-fig}}
\end{figure}

We now show that all of the jump matrices in 
\eqref{super-VE-outside}-\eqref{super-VE-a*} 
are uniformly close to the identity as $n\to\infty$.  

For the error bounds it will be convenient to split $\Gamma^{\bf(E)}$ 
into a compact component $\Gamma_C^{\bf(E)}$ and a 
noncompact component $\Gamma_N^{\bf(E)}$:
\eq
\begin{split}
\Gamma_C^{\bf(E)}:=&\partial\mathbb{D}_\alpha\cup\partial\mathbb{D}_\beta\cup\partial\mathbb{D}_{a^{\star}}\cup(\Gamma\cap\mathbb{D}_{a^{\star}}),\\
\Gamma_N^{\bf(E)}:=&\Gamma^{\bf(E)}\backslash\Gamma_C^{\bf(E)}.
\end{split}
\endeq
We now gather the results we will need on the functions $P_1(z)$, $P_2(z)$, 
and $P_3(z)$ defined by \eqref{P1-nolog}--\eqref{P3-nolog}.
\begin{lemma}
\label{P-nolog-lemma}
In the supercritical regime, the inner and outer lenses can be chosen so that:
\begin{itemize}
\item[(a)] On the inner lenses outside of the disks around $\alpha$ and 
$\beta$:  The real part of $P_1(z)$ is positive and bounded away from zero 
for 
$z\in[(\partial\Omega_2\cap\partial\Omega_3)\cup(\partial\Omega_4\cap\partial\Omega_5)]\cap(\mathbb{D}_\alpha\cup\mathbb{D}_\beta)^c$.
\item[(b)] On the real axis outside of $[\alpha,\beta]$ and the disks around 
$\alpha$ and $\beta$:  The real part of $P_1(z)$ is negative and bounded away 
from zero for 
$z\in[(\partial\Omega_1\cap\partial\Omega_6)\cup(\partial\Omega_2\cap\partial\Omega_5)]\cap(\mathbb{D}_\alpha\cup\mathbb{D}_\beta)^c$.
\item[(c)] On the real axis outside of the band $[\alpha,\beta]$ and a fixed 
distance away from $\alpha$, $\beta$, and $a^{\star}$:  The real part of $P_2(z)$ 
is negative and bounded away from zero 
for $z\in[(\partial\Omega_2\cap\partial\Omega_5)\cap(\mathbb{D}_\alpha\cup\mathbb{D}_\beta\cup\mathbb{D}_{a^{\star}})^c]\cup(\partial\Omega_1\cap\partial\Omega_6)$.
\item[(d)] On the outer lenses:  For $\kappa$ sufficiently small, the real 
part of $P_3(z)$ is negative and bounded away from zero for 
$z\in(\partial\Omega_1\cap\partial\Omega_2)\cup(\partial\Omega_5\cap\partial\Omega_6).$
\end{itemize}
\end{lemma}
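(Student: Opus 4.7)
The plan is to address the four assertions separately, leveraging single-band equilibrium theory for (a) and (b), the supercritical definition together with Lemma \ref{lemmaP3} for (c), and a harmonic-extension argument for (d).

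For (b), $P_1(x) = -V(x)+2g(x)+l_1$ is real on $\mathbb{R}\setminus(\alpha,\beta)$. The equilibrium variational inequality gives $P_1\leq 0$ there with equality only on $[\alpha,\beta]$; assumption (vi) (regularity) rules out tangency at the endpoints, so $P_1$ is strictly negative just off the support, and assumption (iv) yields $P_1\to-\infty$ at $\pm\infty$. Together these produce a uniform negative upper bound on $\mathbb{R}\setminus([\alpha,\beta]\cup\mathbb{D}_\alpha\cup\mathbb{D}_\beta)$. For (a), the Plemelj formula yields $P_{1,\pm}'(x) = \mp 2\pi i\,\rho_{\min}(x)$ on $(\alpha,\beta)$, so a first-order expansion from above gives
\begin{equation*}
\Re P_1(x+iy) = 2\pi\rho_{\min}(x)\,y + O(y^2), \quad y\to 0^+,
\end{equation*}
which is strictly positive on compact subsets of the band interior. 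Schwarz reflection covers $y<0$, and choosing the inner lens in a sufficiently thin tubular neighborhood of $(\alpha,\beta)$, disjoint from $\mathbb{D}_\alpha\cup\mathbb{D}_\beta$, secures $\Re P_1 > \delta > 0$ uniformly.

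For (c), I split $\mathbb{R}\setminus([\alpha,\beta]\cup\mathbb{D}_\alpha\cup\mathbb{D}_\beta\cup\mathbb{D}_{a^\star})$ into (i) $\{x>\max\{\beta,b^\star\}\}\setminus\mathbb{D}_{a^\star}$, (ii) the intermediate piece $[\beta,b^\star]$ (empty unless $0<a<a_c$), and (iii) $(-\infty,\alpha]$. On (i), Definition \ref{supercritical} and the normalization $P_2(a^\star)=0$ from Definition \ref{defell2} give $P_2<0$ with a uniform upper bound, reinforced by assumption (iv) at $+\infty$. On (ii) and (iii), the decomposition $P_2 = P_1 + P_3$ combined with part (b) ($P_1\leq -\delta$) and Lemma \ref{lemmaP3} (monotonicity of $\Re P_3$ on $(-\infty,\alpha]$ and on $[\beta,\infty)$) closes the argument: in the third bullet of Definition \ref{supercritical}, $\Re P_3(b^\star)<0$ propagates along $[\beta,b^\star]$ by monotonicity, and on $(-\infty,\alpha]$ the monotonicity of $\Re P_3$ together with the growth of $V$ at $-\infty$ handles the far-left region.

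For (d), the outer lens must be placed where $\Re P_3<0$. At the real-axis intersections of $\partial\Omega_1\cap\partial\Omega_2$ and $\partial\Omega_5\cap\partial\Omega_6$, the required negativity follows from Lemma \ref{lemmaP3} and the $l_2$-normalization of Definition \ref{defell2}; off the real axis, $\Re P_3$ is harmonic on each half-plane, so I can shape the outer lens as a thin tubular neighborhood of those real-axis segments on which $\Re P_3\leq -\delta$. The smallness of $\kappa$ is invoked when extending compatibility to the deformed effective potential $\mathcal{P}_3$ used in the parametrix construction: since the relevant corrections are controlled by $\eta=\mathcal{O}(\kappa\log\kappa)$, shrinking $\kappa$ preserves the sign structure on the same lens. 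The main obstacle I expect is precisely (d), together with the intermediate piece of (c): these demand a careful case analysis through the three bullets of Definition \ref{supercritical}, and the outer lens must be compatible both with the undeformed $P_3$ on the real axis and (eventually) with the perturbed $\mathcal{P}_3$ in the half-planes. My plan is to first pin down the real-axis portion of the lens via Lemma \ref{lemmaP3} and the $l_2$-normalization, then thicken into the upper and lower half-planes using harmonicity.
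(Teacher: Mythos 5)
Parts (a) and (b) are fine and match the paper, which simply cites the standard single-band equilibrium analysis for these two facts. In (c), however, your ``propagation by monotonicity'' on $[\beta,b^\star]$ runs the wrong way: Lemma~\ref{lemmaP3} says $\Re P_3$ \emph{decreases} on $[\beta,b^\star]$ to its minimum at $b^\star$, so $\Re P_3(b^\star)<0$ bounds $\Re P_3$ from \emph{below} on that interval, not from above, and pairing it with $P_1\le -\delta$ does not force $\Re P_2<0$ there. The actual argument requires no decomposition at all: by Definition~\ref{defell2}, $l_2$ is fixed so that the unique global maximum of $P_2$ on $(\beta,\infty)$ equals zero and is attained at $a^\star$, hence $P_2<0$ on $(\beta,\infty)\setminus\{a^\star\}$ immediately, and compactness together with assumption~(iv) give the uniform negative bound away from $a^\star$.

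The serious gap is in (d), which is where essentially all the work of this lemma resides. Your plan to ``shape the outer lens as a thin tubular neighborhood of those real-axis segments on which $\Re P_3\le-\delta$'' misreads the geometry: the arcs $\partial\Omega_1\cap\partial\Omega_2$ and $\partial\Omega_5\cap\partial\Omega_6$ must connect a point to the left of $\alpha$ with a point to the right of $\beta$ while enclosing the band, and the only real-axis path between those endpoints passes through $[\alpha,\beta]$, where $\Re P_3=\Re P_2$ is not controlled and need not be negative for general $V$. Harmonicity in a thin strip near a negative real segment therefore cannot produce the contour you need; you must control $\Re P_3$ along an arc that travels a macroscopic distance over the band. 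This is precisely the content of the paper's proof: one takes the outer lens to be a \emph{circle centered below $\alpha$} whose right intersection with the real axis lies in $(\beta,a^\star)$ and close enough to $\beta$ that $\Re P_3<0$ there (possible because $P_3(\beta)=P_2(\beta)<P_2(a^\star)=0$), and then verifies that this circle is a \emph{descent line} of $\Re P_3$: along either arc, $\Re(az)$ decreases as $z$ moves leftward, while $\Re g(z)=\int_\alpha^\beta\log|z-s|\,\rho_{\mathrm{min}}(s)\,ds$ increases because, for a circle centered below $\alpha$, the distance $|z-s|$ to each $s\in(\alpha,\beta)$ grows as $z$ moves away from the right endpoint. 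Thus $\Re P_3=\Re(az)-\Re g(z)-l_1+l_2$ decreases monotonically from its negative value at the right endpoint and stays $\le-\delta<0$ on the whole lens. Without this descent-line construction, (d) is not established. Finally, the ``$\kappa$ small'' clause is not about $\eta$ or the deformed $\mathcal P_3$ (those belong to Lemma~\ref{super-Pmathcal-lemma}); it enters because the lens is anchored at the perturbed endpoints $\widetilde\alpha(\kappa),\widetilde\beta(\kappa)=\alpha+\mathcal O(\kappa),\beta+\mathcal O(\kappa)$, so the $\kappa=0$ construction persists by continuity.
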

\begin{proof}
Statements (a) and (b) follow from the analysis of the Riemann-Hilbert 
problem for the standard (not multiple) orthogonal polynomials 
(see, for instance, \cite{Deift:1998-book}).  Statement (c) follows from the 
definitions of the supercritical region and $a^{\star}$.  

For (d), we begin 
by choosing the outer lenses used to define ${\bf W}(z)$.  Fix 
$\kappa=0$.  Note that 
\eq
P_3(\beta) = -P_1(\beta)+P_2(\beta) = P_2(\beta) < P_2(a^{\star}) = 0.
\endeq
The second equality uses the fact that $ {\Re}\, P_1$ is zero on 
$[\alpha,\beta ]$  and $P_1$ is real for $x>\beta$; 
the inequality follows since $a^\star$ is the location of the global maximum 
of $P_2(z)$;  and the final equality is true by the choice of the Lagrange 
multiplier $l_2$.  Thus, there is a fixed radius neighborhood of $\beta$ 
in which ${\rm Re}P_3(z)<0$ for real $z$.  We choose the outer lenses to 
be a circle centered below $\alpha$ whose right-most endpoint passes through 
the real axis at some point on $(\beta,a^\star)$.  We choose the circle big enough such that $\Re P_2$ is negative on the real axis to the left of the circle.   This is always possible due to Assumption \ref{assumptionAV} (iv). 

We now show that the outer lenses are descent lines of 
${\rm Re}\,P_3(z)$.  Clearly the real part of $az$ decreases as we move to 
the left along the lenses.  Note that 
${\rm Re}\,g(z) = \int_\alpha^\beta\log|z-s|\rho_{\text{min}}(s)ds$ where 
$\rho_{\text{min}}(s)$ is the associated equilibrium measure.  Now
for any $s\in(\alpha,\beta)$, $\log|z-s|$ is increasing as $z$ moves to the 
left along the lens (one can see this clearly by drawing a circle that is 
centered at $s$ and is tangent to the outer lenses at the right-most point).  
Since $\rho_{\text{min}}(s)$ is positive for 
$s\in(\alpha,\beta)$, ${\rm Re}\,g(z)$ increases as $z$ moves to the left 
along the lenses.  This shows (d) for $\kappa=0$.  Since 
$\widetilde \alpha(\kappa)=\alpha+\mathcal{O}(\kappa)$ and 
$\widetilde \beta=\beta+\mathcal{O}(\kappa)$, (d) also holds for 
$\kappa$ sufficiently small.
\end{proof}

We now present the results we will need for the modified functions 
$\mathcal{P}_1(z;\kappa)$, $\mathcal{P}_2(z;\kappa)$, and 
$\mathcal{P}_3(z;\kappa)$ defined by 
\eqref{super-mathcalP1}--\eqref{super-mathcalP3}.  
\begin{lemma}
\label{super-Pmathcal-lemma}
For $\kappa$ sufficiently small (refer to Figure \ref{fig_biglensSuper} and Figure \ref{super-error-fig}): 
\begin{itemize}
\item[(a)] On the inner lenses outside of 
the disks around $\alpha$ and $\beta$:  The real part of 
${\mathcal P}_1(z;\kappa)$ 
is positive and bounded away from zero for
$z\in[(\partial\Omega_2\cap\partial\Omega_3)\cup(\partial\Omega_4\cap\partial\Omega_5)]\cap(\mathbb{D}_\alpha\cup\mathbb{D}_\beta)^c$.
\item[(b)] On the real axis outside of $[\alpha,\beta]$ and a fixed distance 
away from $\alpha$, $\beta$, and $a^{\star}$:  The real part of 
${\mathcal P}_1(z;\kappa)$ is negative and bounded away from zero for
$z\in[(\partial\Omega_1\cap\partial\Omega_6)\cup(\partial\Omega_2\cap\partial\Omega_5)]\cap(\mathbb{D}_\alpha\cup\mathbb{D}_\beta\cup\mathbb{D}_{a^{\star}})^c$.
\item[(c)] On the real axis outside of the outer 
lenses and a fixed distance away from $a^{\star}$:  The real part of 
$\mathcal{P}_2(z;\kappa)$ is negative and bounded 
away from zero for 
$z\in(\partial\Omega_1\cap\partial\Omega_6)\cap(\mathbb{D}_{a^{\star}})^c$.
\item[(d)] On the outer lenses:  The real part of 
$\mathcal{P}_3(z;\kappa)$ is negative and bounded away from zero for 
$z\in(\partial\Omega_1\cap\partial\Omega_2)\cup(\partial\Omega_5\cap\partial\Omega_6).$
\item[(e)] For real $z$ inside $\mathbb{D}_{a^{\star}}$:  Let $g_H(\zeta)$ be 
defined by \eqref{super-gH-def}.  Then the real part of 
$$\mathcal{P}_1(z;\kappa) +\kappa g_H(\zeta) - \kappa\log\zeta - \kappa\sum_{j=1}^K \frac{c_j^{(H)}}{\zeta^{j}}$$
is negative and bounded away from zero for 
$z\in\Gamma\cap\mathbb{D}_{a^{\star}}$.
\end{itemize}
\end{lemma}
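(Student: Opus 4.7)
The plan is to deduce Lemma \ref{super-Pmathcal-lemma} from Lemma \ref{P-nolog-lemma} by a perturbation/continuity argument based on Proposition \ref{propdeform} (which gives analyticity of $\widetilde\alpha,\widetilde\beta,\mathfrak g$ and their convergence to $\alpha,\beta,g$ as $(\kappa,\vec\delta)\to(0,\vec 0)$) together with Proposition \ref{propdeltas} (which fixes $\vec\delta=\mathcal O(\kappa^{3/2})$ as a Puiseux series in $\sqrt\kappa$). Setting $\kappa=0$, all deformation terms in \eqref{super-mathcalP1}--\eqref{super-mathcalP3} vanish and $\mathcal P_j(z;0)=P_j(z)$, so Lemma \ref{P-nolog-lemma} gives all the needed inequalities at the unperturbed level; the task is to propagate them to small $\kappa$.

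For parts (a)--(d), the contour pieces in question are subsets of $\Gamma^{(E)}$ that stay a fixed distance away from $[\alpha,\beta]$ and (by construction) away from $a^\star$. On any such compact subset, Proposition \ref{propdeform} gives uniform convergence $\mathfrak g(z;\kappa)\to g(z)$ and hence $\mathcal P_j(z;\kappa)\to P_j(z)$ as $\kappa\to 0$, because the extra perturbation $\delta V(z)=\kappa\log(z-a^\star)+\kappa\sum_j\delta_j/(2(z-a^\star)^j)$ is uniformly $\mathcal O(\kappa)$ on such sets (here we use $\vec\delta=\mathcal O(\kappa^{3/2})$). Since the strict inequalities in Lemma \ref{P-nolog-lemma}(a)--(d) are bounded away from zero on each such compact piece, they persist under a uniformly $\mathcal O(\kappa)$ perturbation for $\kappa$ small. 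The non-compact parts of the real axis (e.g.\ the $\partial\Omega_1\cap\partial\Omega_6$ piece going to $\pm\infty$) require a separate at-infinity argument: the growth hypothesis (iv) of Assumption \ref{assumptionAV} ensures that $\Re\mathcal P_1$ and $\Re\mathcal P_2$ tend to $-\infty$ at the same rate as $\Re P_1$ and $\Re P_2$ (the deformation $\delta V$ decays at infinity), so the bounded-away-from-zero bounds extend globally by combining behavior at infinity with the compact-set estimate.

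For part (e), the contour $\Gamma\cap\mathbb D_{a^\star}$ is a piece of the real axis through $a^\star$ (the outer lenses have been arranged to close off to the left of $a^\star$). I would rewrite the quantity of interest as
\begin{equation}
\Re\mathcal P_1(z;\kappa)+\kappa\Re\bigl[g_H(\zeta)-\log\zeta-\sum_{j=1}^K c_j^{(H)}/\zeta^j\bigr]
=\Re P_1(z)+\Re E_1(z;\kappa)+\kappa\,\Re E_2(\zeta),
\end{equation}
where $E_1(z;\kappa):=\mathcal P_1(z;\kappa)-P_1(z)$ and $E_2(\zeta):=g_H(\zeta)-\log\zeta-\sum_{j}c_j^{(H)}/\zeta^j$. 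On real $z\in\mathbb D_{a^\star}\setminus\{a^\star\}$, Lemma \ref{P-nolog-lemma}(b) gives $\Re P_1(z)<-\eta_0$ for some fixed $\eta_0>0$. The heart of the argument is then to show that the two error terms are $o(1)$ uniformly on $\Gamma\cap\mathbb D_{a^\star}$ for $\kappa$ small. The smooth part of $E_1$, coming from $2(1-\kappa)\mathfrak g-2g$ and the shift in $\ell_1$, is $\mathcal O(\kappa)$ by Proposition \ref{propdeform}; the singular part of $E_1$ is $\kappa\log(z-a^\star)+\kappa\sum\delta_j/(2(z-a^\star)^j)$; and the $\delta_j/(z-a^\star)^j$ singularities combine with the $c_j^{(H)}/\zeta^j$ pieces of $\kappa E_2$ via precisely the cancellation engineered by Theorem \ref{confchange} and Proposition \ref{propdeltas}. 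What remains after this cancellation is the combination $\kappa[\log(z-a^\star)-\log\zeta]+\kappa(\text{smooth})$; using $\zeta=C(z-a^\star)/\sqrt\kappa+\mathcal O((z-a^\star)^2/\sqrt\kappa)$ one gets $\kappa\log(z-a^\star)-\kappa\log\zeta=\tfrac{\kappa}{2}\log\kappa-\kappa\log C+o(1)$, which is $\mathcal O(\kappa|\log\kappa|)$ and so negligible compared with $\eta_0$.

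The main obstacle is precisely this part (e): one must control the expression uniformly on both ends of $\Gamma\cap\mathbb D_{a^\star}$, i.e.\ both near $z=a^\star$ (where $\zeta\to 0$ and $\log\zeta$ blows up) and near $\partial\mathbb D_{a^\star}$ (where $\zeta$ is as large as $\mathcal O(n^{(1-\gamma)/2})$ by \eqref{super-zeta-order} and the $c_j^{(H)}/\zeta^j$ terms matter). The near-$a^\star$ regime is handled by the cancellation of logarithmic singularities just described; the large-$\zeta$ regime is handled by the defining property \eqref{super-cH-def} of the $c_j^{(H)}$, which makes $E_2(\zeta)=\mathcal O(\zeta^{-K-1})$ so that $\kappa E_2=\mathcal O(\kappa/\zeta^{K+1})=o(1)$ uniformly. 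Sandwiching these two regimes with a standard compactness argument for intermediate $\zeta$ completes the proof of (e).
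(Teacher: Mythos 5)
Your argument for parts (a)--(d) is the same as the paper's: perturb the inequalities of Lemma \ref{P-nolog-lemma} by combining the uniform convergence $\mathfrak g\to g$ from Proposition \ref{propdeform} with the boundedness of $\delta V$ on compact sets away from $a^\star$ (using $\vec\delta\to 0$ from Proposition \ref{propdeltas}); your remark about the behavior at infinity is a useful addition the paper leaves implicit.

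For part (e), your decomposition $\Re P_1 + \Re E_1 + \kappa\Re E_2$ and your appeal to the cancellation engineered by Theorem \ref{confchange} track the paper's use of identity \eqref{rearranging-P2} closely, but there is a genuine compression where you assert the post-cancellation residual is $O(\kappa|\log\kappa|)$. What \eqref{rearranging-P2} actually delivers is that the collection of singular pieces collapses to
\begin{equation}
\frac{1}{2}f(z;\kappa) - \frac{\mathfrak b}{2} - \kappa\ln\sqrt\kappa - \frac{\kappa}{4}(\zeta-\zeta_0)^2,
\end{equation}
in which $\frac{1}{2}f(z;\kappa)$ and $\frac{\kappa}{4}(\zeta-\zeta_0)^2=\frac{1}{4}(\rho-\mathfrak a)^2$ are both $O(1)$ on $\partial\mathbb D_{a^\star}$. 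The smallness you claim requires the cancellation $\frac{1}{2}f-\frac{1}{4}(\rho-\mathfrak a)^2=O(\kappa)$, which follows from the initial condition $\rho(z;0,\vec 0)=\sqrt{2f(z;0,\vec 0)}$ and the analytic $\kappa$-dependence established in Theorem \ref{rhoprop}, but you never state this. Your version also never cites condition \eqref{condition-on-Da*} (the choice of disk radius making $|\Re f/2|<|\Re P_1|$), which is how the paper packages the same control: the paper bounds the real part of the displayed expression by $|\Re P_1|$ and then uses $\Re P_1<0$, a weaker statement than your $O(\kappa|\log\kappa|)$ but sufficient. Either route is fine, but as written your proposal asserts rather than derives the crucial smallness; spelling out $\frac{1}{2}f-\frac{1}{4}(\rho-\mathfrak a)^2=O(\kappa)$ would close the gap and in fact gives a cleaner argument than the paper's, since it makes \eqref{condition-on-Da*} unnecessary.
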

\begin{proof}
Parts (a) through (d) follow from Lemma \ref{P-nolog-lemma}(a)--(d) 
together 
with the convergence $\mathfrak g \to g$ guaranteed in Proposition \ref{propdeform},
 the boundedness 
of $\log(z-a^{\star})$ and $(z-a^{\star})^{-j}$, $j=1,\dots,K $ outside of 
$\mathbb{D}_{a^{\star}}$, 
and $\delta_j(0)=0$.  

For part (e), first note that comparing the two expressions 
\eqref{super-mathcalP2} and \eqref{super-P2-zeta} for 
$\mathcal{P}_2(z;\kappa)$ gives 
\eq
\label{rearranging-P2}
\kappa\log(z-a^{\star}) - \kappa\log\zeta + \sum_{j=1}^{K }\frac{\delta_j}{2(z-a^{\star})^j} - \kappa \sum_{j=1}^{K } \frac{c_j^{(H)}}{\zeta^{j}} =
\frac 1 2 f(z;\kappa) -\frac 12 \mathfrak b 
-\kappa\ln \sqrt\kappa
-\frac{\kappa}{4}(\zeta-\zeta_0)^2 .
\endeq
By the choice of $\mathbb{D}_{a^{\star}}$ (see \eqref{condition-on-Da*}), for $\kappa$ 
sufficiently small we have 
\eq
\left|\Re\left[
\frac 1 2 f(z;\kappa) - \frac {\mathfrak b}2
 - \kappa\ln \sqrt\kappa
  -\frac{\kappa}{4}(\zeta-\zeta_0)^2\right]\right| < |\Re \, P_1(z)| \text{ for } z\in\mathbb{D}_{a^{\star}}.
\endeq
Write 
\eq
\begin{split}
& \hspace{-.5in} \mathcal{P}_1(z;\kappa) +\kappa g_H(\zeta) - \kappa\log\zeta - \kappa\sum_{j=1}^k\frac{c_j^{(H)}}{\zeta^{2j}} \\
& = \underbrace{-V + 2(1-\kappa)\mathfrak{g} + \ell_1 + \kappa g_H}_{=P_1 + \mathcal{O}(\kappa)} + \underbrace{\kappa\log(z-a^{\star}) - \kappa\log\zeta + \sum_{j=1}^{
K 
}\frac{\delta_j}{2(z-a^{\star})^j} - \kappa \sum_{j=1}^{
K
}\frac{c_j^{(H)}}{\zeta^{j
}}}_{\text{Has real part bounded above by }|\Re\, P_1|}.
\end{split}
\endeq
For $\kappa$ small, by
the convergence $\mathfrak g\to g$ in Proposition \ref{propdeform}, 
 the first group of terms on the 
left-hand side, 
$-V + 2(1-\kappa)\mathfrak{g} + \ell_1 + \kappa g_H$, is within $\mathcal{O}(\kappa)$ 
of $P_1$, which has strictly negative real part for $z\in\mathbb{D}_{a^{\star}}$.  Since 
the real part of the second group of terms is strictly less than the magnitude of 
the real part of $P_1$, part (e) follows.
\end{proof}
We are now in a position to bound the jumps ${\bf V^{(E)}}(z)$ of the 
error problem.
\begin{lemma}
\label{super-VE-lemma}
In the supercritical regime, for large $n$,
\begin{itemize}
\item[(a)]  Outside the disks $\mathbb{D}_\alpha$, $\mathbb{D}_\beta$, and 
$\mathbb{D}_{a^{\star}}$:  There is a constant $c>0$ such that 
$${\bf V^{(E)}}(z;\kappa) = {\bf I} + \mathcal{O}(e^{-cn}), \quad z\in\Gamma_N^{\bf(E)}.$$
\item[(b)]  On the boundary of $\mathbb{D}_{a^{\star}}$:  
$$
{\bf V^{(E)}}(z;\kappa) = {\bf I} + \mathcal{O}\left(\ds n^{-(1-\gamma)/2}\right) + \mathcal{O}\left(\ds n^{
\gamma - \frac {1-\gamma}2 (K+1)} \right), \quad  z\in\partial\mathbb{D}_{a^{\star}}.
$$
\item[(c)]  On the boundaries of $\mathbb{D}_\alpha$ and $\mathbb{D}_\beta$:  
$${\bf V^{(E)}}(z;\kappa) = {\bf I} + \mathcal{O}\left(n^{-1}\right), \quad z\in\partial\mathbb{D}_\alpha\cup\partial\mathbb{D}_\beta.$$
\item[(d)]  Inside $\mathbb{D}_{a^{\star}}$:  
$${\bf V^{(E)}}(z) = {\bf I} + \mathcal{O}(e^{-cn}), \quad z\in\Gamma\cap\mathbb{D}_{a^{\star}}.$$
\end{itemize}
\end{lemma}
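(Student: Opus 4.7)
The plan is to treat parts (a)--(d) in turn, in each case extracting the relevant off-diagonal entries of $\mathbf{V}^{(W)}(z)$ from \eqref{super-Wjumps} and invoking either Lemma \ref{super-Pmathcal-lemma} on the effective potentials $\mathcal{P}_j$ or the parametrix asymptotic \eqref{super-R-error}. Throughout, $\mathbf{\Psi}(z)^{\pm 1}$ is uniformly bounded on the relevant contours (which stay a fixed distance from $\alpha,\beta$), so conjugation by $\mathbf{\Psi}$ never affects the decay rate.

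For part (a), I would walk through each arc listed in \eqref{super-Wjumps}. On the inner lens arcs $(\partial\Omega_2\cap\partial\Omega_3)\cup(\partial\Omega_4\cap\partial\Omega_5)$ outside $\mathbb{D}_\alpha\cup\mathbb{D}_\beta$ the only off-diagonal entry is $e^{-n\mathcal{P}_1}$, decaying by Lemma \ref{super-Pmathcal-lemma}(a); on the real arc $\partial\Omega_2\cap\partial\Omega_5$ outside the disks, $e^{n\mathcal{P}_1}$ decays by (b); on $\partial\Omega_1\cap\partial\Omega_6$ outside $\mathbb{D}_{a^\star}$, $e^{n(\mathcal{P}_2-\eta)}$ decays by (c); on the outer lens arcs, $e^{n(\mathcal{P}_3-\eta)}$ decays by (d). For the last two I use that, since $\eta=\kappa\log\kappa+\mathfrak b+\kappa\ell_H$ with $\mathfrak b=\mathcal{O}(\kappa)$, one has $n\eta=\mathcal{O}(n^\gamma\log n)$, which is subexponential and hence negligible against the linear decay coming from Lemma \ref{super-Pmathcal-lemma}. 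Part (c) of the conclusion is the standard Airy parametrix estimate.

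For part (b), on $\partial\mathbb{D}_{a^\star}$ equation \eqref{super-VE-boundaries} gives $\mathbf{V}^{(E)}(z)-\mathbf{I}=\mathbf{\Psi}(z)[\mathbf{R}(\zeta)-\mathbf{I}]\mathbf{\Psi}(z)^{-1}$, and the two-term estimate \eqref{super-R-error} together with $|\zeta|$ of order $n^{(1-\gamma)/2}$ from \eqref{super-zeta-order} immediately yields
\begin{equation*}
\frac{1}{\zeta}=\mathcal{O}\bigl(n^{-(1-\gamma)/2}\bigr),\qquad \frac{r}{\zeta^{K+1}}=\mathcal{O}\bigl(n^{\gamma-(1-\gamma)(K+1)/2}\bigr),
\end{equation*}
which are exactly the two error terms in the claim.

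Part (d) is the main obstacle. The contour $\Gamma\cap\mathbb{D}_{a^\star}$ is a real segment on which $\mathbf{W}$ has jump with $(1,2)$-entry $e^{n\mathcal{P}_1}$ and $(1,3)$-entry $e^{n(\mathcal{P}_2-\eta)}$; the parametrix $\mathbf{R}$ was built precisely so that $\mathbf{R}_-$ absorbs the $(1,3)$-entry exactly, so \eqref{super-VE-a*} reduces to
\begin{equation*}
\mathbf{V}^{(E)}(z)-\mathbf{I}=e^{n\mathcal{P}_1(z)}\,\mathbf{\Psi}(z)\,\mathbf{R}_-(\zeta)\,E_{12}\,\mathbf{R}_-(\zeta)^{-1}\,\mathbf{\Psi}(z)^{-1},
\end{equation*}
where $E_{12}$ is the elementary matrix with a single $1$ in position $(1,2)$. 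Using the explicit formula \eqref{super-R} together with the Hermite asymptotics \eqref{Hasympt}, the conjugation $\mathbf{R}_-E_{12}\mathbf{R}_-^{-1}$ picks up at worst a scalar factor $\exp\bigl(r[g_H(\zeta-\zeta_0)-\log\zeta-\sum_{j=1}^K c_j^{(H)}/\zeta^{j}]\bigr)$ multiplied by bounded matrix factors, so the total exponent of the error scales as
\begin{equation*}
n\left[\mathcal{P}_1(z)+\kappa g_H(\zeta-\zeta_0)-\kappa\log\zeta-\kappa\sum_{j=1}^K\frac{c_j^{(H)}}{\zeta^{j}}\right],
\end{equation*}
which has strictly negative, bounded real part by Lemma \ref{super-Pmathcal-lemma}(e); this yields $\mathcal{O}(e^{-cn})$. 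The subtle point is that $\zeta$ ranges over a full disk and is not uniformly large there, so $\mathbf{R}$ cannot simply be replaced by its leading large-$\zeta$ asymptotic; rather, Lemma \ref{super-Pmathcal-lemma}(e) was engineered precisely so that whatever exponential growth the conjugation by $\mathbf{R}$ produces is exactly cancelled by $e^{n\mathcal{P}_1}$.
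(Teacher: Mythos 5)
Your proof is correct and follows essentially the same route as the paper's: parts (a) and (c) via Lemma~\ref{super-Pmathcal-lemma}(a)--(d) and the standard Airy parametrix, part (b) via the two-term estimate \eqref{super-R-error} together with the scaling \eqref{super-zeta-order}, and part (d) by extracting the scalar that the conjugation by $\mathbf{R}_-$ contributes to the $(1,2)$ entry and balancing it against $e^{n\mathcal{P}_1}$ through Lemma~\ref{super-Pmathcal-lemma}(e). If anything you spell out a bit more than the paper does: the arc-by-arc inventory in (a), the observation that $n\eta=\mathcal{O}(n^\gamma\log n)$ is subexponential, and the explicit derivation of the scalar $\exp\bigl(r[g_H(\zeta-\zeta_0)-\log\zeta-\sum_j c_j^{(H)}/\zeta^j]\bigr)$ (whereas the paper conjures the factors $e^{\pm r(g_H-\ell_H/2)\bs\Lambda_{13}}$ and writes the resulting $(*)_{12}$ directly; note the paper writes $g_H(\zeta)$ there, while your $g_H(\zeta-\zeta_0)$ is actually the more precise form dictated by \eqref{Hasympt}, the discrepancy being $\mathcal{O}(\zeta_0)=\mathcal{O}(\sqrt\kappa)$ and hence harmless). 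The only point worth flagging is that your phrase ``bounded matrix factors'' in (d) is compressing the same thing the paper compresses, namely that $e^{-\frac{r}{2}\ell_H\bs\Lambda_{13}}\mathbf{H}_{13-}(\zeta)e^{-r(g_H(\zeta-\zeta_0)-\ell_H/2)\bs\Lambda_{13}}$ and its inverse are uniformly bounded on $\Gamma\cap\mathbb{D}_{a^\star}$ by the joint asymptotics \eqref{Hasympt}, including near $\zeta-\zeta_0=\pm 2$ where the displayed outer factor is singular but the $\mathcal{O}(1/(r(|\zeta|+1)))$ error absorbs it via the Airy model. Neither you nor the paper spell that out, and it would be the natural place to add a sentence if more rigor were required.
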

\begin{proof}
Part (a) follows from \eqref{super-VE-outside}, Lemma 
\ref{super-Pmathcal-lemma}(a)--(d), and the boundedness of ${\bf \Psi}(z)$.
Part (b) follows from \eqref{super-zeta-order}, \eqref{super-R-error}, and the 
boundedness of ${\bf \Psi}(z)$.  
Part (c) comes from the construction of the parametrices 
${\bf P}_\text{\bf Ai}^{(\alpha)}(z)$ and 
${\bf P}_\text{\bf Ai}^{(\beta)}(z)$ (see, for instance, \cite{Deift:1999b}).

For part (d) we consider the jumps 
\eqref{super-VE-a*} inside the disk $\mathbb{D}_{a^{\star}}$.  
Looking at the formula 
\eqref{super-R} for ${\bf R}(\zeta)$, it appears there may be a problem at 
$\zeta=0$.  However, note that
\eq
\label{VE-bound-Gamma-a*}
\begin{split}
{\bf V^{(E)}}(z;\kappa) = & \Psi(z) e^{-\frac{r}{2}\ell_H{\bf\Lambda}_{13}} {\bf H}_{13-}(\zeta) e^{-r(g_H(\zeta)-\frac{\ell_H}{2}){\bf\Lambda}_{13}} \bpm 1 & (*)_{12} & 0 \\ 0 & 1 & 0 \\ 0 & 0 & 1 \epm \times \\
& \times e^{r(g_H(\zeta)-\frac{\ell_H}{2}){\bf\Lambda}_{13}}{\bf H}_{13-}(\zeta)^{-1}e^{\frac{r}{2}\ell_H{\bf\Lambda}_{13}}\Psi(z)^{-1}, \quad z\in\Gamma\cap\mathbb{D}_{a^{\star}}, 
\end{split}
\endeq
wherein
\eq
(*)_{12} = \exp\left(n\mathcal{P}_1(z;\kappa)+rg_H(\zeta)-r\log\zeta - r\sum_{j=1}^{K}
\frac{c_j^{(H)}}{\zeta^{j}}\right).
\endeq
Now \eqref{VE-bound-Gamma-a*} together with Lemma 
\ref{super-Pmathcal-lemma}(e) and the boundedness of ${\bf \Psi}(z)$ inside 
$\mathbb{D}_{a^{\star}}$ establishes (d).
\end{proof}

We can now show that the error matrix ${\bf E}(z)$ is uniformly close 
to the identity.
\begin{lemma}
\label{super-E}
In the supercritical regime, for $n$ large,
$${\bf E}(z) = {\bf I}+\mathcal{O}\left(n^{-(1-\gamma)/2}\right)$$
uniformly in $z$.
\end{lemma}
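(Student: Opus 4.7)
The plan is to apply the standard small-norm theorem for Riemann--Hilbert problems to the error problem, once the jump bounds supplied by Lemma~\ref{super-VE-lemma} have been assembled. Recall that ${\bf E}(z)$ has no jump on the arcs of $\Gamma^{\bf(E)}$ interior to $\mathbb{D}_\alpha \cup \mathbb{D}_\beta$, nor on $[\alpha,\beta]$ outside the endpoint disks, so the nontrivial jump $\Delta(z) := {\bf V^{(E)}}(z) - {\bf I}$ is supported on $\partial\mathbb{D}_\alpha \cup \partial\mathbb{D}_\beta \cup \partial\mathbb{D}_{a^{\star}}$, on the portion $\Gamma \cap \mathbb{D}_{a^{\star}}$, and on the noncompact component $\Gamma_N^{\bf(E)}$.

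First I would simply collect the pointwise bounds from Lemma~\ref{super-VE-lemma}: on $\Gamma_N^{\bf(E)}$ and on $\Gamma \cap \mathbb{D}_{a^{\star}}$ the jump is exponentially small; on $\partial\mathbb{D}_\alpha \cup \partial\mathbb{D}_\beta$ it is $\mathcal{O}(n^{-1})$; and on $\partial\mathbb{D}_{a^{\star}}$ it is $\mathcal{O}(n^{-(1-\gamma)/2}) + \mathcal{O}(n^{\gamma - (1-\gamma)(K+1)/2})$. A short arithmetic check, using the choice \eqref{super-k-def} of $K$, shows that the second of these terms is dominated by the first, so the global uniform bound is $\|\Delta\|_{L^\infty(\Gamma^{\bf(E)})} = \mathcal{O}(n^{-(1-\gamma)/2})$. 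Because the compact arcs have finite length and the exponential decay on $\Gamma_N^{\bf(E)}$ delivers even better $L^1$ and $L^2$ bounds there, the same rate holds simultaneously in $L^1(\Gamma^{\bf(E)}) \cap L^2(\Gamma^{\bf(E)}) \cap L^\infty(\Gamma^{\bf(E)})$.

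The second step is the classical small-norm argument (see, for instance, \cite{Deift:1999b,Deift:1998-book}). For $n$ large the operator $\mathcal{C}_{-}^{\Delta}f := \mathcal{C}_{-}(f\Delta)$ on $L^2(\Gamma^{\bf(E)})$ has operator norm at most $C\|\Delta\|_\infty = o(1)$, so $I - \mathcal{C}_{-}^{\Delta}$ is invertible with uniformly bounded inverse. The unique solution of the ${\bf E}$-RHP then admits the representation
\[
{\bf E}(z) = {\bf I} + \frac{1}{2\pi i}\int_{\Gamma^{\bf(E)}}\frac{(I + \eta(s))\,\Delta(s)}{s-z}\,ds, \qquad \eta := (I - \mathcal{C}_{-}^{\Delta})^{-1}\mathcal{C}_{-}^{\Delta}{\bf I},
\]
with $\|\eta\|_{L^2} = \mathcal{O}(\|\Delta\|_{L^2})$. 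Cauchy--Schwarz combined with the $L^1$ and $L^2$ bounds on $\Delta$ then yields $\|{\bf E}(z) - {\bf I}\| = \mathcal{O}(n^{-(1-\gamma)/2})$ uniformly for $z$ at a fixed positive distance from $\Gamma^{\bf(E)}$; the estimate extends to all of $\C$ by the continuous boundary values of the Cauchy integral together with the direct pointwise control on $\Delta$ itself.

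The main technical point, and the only place that really requires care, is verifying in step one that the truncation depth $K$ in the local parametrix near $a^{\star}$ is large enough that the subleading $\mathcal{O}(r/\zeta^{K+1})$ tail from the Hermite asymptotics \eqref{super-R-error} does not outweigh the leading $\mathcal{O}(1/\zeta)$ term on $\partial \mathbb{D}_{a^\star}$ once $r = Cn^\gamma$ is allowed to grow with $n$; beyond this elementary algebraic check the remainder of the argument is the standard small-norm machinery and the implied constant can be traced through the Neumann series for $(I - \mathcal{C}_{-}^{\Delta})^{-1}$.
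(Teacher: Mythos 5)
Your proposal is correct and follows essentially the same route as the paper: you assemble the $L^\infty$ and $L^2$ jump bounds from Lemma~\ref{super-VE-lemma}, check that the choice of $K$ in \eqref{super-k-def} makes the $\mathcal{O}(r/\zeta^{K+1})$ term subdominant on $\partial\mathbb{D}_{a^\star}$, and then invoke the standard small-norm Cauchy-operator/Neumann-series argument. The paper merely cites \cite{Deift:1999b} and \cite{Ercolani:2003} for that last step, so your spelling out of the resolvent representation and the Cauchy--Schwarz estimate is a fuller rendering of the same argument rather than a different one.
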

\begin{proof}
From Lemma \ref{super-VE-lemma}(b)--(d), 
\eq
\label{super-VE-compacta}
{\bf V^{(E)}}(z) = {\bf I} + \mathcal{O}\left(n^{-(1-\gamma)/2}\right) + \mathcal{O}\left(n^{
\gamma -\frac {1-\gamma}2 (K+1)
}\right), \quad z\in\Gamma_C^{\bf(E)}.
\endeq 
The first error term always dominates or matches the second term if the 
nonnegative integer $K$ is chosen so \eqref{super-k-def} is satisfied.
Then, for $n$ sufficiently large there 
exists a constant $c$ such that 
\eq
||{\bf V^{(E)}}-{\bf I}||_{L^2\left(\Gamma_C^{\bf(E)}\right)} + ||{\bf V^{(E)}}-{\bf I}||_{L^\infty\left(\Gamma_C^{\bf(E)}\right)} \leq c n^{-(1-\gamma)/2}.
\endeq
Also, from Lemma \ref{super-VE-lemma}(a), for $n$ sufficiently large there is 
a constant $c$ such that 
\eq
||{\bf V^{(E)}}-{\bf I}||_{L^2\left(\Gamma_N^{\bf(E)}\right)} + ||{\bf V^{(E)}}-{\bf I}||_{L^\infty\left(\Gamma_N^{\bf(E)}\right)} \leq c e^{-cn},
\endeq
The result follows by a standard technique that consists of writing the 
solution to the Riemann-Hilbert 
problem in terms of a Neumann series involving ${\bf V^{(E)}}-{\bf I}$ 
(see, for instance, \cite{Deift:1999b} Section 7.2 or 
\cite{Ercolani:2003} Section 3.5).
\end{proof}

\subsection{The supercritical kernel and proof of Theorem \ref{theorem-super-kernel}}
\label{super-kernel-proof}

\begin{proof}[Proof of Theorem \ref{theorem-super-kernel}]
Recall that the kernel is defined by (\ref{mop-kernel}):
\eq
\label{super-Kn}
K_n(x, y) = \frac{e^{-\frac{1}{2} n ( V(x) + V(y) )}}{2\pi i(x-y)} \left( \left[ {\bf Y}(y)^{-1} {\bf Y}(x) \right]_{21} + e^{n a y} \left[ {\bf Y}(y)^{-1}{\bf Y}(x)\right]_{31} \right).
\endeq
We consider local coordinates $\zeta_x$ and $\zeta_y$ in $\mathbb{D}_{a^{\star}}$.
While the function ${\bf Y}(z)$ has a 
jump in this region, the first column of ${\bf Y}(z)$ does not (see the 
Riemann-Hilbert problem \eqref{rhp}).  Therefore we can pick $x$ and $y$ to 
be in a convenient region.  We choose $x$ and $y$ to be in $\Omega_1$ as 
defined in Figure \ref{fig_biglensSuper}.  

From the transformation \eqref{super-Y-to-W}, we see that 
\bea
\label{super-Yinv-Y21}
&& \begin{split}
& \left[{\bf Y}(y)^{-1}{\bf Y}(x)\right]_{21} = \left[{\bf W}(y)^{-1}{\bf W}(x)\right]_{21}  \exp\left(n\left((1-\kappa){\mathfrak g}(y)+(1-\kappa){\mathfrak g}(x)+
{\delta V(x)}
 +\ell_1\right)\right), 
\end{split}
\\
\label{super-Yinv-Y31}
&& \begin{split}
\left[{\bf Y}(y)^{-1}{\bf Y}(x)\right]_{31} = \left[{\bf W}(y)^{-1}{\bf W}(x)\right]_{31}  \exp\left(n\left((1-\kappa){\mathfrak g}(x)+
{\delta V(x)}  +
{\delta V(y)} +
l_2 - \eta
\right)\right)
\end{split}
\eea
for $x$ and $y$ in $\Omega_1$.  We have 
\eq
\label{super-W-Psiinf}
{\bf W}(z) =  {(\1 +\mathcal{O}(n^{-(1-\gamma)/2})  }{\bf \Psi}^\infty(z)  = 
  {(\1 +\mathcal{O}(n^{-(1-\gamma)/2})  } 
{\bf \Psi}(z){\bf R}(\zeta(z)) 
\endeq
for $z\in\mathbb{D}_{a^{\star}}$.  From \eqref{super-zeta}, we have 
\eq
\label{Psiinv-Psi}
{\bf \Psi}(y)^{-1} {\bf \Psi}(x) = {\bf I} + \mathcal{O}\left((\zeta_x - \zeta_y)\kappa^{1/2}\right).
\endeq
We define the functions 
$\mathcal{Q}_i(z;\kappa)$ to be:
\begin{eqnarray}
{\mathcal Q}_1(z;\kappa) & := & -V(z) + 2(1-\kappa){\mathfrak g}(z;\kappa) + \ell_1(\kappa), \\
{\mathcal Q}_2(z;\kappa) & := & -V(z) + az + (1-\kappa){\mathfrak g}(z;\kappa) + 
l_2, \\
{\mathcal Q}_3(z;\kappa) & := & az - (1-\kappa){\mathfrak g}(z;\kappa) -\ell_1(\kappa) + 
{l_2}.
\end{eqnarray}
Now combining \eqref{super-Yinv-Y21}, 
\eqref{super-W-Psiinf}, \eqref{Psiinv-Psi}, \eqref{super-R}, 
$\det{\bf R}(\zeta)=1$, and 
noting the $\mathcal{O}(\kappa)$ error terms from \eqref{Psiinv-Psi} 
are subsumed by the $\mathcal{O}(n^{-(1-\gamma)/2})$ error terms from 
\eqref{super-W-Psiinf} gives
\eq
\label{super-YinvY21}
e^{-\frac{n}{2}(V(x)+V(y))}\left[{\bf Y}(y)^{-1}{\bf Y}(x)\right]_{21} = \left(\mathcal{O}\left((\zeta_x-\zeta_y)\kappa^{1/2}\right)\cdot H_r^{(r)}(\zeta_x-\zeta_0) + \mathcal{O}(n^{-(1-\gamma)/2}) \right) e^{n(*)},
\endeq
where
\eq
\begin{split}
(*) = & \frac{1}{2}\mathcal{Q}_1(x) + \frac{1}{2}\mathcal{Q}_1(y) + 
{\delta V(x)}
-\kappa\log\zeta_x - \kappa\sum_{j=1}^{K} \frac{c_j^{(H)}}{\zeta_x^j} \\
    = & \frac{1}{2}P_1(x) + \frac{1}{2}P_1(y) + \mathcal{O}(\kappa\ln \kappa).
\end{split}
\endeq
The last equality is shown by noticing that rearranging the terms in \eqref{mainidzeta}  we have 
\be
{\delta V(x)} 
 -\kappa\log\zeta_x - \kappa\sum_{j=1}^{K} \frac{c_j^{(H)}}{\zeta_x^j} = 
\frac 1 2\le(V(x) - a x - (1-\kappa)\mathfrak g(x)  - \ell_2 - \frac \kappa 2(\zeta-\zeta_0)^2  + \kappa \ln \kappa + \mathfrak b \ri)
\label{125}
\ee

Since the real  part of $P_1(z)$ is negative for $z$ near $a^{\star}$ (Lemma 
\ref{P-nolog-lemma}(b)), for $\kappa$ sufficiently 
small the real part of the exponent in \eqref{super-YinvY21} is negative.  

Define
\eq
F_r^\text{GUE}(\zeta_x,\zeta_y):=\frac{2\pi i}{k_{r-1}^{(r)}}\left(H_r^{(r)}(\zeta_x-\zeta_0)H_{r-1}^{(r)}(\zeta_y-\zeta_0)-H_{r-1}^{(r)}(\zeta_x-\zeta_0)H_r^{(r)}(\zeta_y-\zeta_0)\right).
\endeq
From \eqref{super-Yinv-Y31},
\eq
e^{-\frac{n}{2}(V(x)+V(y))+nay}\left[{\bf Y}(y)^{-1}{\bf Y}(x)\right]_{31} = \left(F_r^\text{GUE}(\zeta_x,\zeta_y)+\mathcal{O}(n^{-(1-\gamma)/2})  \right) e^{n(**)},
\endeq
where from \eqref{super-Yinv-Y31}
\bea 
\begin{split}
(**) &= -\frac {V(x)}2 - \frac {V(y)}2  + ay + \delta V(y) + \delta  V(x)
 - \kappa \ln \zeta_x  - \kappa \sum_{j=1}^K \frac {c_j^{(H)}}{\zeta_x^j} 
 - \kappa \ln \zeta_y  - \kappa \sum_{j=1}^K \frac {c_j^{(H)}}{\zeta_y^j}  +
\\ &\phantom{=} + (1-\kappa) \mathfrak g(x) + \kappa \ell_H\,.
\end{split}
\eea
Using now \eqref{125} (for both $x$ and $y$) and rearranging the terms we find 
\bea
(**) =&& \frac 1 2 \le[ ay - (1-\kappa) \mathfrak g(y)  - \frac \kappa 2 (\zeta_y-\zeta_0)^2 - l_2 + \kappa \ln \kappa  + \mathfrak b  \ri] 
+ \nonumber\\
&&+
\frac 1 2 \le[  (1-\kappa) \mathfrak g(x) -ax - \frac \kappa 2 (\zeta_x-\zeta_0)^2 - l_2 + \kappa \ln \kappa  + \mathfrak b  \ri] 
+ \kappa \ell_H - \eta \\
=&&  -\frac{\kappa}{4}(\zeta_x-\zeta_0)^2-\frac{\kappa}{4}(\zeta_y-\zeta_0)^2-\frac{1}{2}\mathcal{Q}_3(x)+\frac{1}{2}\mathcal{Q}_3(y).
\eea
where we have used the definition of $\eta := \kappa \ln \kappa + \mathfrak b + \kappa \ell_H$ \eqref{defeta}.
 From this and the 
fact that \eqref{super-YinvY21} is exponentially decaying in $n$ shows
\eq
\label{super-kernel-formula}
K_n(x,y) = \frac{e^{-\frac{n}{2}\mathcal{Q}_3(x)+\frac{n}{2}\mathcal{Q}_3(y)}}{2\pi i (x-y)}\left(F_r^\text{GUE}(\zeta_x,\zeta_y)+\mathcal{O}(n^{-(1-\gamma)/2})\right)e^{-\frac{r}{4}(\zeta_x-\zeta_0)^2-\frac{r}{4}(\zeta_y-\zeta_0)^2}.
\endeq
One final application of \eqref{super-zeta} (to switch $x-y$ to 
$\zeta_x-\zeta_y$) and 
Proposition  \ref{propdeform} (to show convergence of $\mathcal{Q}_3$ to $P_3$) gives 
\eqref{kernel-thm-form}.

\end{proof}

\section{The subcritical regime}

We now take $V(x)$ and $a$ so Definition
\ref{subcritical} of the subcritical regime is satisfied; that is $a<a_c$ and the function $\Re\,P_2$ has no global maximum on $\mathbb R \setminus [\alpha,\beta]$. In this case $\Re\, P_3 $ has a (unique) global minimum at $z=b^\star$  (with value zero, as per our choice of $l_2$ in Definition \ref{defell2}).
  We will show that almost surely there are no outliers.
We will freely reuse the same notation from Section \ref{supercrit} 
for new objects which played a similar role in the analysis of the 
supercritical regime.  To begin, fix $\gamma\in[0,1)$ and again set $K$ 
to be the smallest nonnegative integer satisfying
\eq
\label{sub-k-def}
K\geq \max \le\{\frac{3\gamma-1}{1-\gamma},0\ri\}
\endeq
\subsection{Modified equilibrium problem (subcritical case)}
The procedure here parallels closely the one followed in the supercritical case, and hence we will only state the results since their proof does not differ significantly from the other case.

Let $J$ be a closed subset not containing the point $b^\star$ and containing $[\alpha,\beta]$ in its interior; we recall that $b^\star(a)>\beta$ for $0<a<a_c$.

\bp
\label{propdeform-sub}
For any $K\in \mathbb N$ there is a neighborhood of the origin in $(\kappa,\vec \delta) \in \C^{1+K}$  such that the equilibrium measure $\widetilde \sigma(x)\d x$ of {\bf unit} total mass for the external field 
\be
\widetilde V(z):=V(z)  +  \delta V(z), \ \ \ \delta V(z):= \kappa \ln (z-b^\star ) + \kappa \sum_{j=1}^K \frac {\delta_j}{2(z-b^\star)^j}
\label{deformV-sub}
\ee
is supported on a single interval $[\alpha(\kappa,\vec \delta), \beta (\kappa,\vec \delta)]$ still contained in the interior of $J$: the endpoints $\alpha(\kappa,\vec \delta), \beta (\kappa,\vec \delta)$ are analytic functions of the specified variables.
Furthermore the $g$--function of this problem 
\be
\mathfrak g(z):= \int \ln(z-w) \widetilde  \sigma(w) \d w
\ee
converges uniformly over closed subsets not containing $[\alpha,\beta]$ to the unperturbed $g$--function.
\ep 
The proof is identical to that of Proposition \ref{propdeform}: the only difference is that now the modified equilibrium measure is of unit total mass, rather than of mass $1-\kappa$.
We next re-define the three functions $\mathcal P_j$'s; the definition is subtly different from the previous \eqref{super-mathcalP1}, \eqref{super-mathcalP2},\eqref{super-mathcalP3} and hence there is a possibility of confusion for the reader. The advantage is that we will be able to recycle many of the previous computations.
\begin{shaded}
\begin{eqnarray}
\label{sub-mathcalP1}
{\mathcal P}_1(z) & := & -\widetilde V(z) + 2{\mathfrak g}(z) + \ell_1, \\
\label{sub-mathcalP2}
{\mathcal P}_2(z) & := & {\mathcal P}_1(z) + {\mathcal P}_3(z) = -V(z)+\delta V(z) + a z + \mathfrak g(z) +l_2 \\
\label{sub-mathcalP3}
{\mathcal P}_3(z) & := & az - {\mathfrak g}(z) + 2\delta V(z) + l_2 - \ell_1 = \mathcal P_2(z)- \mathcal P_1(z)\\
\widetilde V(z)&:=&  V(z) + \delta V(z)\ ,\qquad 
\delta V(z) :=  \kappa\log(z-b^{\star}) + \sum_{j=1}^{2k}\frac{\delta_j}{2(z-b^{\star})^j} + \ell_1
\end{eqnarray}
\end{shaded}

 It is {\bf important} to point out  the change of sign in the definition of $\widetilde V$, relative to the supercritical case. We also remind that $\ell_1, \mathfrak g$ are analytic functions of $\kappa, \vec \delta$, while $l_2$ is the constant mandated in Definition \ref{defell2}.
\begin{thm}
\label{rhoprop-sub}
There exists a conformal change of coordinate $\rho = \rho(z;\kappa,\vec \delta)$ {\em fixing $z=b^\star$} ($\rho(b^\star;\kappa,\vec \delta) \equiv 0$)  that depends analytically on the parameters $\kappa, \vec \delta$ such that 
\be
\mathcal P_3 (z):= az  - \mathfrak g(z;\kappa,\vec \delta) +l_2-\ell_1 +2\kappa\ln(z-b^\star) + \sum_{j=1}^K \frac {\delta_j}{(z-b^\star)^j} 
\label{38}
\ee
can be written as 
\be
\mathcal P_3(z;\kappa,\vec \delta) = \frac 1 2 (\rho- {\mathfrak a} )^2 + 2\kappa \ln \rho+\mathfrak b  +  \sum_{j=2}^{K} \frac 
{ \gamma_j}{\rho^j}\label{mainid-sub}
\ee
where the parameters ${\mathfrak a} = {\mathfrak a} (\kappa,\vec \delta)$, $\mathfrak b  = \mathfrak b (\kappa,\vec\delta)$ and $\vec \gamma = \vec \gamma(\kappa;\vec \delta)$ are analytic functions of the indicated parameters.
Furthermore the Jacobian 
\be
\frac {\pa \vec \gamma}{\pa \vec \delta} 
\ee 
is nonsingular in a neighborhood of the origin (for $\kappa$ sufficiently small).
\end{thm}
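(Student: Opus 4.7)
The plan is to adapt the proof of Theorem \ref{rhoprop} essentially line-by-line, with two bookkeeping differences. First, $b^\star$ is a local \emph{minimum} of $\Re P_3$ (by Lemma \ref{lemmaP3}), not a maximum; this explains the sign $+\frac12(\rho-\mathfrak{a})^2$ in \eqref{mainid-sub} versus the $-\frac12(\rho-\mathfrak{a})^2$ in \eqref{mainid}. Second, the normalization is $\mathfrak{g}(z)$ of unit mass (Proposition \ref{propdeform-sub}) rather than mass $1-\kappa$, but $\mathfrak{g}$ remains analytic in $(\kappa,\vec\delta)$ and this does not affect the structural argument. To begin, I translate so $b^\star=0$ and write, using \eqref{38},
\be
\mathcal{P}_3(z;\kappa,\vec\delta)=f(z;\kappa,\vec\delta)+2\kappa\ln z+\sum_{j=1}^K\frac{\delta_j}{z^j},
\ee
where $f$ is analytic in its arguments. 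By the choice of $l_2$ in Definition \ref{defell2} we have $f(0;0,\vec 0)=0$, while $f'(0;0,\vec 0)=P_3'(b^\star)=0$ and $f''(0;0,\vec 0)=-\Re g''(b^\star)>0$ by \eqref{11}. Thus $f(z;\kappa,\vec\delta)=\frac{C}{2}z^2(1+\mathcal O(z))+\mathcal O(\kappa,\vec\delta)$ with $C>0$, which is exactly the situation of Theorem \ref{rhoprop} up to the overall sign.

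Taking tangent derivatives $\pa$ along an arbitrary smooth curve in parameter space and differentiating \eqref{mainid-sub} implicitly yields a rational expression for $\pa\rho$ whose denominator is a polynomial of degree $K+2$ in $\rho$ with coefficients analytic in $(\mathfrak{a},\kappa,\vec\gamma)$. Interpreting this on the Banach manifold $\mathcal{M}=\Omega_1\times\C^{K+1}$ of univalent analytic functions $\rho:\mathbb D(r)\to\C$ fixing the origin, the numerator must also vanish at the $K+2$ roots $\rho_1,\dots,\rho_{K+2}$ of the denominator for $\pa\rho$ to be analytic; this produces a linear system for $(\pa\mathfrak{a},\pa\mathfrak{b},\pa\gamma_2,\dots,\pa\gamma_K)$. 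The coefficient matrix of this system is, up to the column $(\rho_\ell-\mathfrak{a})\rho_\ell^K$, a Vandermonde in the $\rho_\ell$; by Cramer's rule one obtains the $\pa$--variables as symmetric analytic functions of the $\rho_\ell$, which by the fundamental theorem on symmetric functions are analytic in the elementary symmetric polynomials, i.e.\ in $(\mathfrak{a},\kappa,\vec\gamma)$ themselves. Checking the Lipschitz bound for the resulting vector field on $T\mathcal{M}$ and invoking the Banach-space Picard theorem produces the integrated flow; this technical step is the main obstacle and proceeds exactly as in Appendix \ref{confmap-appendix}.

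Finally, to establish the nondegeneracy of the Jacobian at $(\kappa,\vec\delta)=(0,\vec 0)$ I set $\rho=\sqrt{2f(z;0,\vec 0)}$ and compute $\pa_\ell\rho:=\partial_{\delta_\ell}\rho$ from \eqref{mainid-sub}. Requiring that $\pa_\ell\rho$ vanish at least to order one at $z=0$ (so that $\rho(0;\kappa,\vec\delta)\equiv 0$ is preserved) imposes a linear system on $(\pa_\ell\mathfrak{a},\pa_\ell\mathfrak{b},\pa_\ell\vec\gamma)$ whose solution satisfies
\be
\pa_\ell\gamma_j(0,\vec 0)=\begin{cases}0,&j>\ell,\\ 1,&j=\ell,\\ \star,&j<\ell,\end{cases}
\ee
so $\pa\vec\gamma/\pa\vec\delta$ is upper-triangular with unit diagonal and hence invertible in a neighborhood of the origin, completing the proof in full analogy with Theorem \ref{rhoprop}.
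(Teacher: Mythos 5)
Your adaptation of the Theorem \ref{rhoprop} argument is correct in all essential respects, and it matches the paper's own (very brief) treatment: the paper simply remarks that the subcritical proofs ``parallel closely'' the supercritical ones and omits the details. You correctly identify the two bookkeeping changes --- the sign of the quadratic term (minimum of $\Re P_3$ at $b^\star$ versus maximum of $\Re P_2$ at $a^\star$, both following from the concavity/convexity coming from \eqref{11}) and the unit-mass normalization of $\mathfrak g$ in Proposition \ref{propdeform-sub} --- and the flow argument on the Banach manifold, the Cramer's-rule symmetric-function step, and the triangular Jacobian computation all transfer verbatim once these are in place.

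One caution worth flagging, inherited from the statement itself: if \eqref{mainid-sub} is read literally with $\sum_{j=2}^K$, then $\vec\gamma=(\gamma_2,\dots,\gamma_K)\in\C^{K-1}$ while $\vec\delta=(\delta_1,\dots,\delta_K)\in\C^K$, so $\pa\vec\gamma/\pa\vec\delta$ is $(K-1)\times K$ and ``nonsingular'' has no meaning; likewise the implicit-differentiation step would be overdetermined, since the denominator polynomial still has $K+2$ roots but you only have the $K+1$ unknowns $\pa\mathfrak a,\pa\mathfrak b,\pa\gamma_2,\dots,\pa\gamma_K$ at your disposal. Your own triangular-Jacobian formula $\pa_\ell\gamma_j(0,\vec 0)$ tacitly assumes $\gamma_1$ exists (for $j=\ell=1$). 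Comparing with Theorem \ref{confchange-sub}: substituting $\zeta=\rho/(i\sqrt\kappa)$ into \eqref{312} produces $\gamma_j=2\kappa(i\sqrt\kappa)^jc_j^{(H)}$, so $\gamma_1$ is generically nonzero and the sum in \eqref{mainid-sub} should read $\sum_{j=1}^K$ (this appears to be a typo in the paper, exactly mirroring \eqref{mainid}). With that correction the dimension counts match, the linear system has a unique solution, and your argument closes precisely as in the supercritical case.
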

\begin{thm}
\label{confchange-sub}
There exists a conformal change of coordinate $\zeta(z;\kappa)$ of the form 
\be
\zeta(z;\kappa)= \frac {\rho(z;\kappa)}{i \sqrt{\kappa}} = \frac 1{i \sqrt{\kappa}} C (z-b^\star) (1+\mathcal O(z-b^\star))\ , C>0
\label{sub-zeta}
\ee
and a choice of $\vec \delta = \vec \delta(\kappa)$ for the deformed potential (\ref{deformV}) in Puiseux series of $\sqrt{\kappa}$ 
such that 
\be
\mathcal P_3(z; \kappa, \vec \delta(\kappa)) =   - \frac \kappa 2 (\zeta - \zeta_0)^2 + 2 \kappa \ln (\sqrt \kappa \zeta) +\mathfrak b  + 2 \kappa \sum_{j=1}^K \frac {c^{(H)}_j}{\zeta^j}\label{312}\,.
\ee
The functions  $\zeta_0(\kappa), \beta(\kappa), \vec \delta(\kappa)$ admit a Puiseux expansion and are of orders 
\be
\zeta_0 = \mathcal O(\sqrt\kappa)\ ,\ \ \beta = \mathcal O(\kappa)\ ,\ \ \vec \delta = \mathcal O(\kappa).
\ee
The expressions $c^{(H)}_j$ are polynomials of degree $j$ in $\zeta_0$ determined by the formula \eqref{super-cH-def}.
\end{thm}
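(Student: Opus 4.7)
\textbf{Proof proposal for Theorem \ref{confchange-sub}.}
The plan is to mirror the supercritical argument (Proposition \ref{propdeltas} and Theorem \ref{confchange}), the only essential differences being a sign flip in the quadratic part of $\mathcal P_3$ and an \emph{imaginary} rescaling of the coordinate $\rho$. First I would invoke Theorem \ref{rhoprop-sub} to obtain the univalent conformal map $\rho(z;\kappa,\vec\delta)$ (fixing $b^\star$) together with analytic functions $\mathfrak a(\kappa,\vec\delta)$, $\mathfrak b(\kappa,\vec\delta)$, $\vec\gamma(\kappa,\vec\delta)$ for which the identity \eqref{mainid-sub} holds. Since $\mathcal P_3$ has a local \emph{minimum} at $b^\star$ (Lemma \ref{lemmaP3}), the leading quadratic in \eqref{mainid-sub} has coefficient $+\tfrac12$, in contrast to the $-\tfrac12$ in the supercritical setting; this is exactly what forces the subsequent rescaling to be rotated by $90^\circ$.

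Next I would perform the change of variable
\begin{equation*}
\zeta := \frac{\rho}{i\sqrt\kappa},\qquad \zeta_0 := \frac{\mathfrak a}{i\sqrt\kappa},
\end{equation*}
and compute how each term of \eqref{mainid-sub} transforms. The quadratic becomes
$\tfrac12(\rho-\mathfrak a)^2 = -\tfrac{\kappa}{2}(\zeta-\zeta_0)^2$, as required by \eqref{312}.
The logarithm splits as $2\kappa\ln\rho = 2\kappa\ln(\sqrt\kappa\,\zeta)+i\pi\kappa$, whose constant part is absorbed into a redefined $\mathfrak b$. Finally each singular term becomes
\begin{equation*}
\frac{\gamma_j}{\rho^j} \;=\; \frac{(-i)^j\,\kappa^{-j/2}\,\gamma_j}{\zeta^j}.
\end{equation*}
Matching the resulting expression term-by-term with the Hermite-type target \eqref{312} leads to the nonlinear system
\begin{equation*}
\gamma_j(\kappa,\vec\delta) \;=\; 2\,i^{j}\,\kappa^{\,1+j/2}\,c^{(H)}_j\!\bigl(\zeta_0(\kappa,\vec\delta)\bigr),\qquad j=1,\dots,K,
\end{equation*}
which is to be solved for $\vec\delta=\vec\delta(\kappa)$.

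To invert this system I would apply the implicit function theorem exactly as in Proposition \ref{propdeltas}. Writing the system as $\vec\gamma - 2\,i^{J}\kappa^{D}\vec c^{(H)} = 0$ with $D=\mathrm{diag}(3/2,2,5/2,\ldots)$, the Jacobian with respect to $\vec\delta$ is
\begin{equation*}
\mathcal J \;=\; \frac{\partial\vec\gamma}{\partial\vec\delta} \;-\; 2\,i^{J}\kappa^{D}\frac{\partial\vec c^{(H)}}{\partial\vec\delta},
\end{equation*}
and the first term is non-singular at the origin by Theorem \ref{rhoprop-sub}, so $\det\mathcal J = 1+\mathcal O(\kappa^{3/2})$. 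Consequently the system admits a unique Puiseux solution $\vec\delta(\kappa)$ in powers of $\sqrt\kappa$, with $\vec\delta=\mathcal O(\kappa^{3/2})$. Since $c^{(H)}_j$ and $\mathfrak a,\mathfrak b$ depend analytically on $(\kappa,\vec\delta)$, one then reads off $\zeta_0 = -i\mathfrak a/\sqrt\kappa = \mathcal O(\sqrt\kappa)$ and $\mathfrak b = \mathcal O(\kappa)$, completing the orders claimed in the theorem. The main obstacle is not conceptual but notational: one must keep careful track of the factors of $i$ and $\sqrt\kappa$ that appear because the subcritical rescaling is imaginary, and verify that the triangular structure of $\partial\vec\gamma/\partial\vec\delta$ established at $(\kappa,\vec\delta)=(0,\vec 0)$ in Theorem \ref{rhoprop-sub} still dominates the $\mathcal O(\kappa^{3/2})$ correction; this is the same Vandermonde/Cramer argument used in Proposition \ref{propdeltas} and requires no new ingredient.
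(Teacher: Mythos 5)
Your proof is correct and follows exactly the route the paper intends: the paper itself does not write out a proof of Theorem~\ref{confchange-sub}, instead remarking at the start of the subcritical section that "the procedure here parallels closely the one followed in the supercritical case." You have faithfully instantiated that analogy — invoking Theorem~\ref{rhoprop-sub} for the identity \eqref{mainid-sub}, rescaling by $\rho = i\sqrt{\kappa}\,\zeta$ (the sign flip in the quadratic, coming from $b^\star$ being a minimum of $\Re P_3$ rather than a maximum of $\Re P_2$, is precisely what forces the factor of $i$), tracking the resulting $i^j$ and $i\pi\kappa$ factors into the nonlinear system and into a redefined $\mathfrak{b}$, and then closing with the same implicit-function/triangular-Jacobian argument of Proposition~\ref{propdeltas}. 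Nothing is missing and no step would fail; this is the proof the paper omitted.
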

\begin{figure}
\setlength{\unitlength}{2.7pt}
\begin{center}
\begin{picture}(100,50)(-60,-25)
\put(-10,0){\circle*{1}}
\put(-11.5,2){$\alpha$}
\put(10,0){\circle*{1}}
\put(9.5,2){$\beta$}
\put(15,0){\circle*{1}}
\put(16,1){$b^{\star}$}
\put(-50,0){\line(1,0){80}}
\put(-42,14){$\Omega_1$}
\put(-22,8.5){$\Omega_2$}
\put(-1.5,3){$\Omega_3$}
\put(-1.5,-4.5){$\Omega_4$}
\put(-22,-9.5){$\Omega_5$}
\put(-42,-15){$\Omega_6$}
\qbezier(-10,0)(0,16),(10,0)
\qbezier(-10,0)(0,-16),(10,0)
\qbezier(15,0)(15,25)(-10,25)
\qbezier(15,0)(15,-25)(-10,-25)
\qbezier(-35,0)(-35,25)(-10,25)
\qbezier(-35,0)(-35,-25)(-10,-25)
\thicklines
\put(-42,0){\vector(1,0){1}}
\put(-22,0){\vector(1,0){1}}
\put(0,0){\vector(1,0){1}}
\put(22,0){\vector(1,0){1}}
\put(-10,-25){\vector(1,0){1}}
\put(-10,25){\vector(-1,0){1}}
\put(0,8){\vector(1,0){1}}
\put(0,-8){\vector(1,0){1}}
\end{picture}
\end{center}
\caption{The regions $\Omega_i$ and the oriented contour $\Gamma$ for the subcritical case.\label{sublens}}
\end{figure}
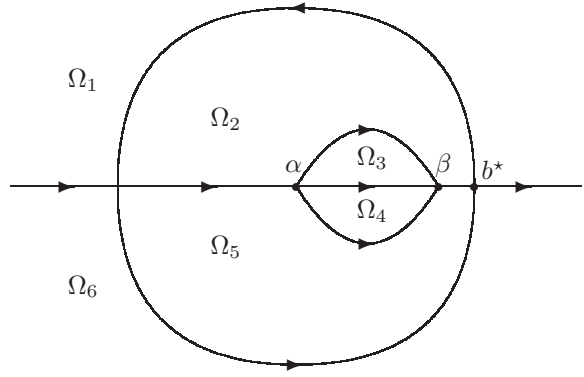

\subsection{Steepest descent analysis (subcritical case)}
\label{sub-steepest}
The regions $\Omega_i$, $i=1,...6$ are defined in Figure \ref{sublens}.
Following the example of the supercritical case, we  introduce the {\bf same} new matrix $\mathbf W$ as in \eqref{super-Y-to-W}. Note, however that the definition of the regions $\Omega_j$'s now follows Figure \ref{sublens}, the constant $l_2$ follows Definition \ref{defell2} in the subcritical case and $\delta V$ is given in \eqref{deformV-sub} instead.


The new matrix  ${\bf W}(z;\kappa)$ satisfies the same jump conditions \eqref{super-Wjumps} 
as in  the supercritical case (with, of course, the new definitions of 
the $\mathcal{P}_j(z;\kappa)$'s and $\Omega_j$'s) as well as the same 
asymptotic condition \eqref{super-W-infinity}.  
Due to the new definitions of $\mathcal P_j$'s \eqref{sub-mathcalP1}, \eqref{sub-mathcalP2}, \eqref{sub-mathcalP3} , the behavior near $z=b^{\star}$ is now different:
\eq
\label{newWasympt}
\begin{split}
{\bf W} = \text{(analytic)}\bpm 1 & 0 & 0 \\ 0 &
{ {\rm e}^{-n \delta V(z)}}
& 0 \\ 0 & 0 & 
{ {\rm e}^{n \delta V(z)}}
 \epm \\ 
\text{as } z\to b^{\star}.
\end{split}
\endeq
The outer parametrix problem is the same as in the supercritical case, so we
again define the outer parametrix solution ${\bf \Psi}(z;\kappa)$ as in 
\eqref{Vpsi}.
\subsubsection{The local parametrix near \texorpdfstring{$\bs{b^{\star}}$}{bstar}}

 Define $\mathbb{D}_{b^{\star}}$ to be 
a fixed-size circular disk centered at $b^{\star}$ which is small enough 
so that it  does not intersect the 
inner lenses, and 
\eq
\label{condition-on-Db*}
\Re(P_2)<0 \quad \text{for} \quad z\in\mathbb{D}_{b^{\star}}.
\endeq
This last condition is possible for $\kappa$ sufficiently small because --due to the definition of $l_2$ in Definition \ref{defell2} for the subcritical case--
 $P_2(b^{\star})<0$.
We now use Theorem \ref{confchange-sub}; in a local coordinate centered at $b^{\star}$, the analytic part of 
$\mathcal{P}_3(z;\kappa)$ behaves the same way (quadratically with a 
maximum at the origin) along the imaginary axis 
as $\mathcal{P}_2(z;\kappa)$ did along the real axis in the supercritical 
regime.  
The scaling of $\zeta$ is analogous to \eqref{super-zeta-order};
\eq
\label{sub-zeta-order}
\zeta=\mathcal{O}\left(n^{(1-\gamma)/2}\right) \quad \text{when } z\in\partial\mathbb{D}_{b^{\star}}.
\endeq
\eq
{\mathcal P}_3 = f(z; \kappa) + 2 \delta V(z)  +l_2-\ell_1=  -\frac{\kappa}{2}(\zeta-\zeta_0)^2 + 2\kappa\log\zeta + 2\kappa\sum_{j=1}^k\frac{c_j^{(H)}}{\zeta^{2j}}  + \kappa \ln \kappa + \mathfrak b.
\endeq

\begin{defn}
\label{localparam-sub}
The local parametrix within the disk $\mathbb D_{b^\star}$ shall be the unique solution ${\bf R}(z)$ to the following model Riemann--Hilbert problem:
\eq
\label{sub-R-RHP}
\begin{cases}
& {\bf R}_+(\zeta) = {\bf R}_-(\zeta)\bpm 1 & 0 & 0 \\ 0 & 1 & -\zeta^{2r}\exp\left(-\frac{r}{2}(\zeta-\zeta_0)^2+r\ell_H +2r\sum_{j=1}^k\frac{c_j^{(H)}}{\zeta^{2j}}\right) \\ 0 & 0 & 1 \epm  
\\ & \phantom{{\bf R}_+(\zeta)} = {\bf R}_-(\zeta)\bpm 1 & 0 & 0 \\ 0 & 1 & -e^{n\mathcal{P}_3} \\ 0 & 0 & 1 \epm, \quad \zeta\in\mathbb{R},
\\ &{\bf R}(\zeta)={\bf I}+{\cal O}\left(\ds\frac{1}{\zeta}\right) \text{ as } \zeta\to\infty,
\\&{\bf R}(\zeta)=(\mbox{analytic})\bpm 1 & 0 & 0 \\ 0 & \zeta^{-r}\exp\left(-r\sum_{j=1}^k\frac{c_j^{(H)}}{\zeta^{2j}}\right) & 0 \\ 0 & 0 & \zeta^r\exp\left(r\sum_{j=1}^k\frac{c_j^{(H)}}{\zeta^{2j}}\right) \epm \text{ as } \zeta\to 0.
\end{cases}
\endeq
\end{defn}
This problem is almost the same as \eqref{R-RHP}.  Analogously to 
\eqref{super-R}, the solution is ($\ell_H = -1 -2\ln 2$ as in \eqref{super-ellH})
\eq
\label{sub-R}
{\bf R}(\zeta) =  \exp\left(-\frac{r}{2}\ell_H{ \bf\Lambda}_{23}\right) \,{\bf H}_{23}(\zeta) \,\zeta^{-r{\bf\Lambda}_{23}}\,\exp\left(\left(\frac{r}{2}
\ell_H - r\sum_{j=1}^k\frac{c_j^{(H)}}{\zeta^{2j}}\right){\bf\Lambda}_{23}\right), 
\endeq
where
\eq
\label{sub-H23}
{\bf H}_{23}(\zeta):=\bpm 1 & 0 & 0 \\ 0 & H^{(r)}_r(\zeta-\zeta_0) & \displaystyle \frac{-1}{2\pi i}\int_{-\infty}^\infty \frac{H^{(r)}_r(s-\zeta_0) e^{-\frac{r}{2}s^2}}{s-\zeta}ds \\ \\ 0 & \displaystyle\frac{2\pi i}{-k^{(r)}_{r-1}}H^{(r)}_{r-1}(\zeta-\zeta_0)& \displaystyle \frac{1}{k^{(r)}_{r-1}}\int_{-\infty}^\infty \frac{H^{(r)}_{r-1}(s-\zeta_0) e^{-\frac{r}{2}s^2}}{s-\zeta}ds \epm \quad \text{and} \quad 
{\bf\Lambda}_{23}:=\bpm 0 & 0 & 0 \\ 0 & 1 & 0 \\ 0 & 0 & -1 \epm.
\endeq
Again the polynomials $H_m^{(r)}(\zeta)$ and the normalization constants 
$k_m^{(r)}$ are defined by \eqref{def-of-kr}.  The analysis in the 
supercritical regime leading to \eqref{super-R-error} 
applies here as well, leading to 
\eq
\label{sub-R-error}
{\bf R}(\zeta) = {\bf I} + \mathcal{O}\left(\frac{1}{\zeta}\right) + \mathcal{O}\left(\frac{r}{\zeta^{2k+2}}\right).
\endeq

\subsection{The subcritical error analysis}
Let $\mathbb{D}_\alpha$ and $\mathbb{D}_\beta$ be small, closed disks of 
fixed radii centered at $\alpha$ and $\beta$ that are 
bounded 
away from the outer lenses and $\mathbb{D}_{b^{\star}}$.  Orient the boundaries 
$\partial\mathbb{D}_\alpha$ and $\partial\mathbb{D}_\beta$ clockwise.  
Let ${\bf P}_{\mbox{Ai}}^{(\alpha)}$ and ${\bf P}_{\mbox{Ai}}^{(\beta)}$ 
be the Airy parametrices constructed in $\mathbb{D}_\alpha$ 
and $\mathbb{D}_\beta$, respectively (see Section \ref{super-error}).
Define the global parametrix ${\bf \Psi}^\infty(z)$ by
\eq
{\bf \Psi}^\infty(z):=
\begin{cases}
{\bf \Psi}(z), & z\notin \mathbb{D}_\alpha \cup \mathbb{D}_\beta \cup \mathbb{D}_{b^{\star}}, \\
{\bf \Psi}(z) {\bf R}(\zeta(z)), & z\in\mathbb{D}_{b^{\star}}, \\
{\bf P}_{\mbox{Ai}}^{(\alpha)}(z), & z\in\mathbb{D}_\alpha, \\
{\bf P}_{\mbox{Ai}}^{(\beta)}(z), & z\in\mathbb{D}_\beta.
\end{cases}
\endeq
The error matrix ${\bf E}(z)$ is given by
\eq
{\bf E}(z) := {\bf W}(z){\bf \Psi}^\infty(z)^{-1}.
\endeq
Let $\Gamma$ denote the contours given by the boundaries of the regions 
$\Omega_j$ in Figure \ref{sublens}.
The error matrix satisfies a Riemann-Hilbert problem with jump matrix 
${\bf V^{(E)}}(z)$ on the contours $\Gamma^{\bf(E)}$ shown in 
Figure \ref{sub-error-fig}.  The form of the jump matrix is as follows: \\
\indent
$\bullet$ For $z$ outside the disks $\mathbb{D}_\alpha$, $\mathbb{D}_\beta$, 
and $\mathbb{D}_{b^{\star}}$, and excluding the band $[\alpha,\beta]$:
\eq
\label{sub-VE-outside}
{\bf V^{(E)}}(z) =
{\bf \Psi}(z) {\bf V^{(W)}}(z) {\bf \Psi}(z)^{-1}, \quad z\in \Gamma\cap\left( \mathbb{D}_{\alpha} \cup \mathbb{D}_{\beta} \cup \mathbb{D}_{b^{\star}} \right)^c\cap[\alpha,\beta]^c,
\endeq
where ${\bf V^{(W)}}(z)$ is given by the formulas in (\ref{super-Wjumps}).\\
\indent
$\bullet$ For $z$ on the boundaries of the disks 
$\partial\mathbb{D}_\alpha$, 
$\partial\mathbb{D}_\beta$, and $\partial\mathbb{D}_{b^{\star}}$:
\eq
\label{sub-VE-boundaries}
{\bf V^{(E)}}(z) =
\begin{cases}
{\bf \Psi}(z){\bf R}(\zeta) {\bf \Psi}(z)^{-1}, & z\in\partial\mathbb{D}_{b^{\star}}, \\
{\bf P}_{\mbox{Ai}}^{(\alpha)}(z) {\bf \Psi}(z)^{-1}, & z\in\partial\mathbb{D}_\alpha, \\
{\bf P}_{\mbox{Ai}}^{(\beta)}(z) {\bf \Psi}(z)^{-1}, & z\in\partial\mathbb{D}_\beta.
\end{cases}
\endeq
\indent
$\bullet$ For $z$ inside the disk $\mathbb{D}_{b^{\star}}$:
\eq
\label{sub-VE-b*1}
{\bf V^{(E)}}(z) =
\begin{cases}
{\bf \Psi}(z) {\bf R}(\zeta)
\begin{pmatrix}
1 & e^{n\mathcal{P}_1(z)} & e^{n \mathcal{P}_2(z)} \\
0 & 1 & 0 \\
0 & 0 & 1 \end{pmatrix} {\bf R}(\zeta)^{-1} {\bf \Psi}(z)^{-1},
 & z \in (\partial\Omega_1\cap\partial\Omega_6)\cap\mathbb{D}_{b^{\star}}, \\
{\bf \Psi}(z) {\bf R}(\zeta)
\begin{pmatrix}
1 & e^{n\mathcal{P}_1(z)} & 0 \\
0 & 1 & 0 \\
0 & 0 & 1 \end{pmatrix} {\bf R}(\zeta)^{-1} {\bf \Psi}(z)^{-1},
 & z \in (\partial\Omega_2\cap\partial\Omega_5)\cap\mathbb{D}_{b^{\star}}.
\end{cases}
\endeq
\indent
$\bullet$ Furthermore, ${\bf V^{(E)}}(z)={\bf I}$ 
on the contours
$$[\alpha,\beta]\cap(\mathbb{D}_\alpha\cup\mathbb{D}_\beta)^c, \quad \Gamma\cap\mathbb{D}_\alpha, \quad \Gamma\cap\mathbb{D}_\beta, \quad (\partial\Omega_1\cap\partial\Omega_2)\cap\mathbb{D}_{b^{\star}}, \quad \text{and} \quad (\partial\Omega_5\cap\partial\Omega_6)\cap\mathbb{D}_{b^{\star}}.$$
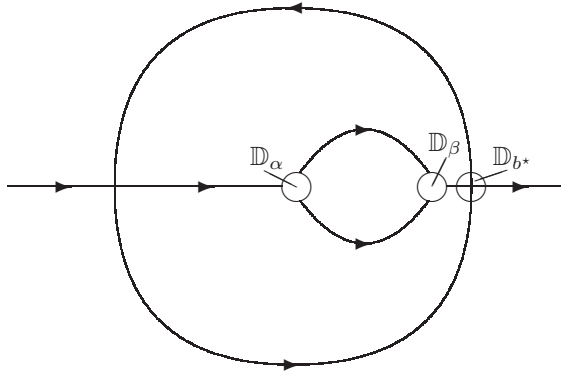
\begin{figure}
\setlength{\unitlength}{2.7pt}
\begin{center}
\begin{picture}(100,50)(-60,-25)
\put(-9.5,0){\circle{4}}
\put(-16,3){$\mathbb{D}_\alpha$}
\put(-14,2.25){\line(2,-1){4.5}}
\put(9.5,0){\circle{4}}
\put(9,5){$\mathbb{D}_\beta$}
\put(11.5,4){\line(-1,-2){2}}
\put(15,0){\circle{4}}
\put(18,3){$\mathbb{D}_{b^{\star}}$}
\put(20,2.25){\line(-3,-1){4.5}}
\put(-50,0){\line(1,0){38.5}}
\put(11.5,0){\line(1,0){16}}
\qbezier(-9,2)(0,14),(9,2)
\qbezier(-9,-2)(0,-14),(9,-2)
\qbezier(15,0)(15,25)(-10,25)
\qbezier(15,0)(15,-25)(-10,-25)
\qbezier(-35,0)(-35,25)(-10,25)
\qbezier(-35,0)(-35,-25)(-10,-25)
\thicklines
\put(-42,0){\vector(1,0){1}}
\put(-22,0){\vector(1,0){1}}
\put(22,0){\vector(1,0){1}}
\put(-10,-25){\vector(1,0){1}}
\put(-10,25){\vector(-1,0){1}}
\put(0,8){\vector(1,0){1}}
\put(0,-8){\vector(1,0){1}}
\end{picture}
\end{center}
\caption{The jump contours $\Gamma^{\bf(E)}$ for the Riemann-Hilbert problem for ${\bf E}(z)$ in the subcritical case.\label{sub-error-fig}}
\end{figure}
We now show that all of the jump matrices in 
\eqref{sub-VE-outside}--\eqref{sub-VE-b*1} are 
uniformly close to the identity as $n\to\infty$.  

%
%
%
%
%
%
Here are the results we will need for $P_1$, $P_2$, and $P_3$, analogous 
to Lemma \ref{P-nolog-lemma}:
\begin{lemma}
\label{sub-nolog-lemma}
In the subcritical regime, the inner and outer lenses can be chosen such 
that 
\begin{itemize}
\item[(a)] On the inner lenses outside of the disks around $\alpha$ and 
$\beta$:  The real part of $P_1(z)$ is positive and bounded away from zero 
 for 
$z\in[(\partial\Omega_2\cap\partial\Omega_3)\cup(\partial\Omega_4\cap\partial\Omega_5)]\cap(\mathbb{D}_\alpha\cup\mathbb{D}_\beta)^c$.
\item[(b)]  On the real axis outside of $[\alpha,\beta]$ and the disks 
around 
$\alpha$ and $\beta$:  The real part of $P_1(z)$ is negative and bounded away from zero for $z\in[(\partial\Omega_1\cap\partial\Omega_6)\cup(\partial\Omega_2\cap\partial\Omega_5)]\cap(\mathbb{D}_\alpha\cup\mathbb{D}_\beta)^c$.
\item[(c)]  On the outer lenses outside of the disk around $b^{\star}$:  For 
$\kappa$ sufficiently small, the real part of $P_3(z)$ is negative and 
bounded away from zero for 
$z\in[(\partial\Omega_1\cap\partial\Omega_2)\cup(\partial\Omega_5\cap\partial\Omega_6)]\cap\mathbb{D}_{b^{\star}}^c$.
\item[(d)]  On the real axis outside of the outer lenses or on the 
real axis inside 
$\mathbb{D}_{b^{\star}}$:  For $\kappa$ 
sufficiently small, the real part of $P_2(z)$ is negative and bounded away 
from zero for $z\in(\partial\Omega_1\cap\partial\Omega_6) \cup [(\partial\Omega_2\cap\partial\Omega_5)\cap\mathbb{D}_{b^{\star}}]$.
\end{itemize}
\end{lemma}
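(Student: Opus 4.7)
The plan is to follow closely the template of Lemma \ref{P-nolog-lemma} from the supercritical case, making the necessary adjustments for the subcritical geometry where $\Re P_3$ now has a nondegenerate minimum at $b^\star$ rather than $P_2$ having a maximum at $a^\star$.

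Parts (a) and (b) concern only the unmodified $P_1$ and are identical to the standard (unsourced) Deift--Kriecherbauer--McLaughlin--Venakides--Zhou steepest-descent analysis: by construction of $g$ and $l_1$, $\Re P_1$ vanishes on $[\alpha,\beta]$, is strictly negative on $\R\setminus[\alpha,\beta]$ with square-root vanishing at the endpoints (Assumption (vi)), and by the Cauchy--Riemann equations has strictly positive real part on any sufficiently small inner lenses transverse to the band. Thus I would simply cite Chapter 7 of \cite{Deift:1998-book}.

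The essential new ingredient is (c). By Definition \ref{defell2}, the constant $l_3=-l_1+l_2$ is chosen so $\Re P_3(b^\star)=0$; by Lemma \ref{lemmaP3}, $b^\star$ is the unique minimum of $\Re P_3$ on $\R\setminus(\alpha,\beta)$, and $P_3''(b^\star)=-g''(b^\star)>0$ by \eqref{11}. Since $\Re P_3$ is harmonic off $[\alpha,\beta]$, this nondegenerate minimum on the real axis is in fact a saddle point in $\C$, and locally
\begin{equation*}
\Re P_3(z)=\tfrac{1}{2}P_3''(b^\star)\,\Re\bigl((z-b^\star)^2\bigr)+O(|z-b^\star|^3),
\end{equation*}
so that $\Re P_3<0$ along the purely imaginary directions at $b^\star$. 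I would then choose the outer lenses precisely as in the supercritical proof: a circle centered below $\alpha$ whose rightmost intersection with the real axis is $b^\star$. Writing $P_3(z)=az-g(z)+l_3$ and $\Re g(z)=\int_\alpha^\beta \log|z-w|\,\rho_{\text{min}}(w)\,dw$, moving leftward along such a circle decreases $a\,\Re z$ monotonically, while $\log|z-w|$ increases for each fixed $w\in[\alpha,\beta]$ (since $w$ lies inside the circle and the distance from $w$ to the lens is maximized at the rightmost point). Hence $\Re P_3$ strictly decreases from the value $0$ as one leaves $b^\star$, which yields the desired bound on $\partial\mathbb{D}_{b^\star}^c$.

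For (d) I would split by region. On the component $\{x>b^\star\}\cap\mathbb{D}_{b^\star}^c$ of the real axis outside the outer lenses, the bound is the very content of Definition \ref{subcritical}: $P_2(x)<P_3(b^\star)=0$, combined with a compact continuity argument, gives negativity bounded away from zero. On the unbounded component $\{x\ll\alpha\}$, Assumption (iv) forces $V(x)-ax$ to outpace $\log(1+x^2)$, while $g(x)=\log|x|+O(x^{-1})$, so $\Re P_2\to-\infty$; continuity bridges the gap back to the lens. Finally, on $\R\cap\mathbb{D}_{b^\star}$ the estimate reduces to shrinking $\mathbb{D}_{b^\star}$ so that \eqref{condition-on-Db*} holds, which is possible because $P_2(b^\star)<P_3(b^\star)=0$ (again by Definition \ref{subcritical}) and $P_2$ is continuous. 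The main obstacle I anticipate is in (c), namely verifying globally along the lens --- not just in a neighborhood of $b^\star$ --- that $\Re P_3$ remains negative; this relies on the maximum principle for the logarithmic potential $\Re g$, and on choosing the lens within the transverse descent wedge of the saddle at $b^\star$.
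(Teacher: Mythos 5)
Your proposal follows the same overall template as the paper's proof: (a), (b) are cited from standard steepest-descent theory (the paper does exactly this), and (c) invokes the same circle-lens construction and the same descent-line argument for $\Re P_3$ that the paper inherits from Lemma \ref{P-nolog-lemma}(d). The extra remark about the local saddle structure of $\Re P_3$ at $b^\star$ is correct and gives useful geometric intuition, but it is not needed: the descent-line argument along the lens already yields the strict negativity away from $b^\star$, without appealing to a local Morse-type expansion. Parts of (d) are also in agreement with the paper (the right ray via Definition \ref{subcritical}, the disk $\mathbb{D}_{b^\star}$ via \eqref{condition-on-Db*} and $P_2(b^\star)<0$).

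Where your argument for (d) is loose is the left ray of $\partial\Omega_1\cap\partial\Omega_6$. You establish $\Re P_2\to-\infty$ as $x\to-\infty$ from Assumption (iv) and then assert ``continuity bridges the gap back to the lens.'' Continuity alone gives only a bound, not a sign: you still need to know $\Re P_2<0$ on the compact piece from the leftmost real point of the lens out to where the asymptotic decay kicks in. In the paper this is handled cleanly and without any extra lens-size condition: since (c) gives $\Re P_3<0$ at the leftmost point of the lens and, by Lemma \ref{lemmaP3}, $\Re P_3$ is strictly increasing on $(-\infty,\alpha)$, one has $\Re P_3(z)<0$ for \emph{all} $z$ to the left of the lens; combining with $\Re P_1(z)<0$ (standard, outside $[\alpha,\beta]$) gives $\Re P_2=\Re(P_1+P_3)<0$ there directly. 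Either adopt this $P_3$-monotonicity argument, or else state explicitly (as the paper does in the lens construction you are importing) that the circle is chosen large enough that $\Re P_2<0$ on the real axis to its left; as written, the ``continuity'' step is a genuine gap, albeit one that is immediately repairable.
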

\begin{proof}
Parts (a) and (b) follow from the analysis of the Riemann-Hilbert problem 
for the standard (non-multiple) orthogonal polynomials 
(for example, \cite{Deift:1998-book}).  
\\
\indent
For (c), first note $P_3(b^{\star})=0$.  Following the proof of Lemma 
\ref{P-nolog-lemma}(d), 
the outer lenses (defined in this regime to be a 
circle centered below $\alpha$ and passing through $b^{\star}$, that is big enough such that $\Re P_2$ is negative on the real axis to the left of the circle) are descent lines of 
${\rm Re}\,P_3(z)$ for $\kappa$ sufficiently small.  The result follows. 
\\
\indent
For (d), start with $\kappa=0$.  Consider real $z$ to the left of the outer 
lenses.  From Lemma \ref{sub-nolog-lemma}(c), ${\rm Re}\,P_3<0$ 
at the left-most point of the outer lenses.  Thus ${\rm Re}\,P_3(z)<0$ 
for such $z$ 
since ${\rm Re}\,P_3$ is a strictly increasing function for 
$z\in(-\infty,\alpha)$.  Since ${\rm Re}\,P_1(z)$ is also negative here 
by construction, this means  
${\rm Re}\,P_2(z)={\rm Re}(P_1(z)+P_3(z))$ is also 
negative.  This is also true for real $z$ inside $\mathbb{D}_{b^{\star}}$ by 
\eqref{condition-on-Db*}.  
Next, consider real $z$ to the right of the outer lenses.  
By Definition \ref{subcritical}
 we have 
${\rm Re}\,P_2(z)<{\rm Re}\,P_3(b^{\star})=0$.   
Along with the fact that ${\rm Re}\,P_1(z)<0$ here shows the desired 
result.  
\end{proof}
Next come the necessary results for $\mathcal{P}_1(z;\kappa)$, 
$\mathcal{P}_2(z;\kappa)$, and $\mathcal{P}_3(z;\kappa)$ defined by 
\eqref{sub-mathcalP1}--\eqref{sub-mathcalP3}.  This lemma is analogous 
to Lemma \ref{super-Pmathcal-lemma} for the supercritical regime.
\begin{lemma}
\label{sub-Pmathcal-lemma}
For $\kappa$ sufficiently small:  
\begin{itemize}
\item[(a)] On the two inner lenses outside of 
the disks around $\alpha$ and $\beta$:  The real part of 
${\mathcal P}_1(z;\kappa)$ is positive and bounded away from zero for
$z\in[(\partial\Omega_2\cap\partial\Omega_3)\cup(\partial\Omega_4\cap\partial\Omega_5)]\cap(\mathbb{D}_\alpha\cup\mathbb{D}_\beta)^c$.
\item[(b)]  On the real axis outside of 
$[\alpha,\beta]$ and the disks around 
$\alpha$, $\beta$, and $b^{\star}$:  The real part of 
${\mathcal P}_1(z;\kappa)$ is negative and bounded away from zero for 
$z\in[(\partial\Omega_1\cap\partial\Omega_6)\cup(\partial\Omega_2\cap\partial\Omega_5)]\cap(\mathbb{D}_\alpha\cup\mathbb{D}_\beta\cup\mathbb{D}_{b^{\star}})^c$.
\item[(c)] On the outer lenses outside of $\mathbb{D}_{b^{\star}}$:  The real 
part of $\mathcal{P}_3(z;\kappa)$ is negative and bounded away from zero 
for
$z\in[(\partial\Omega_1\cap\partial\Omega_2)\cup(\partial\Omega_5\cap\partial\Omega_6)]\cap\mathbb{D}_{b^{\star}}^c.$
\item[(d)] On the real axis outside of the outer lenses and 
$\mathbb{D}_{b^{\star}}$:  The real part of 
$\mathcal{P}_2(z;\kappa)$ is negative and bounded away from zero for 
$z\in(\partial\Omega_1\cap\partial\Omega_6)\cap\mathbb{D}_{b^{\star}}^c$.
\item[(e)] On the real axis inside $\mathbb{D}_{b^{\star}}$:  The real parts of 
$$\mathcal{P}_1(z;\kappa)-\kappa g_H(\zeta)+\kappa\log\zeta + \kappa\sum_{j=1}^k\frac{c_j^{(H)}}{\zeta^{2j}}  \quad \text{and} \quad \mathcal{P}_2(z;\kappa) + \kappa g_H(\zeta) - \kappa\log\zeta - \kappa\sum_{j=1}^k\frac{c_j^{(H)}}{\zeta^{2j}}$$ 
are negative and bounded away from zero for 
$z\in[(\partial\Omega_1\cap\partial\Omega_6)\cup(\partial\Omega_2\cap\partial\Omega_5)]\cap\mathbb{D}_{b^{\star}}$.
\end{itemize}
\end{lemma}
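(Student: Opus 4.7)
The plan is to treat parts (a)--(d) as routine perturbation statements and devote the real work to part (e), which is where the singular behavior of $\delta V$ at $b^\star$ must be absorbed.

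For parts (a)--(d): Proposition \ref{propdeform-sub} gives that $\mathfrak g(z;\kappa) \to g(z)$ uniformly on closed subsets of $\C\setminus[\alpha,\beta]$, and Theorem \ref{confchange-sub} gives $\vec\delta(\kappa)=\mathcal O(\kappa)$, so on any fixed contour bounded away from $b^\star$ the modification $\delta V(z;\kappa)=\kappa\log(z-b^\star)+\kappa\sum_j \delta_j/(2(z-b^\star)^j)$ is uniformly $\mathcal O(\kappa)$; combined with the analyticity of $\ell_1(\kappa,\vec\delta)$ this yields $\mathcal P_j(z;\kappa)=P_j(z)+\mathcal O(\kappa)$ uniformly on the contours appearing in (a)--(d). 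The strict, uniformly-bounded-away-from-zero sign inequalities of Lemma \ref{sub-nolog-lemma} therefore persist for $\kappa$ small. On the noncompact real tails appearing in (b) and (d), Assumption \ref{assumptionAV}(iv) together with the merely logarithmic growth of $\delta V$ at infinity delivers the uniform estimate.

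For part (e): Here $\delta V$ is singular in $\mathbb D_{b^\star}$, so $\mathcal P_j$ is not a small perturbation of $P_j$ and we must use the identity \eqref{312} of Theorem \ref{confchange-sub} to absorb the singularities. Solving \eqref{312} for $\delta V$ and using $\mathcal P_3 = az-\mathfrak g+2\delta V+l_2-\ell_1$ gives
\begin{equation*}
2\delta V - 2\kappa\log\zeta - 2\kappa\sum_{j=1}^{K}\frac{c_j^{(H)}}{\zeta^{j}} = -\frac{\kappa}{2}(\zeta-\zeta_0)^2 + \kappa\log\kappa + \mathfrak b - \bigl(az-\mathfrak g+l_2-\ell_1\bigr),
\end{equation*}
whose right-hand side is a regular function of $z$ on $\mathbb D_{b^\star}$. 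Substituting this into $\mathcal P_1 = -V-\delta V+2\mathfrak g+\ell_1$ (respectively $\mathcal P_2 = -V+\delta V+az+\mathfrak g+l_2$), and using $\mathfrak g=g+\mathcal O(\kappa)$, $\ell_1(\kappa)=\ell_1(0)+\mathcal O(\kappa)$, the two expressions in (e) reduce to
\begin{equation*}
\tfrac{1}{2}P_2(z) + \tfrac{\kappa}{4}(\zeta-\zeta_0)^2 + \mathcal O(\kappa\log\kappa) \quad\text{and}\quad \tfrac{1}{2}P_2(z) - \tfrac{\kappa}{4}(\zeta-\zeta_0)^2 + \mathcal O(\kappa\log\kappa),
\end{equation*}
up to an additive $\kappa g_H(\zeta)-\kappa\log\zeta-\kappa\sum c_j^{(H)}/\zeta^{2j}$ correction which, by the asymptotic expansion \eqref{super-cH-def}, is $\mathcal O(\kappa\zeta^{-K-1})$ and hence uniformly bounded on the real axis in $\mathbb D_{b^\star}$.

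The decisive sign comes from the factor $i$ in \eqref{sub-zeta}: for real $z\in\mathbb D_{b^\star}$, $\zeta$ is purely imaginary (to leading order), and since $\zeta_0=\mathcal O(\sqrt\kappa)$, $\Re(\zeta-\zeta_0)^2\le 0$ on the real axis near $b^\star$. Consequently $\pm\tfrac{\kappa}{4}\Re(\zeta-\zeta_0)^2$ is bounded by a small constant independent of the sign, and the conclusion follows from the strict inequality $\Re P_2<0$ on $\mathbb D_{b^\star}$ guaranteed by \eqref{condition-on-Db*}, with the $\mathcal O(\kappa\log\kappa)$ remainder harmless for small $\kappa$. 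The main obstacle is precisely this sign accounting: because (e) contains the combination with $+\kappa g_H$ for $\mathcal P_2$ and $-\kappa g_H$ for $\mathcal P_1$, one must verify that the Gaussian contribution $\pm\tfrac{\kappa}{4}(\zeta-\zeta_0)^2$ does not have the wrong sign on the relevant portion of the real axis, and this is exactly what the $i$-rescaling in \eqref{sub-zeta} arranges.
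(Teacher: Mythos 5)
Your handling of parts (a)--(d) is correct and matches the paper: perturbation of $P_j$ via Proposition \ref{propdeform-sub} plus boundedness of $\delta V$ away from $b^\star$. Part (e) is where the substance lies, and your proposal contains a genuine gap.

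You correctly identify that one must substitute the identity \eqref{346} (equivalently \eqref{312}) into $\mathcal P_1$ and $\mathcal P_2$ to absorb the singular parts, and you correctly identify the $i$-rescaling in \eqref{sub-zeta} as the mechanism that makes $\Re(\zeta-\zeta_0)^2 \le 0$ on the real axis. But the reduction you write down is wrong. Carrying out the substitution carefully (using $\mathcal P_1 = -V-\delta V + 2\mathfrak g + \ell_1$ and $\mathcal P_2 = -V + \delta V + az + \mathfrak g + l_2$ with the subcritical sign convention $\widetilde V = V + \delta V$), one finds
\begin{align*}
\mathcal P_1 - \kappa g_H + \kappa\log\zeta + \kappa\textstyle\sum_j\frac{c_j^{(H)}}{\zeta^j} &= \tfrac12\bigl(P_1 + P_2\bigr) + \tfrac{\kappa}{4}(\zeta-\zeta_0)^2 + \mathcal O(\kappa\ln\kappa),\\
\mathcal P_2 + \kappa g_H - \kappa\log\zeta - \kappa\textstyle\sum_j\frac{c_j^{(H)}}{\zeta^j} &= \tfrac12\bigl(P_1 + P_2\bigr) - \tfrac{\kappa}{4}(\zeta-\zeta_0)^2 + \mathcal O(\kappa\ln\kappa),
\end{align*}
not $\tfrac12 P_2 \pm \tfrac{\kappa}{4}(\zeta-\zeta_0)^2$ as you state. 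More seriously, your claim that $\pm\tfrac{\kappa}{4}\Re(\zeta-\zeta_0)^2$ is ``bounded by a small constant'' is false. Since $\zeta = \rho/(i\sqrt\kappa)$ with $\rho$ real on the real axis, $\tfrac{\kappa}{4}(\zeta-\zeta_0)^2 \approx -\tfrac14\rho^2$, which on a fixed-radius disk $\mathbb D_{b^\star}$ ranges over an $\mathcal O(1)$ interval, not a small one. This means the $+$ sign case ($\mathcal P_2$ combination) produces an $\mathcal O(1)$ \emph{positive} contribution $-\tfrac{\kappa}{4}\Re(\zeta-\zeta_0)^2 \ge 0$, so the negativity is not obvious from the $i$-rescaling alone. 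The way the argument actually closes is that $\tfrac{\kappa}{2}(\zeta-\zeta_0)^2 \approx -\rho^2/\cdot \approx -2P_3(z)$ (since $\tfrac12\rho^2 \to P_3$ as $\kappa\to 0$ by Theorem \ref{rhoprop-sub}), so that $\tfrac12(P_1+P_2) + \tfrac{\kappa}{4}(\zeta-\zeta_0)^2 \approx \tfrac12(P_1+P_2-P_3) = P_1$ and $\tfrac12(P_1+P_2) - \tfrac{\kappa}{4}(\zeta-\zeta_0)^2 \approx \tfrac12(P_1+P_2+P_3) = P_2$. The lemma then follows from \emph{both} Lemma \ref{sub-nolog-lemma}(b) ($\Re P_1 < 0$) \emph{and} (d) ($\Re P_2 < 0$), as the paper cites --- your argument invokes only $\Re P_2 < 0$ via \eqref{condition-on-Db*} and therefore cannot handle the first of the two expressions.

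A secondary issue: your remark that the leftover ``$\kappa g_H(\zeta) - \kappa\log\zeta - \kappa\sum c_j^{(H)}/\zeta^{2j}$ correction'' is $\mathcal O(\kappa\zeta^{-K-1})$ and hence uniformly bounded on the real axis in $\mathbb D_{b^\star}$ is confused. After the substitution of \eqref{346}, the $\kappa\log\zeta + \kappa\sum c_j^{(H)}/\zeta^j$ terms cancel identically against the matching terms in $\delta V$; only $\mp\kappa g_H(\zeta)$ survives, and that term is $\mathcal O(\kappa\ln\kappa)$ because $g_H$ is analytic at $\zeta=0$ and grows like $\ln\zeta$ at $\zeta = \mathcal O(\kappa^{-1/2})$. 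The estimate from \eqref{super-cH-def} is a large-$\zeta$ asymptotic expansion and is not a uniform bound near $\zeta = 0$, and in any case the sign of your $\sum$ does not match the combination that actually appears in \eqref{super-cH-def}.
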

\begin{proof}
Parts (a) and (b) come from Proposition \ref{propdeform-sub} along with the fact that 
$\log(z-b^{\star})$ and $(z-b^{\star})^{-j}$, $j=1,\dots,2k$, 
is bounded outside of $\mathbb{D}_{b^{\star}}$.  Part (c) comes from 
combining Lemma \ref{sub-nolog-lemma}(c) with Proposition  \ref{propdeform-sub}  and the 
boundedness of $\log(z-b^{\star})$ and $(z-b^{\star})^{-j}$, 
$j=1,\dots,2k$.

  Part (d) follows from Lemma \ref{sub-nolog-lemma}(d), 
Proposition \ref{propdeform-sub} , and the boundedness of 
$\log(z-b^{\star})$ and $(z-b^{\star})^{-j}$, $j=1,\dots 2k$.

Finally, part (e) comes from (b) and (d) of Lemma 
\ref{sub-nolog-lemma} along with \eqref{sub-zeta} and Proposition \ref{propdeform-sub}.
\end{proof}
These results allow us to now bound the jumps ${\bf V^{(E)}}$ of the 
error problem.  Divide $\Gamma^{\bf(E)}$ into a compact component $\Gamma_C^{\bf(E)}$ and a 
noncompact component $\Gamma_N^{\bf(E)}$:
\eq
\begin{split}
\Gamma_C^{\bf(E)}:=&\partial\mathbb{D}_\alpha\cup\partial\mathbb{D}_\beta\cup\partial\mathbb{D}_{b^{\star}}\cup(\Gamma\cap\mathbb{D}_{b^{\star}}),\\
\Gamma_N^{\bf(E)}:=&\Gamma^{\bf(E)}\backslash\Gamma_C^{\bf(E)}.
\end{split}
\endeq
\begin{lemma}
\label{sub-VE-lemma}
In the subcritical regime, for large $n$:
\begin{itemize}
\item[(a)]  Outside the disks $\mathbb{D}_\alpha$, $\mathbb{D}_\beta$, 
and $\mathbb{D}_{b^{\star}}$:  There is a constant $c>0$ such that 
$${\bf V^{(E)}}(z;\kappa) = {\bf I} + \mathcal{O}(e^{-cn}), \quad z\in\Gamma_N^{\bf(E)}.$$
\item[(b)]  On the boundary of $\mathbb{D}_{b^{\star}}$:  
$${\bf V^{(E)}}(z;\kappa) = {\bf I} + \mathcal{O}\left(\ds n^{-(1-\gamma)/2}\right) + \mathcal{O}\left(\ds n^{-k-1+(k+2)\gamma} \right), \quad  z\in\partial\mathbb{D}_{b^{\star}}.$$
\item[(c)]  On the boundaries of $\mathbb{D}_\alpha$ and $\mathbb{D}_\beta$:  
$${\bf V^{(E)}}(z;\kappa) = {\bf I} + \mathcal{O}\left(\frac{1}{n}\right), \quad z\in\partial\mathbb{D}_\alpha\cup\partial\mathbb{D}_\beta.$$
\item[(d)]  Inside $\mathbb{D}_{b^{\star}}$:  There is a constant $c>0$ such that 
$${\bf V^{(E)}}(z;\kappa) = {\bf I} + \mathcal{O}\left(e^{-cn}\right), \quad z\in\Gamma\cap\mathbb{D}_{b^{\star}}.$$
\end{itemize}
\end{lemma}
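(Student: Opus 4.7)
The plan is to mirror closely the argument of Lemma \ref{super-VE-lemma}, exploiting that the Riemann--Hilbert problem for $\mathbf{W}(z)$ here has the same structure as in the supercritical case with $b^\star$ and $\mathcal{P}_3$ playing the roles of $a^\star$ and $\mathcal{P}_2$. Throughout, I would use the uniform boundedness of $\boldsymbol{\Psi}(z)$ (and of $\boldsymbol{\Psi}(z)^{-1}$) away from $\alpha,\beta$, together with Lemma \ref{sub-Pmathcal-lemma} to control each block of the jump matrix.

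For \textbf{(a)}, the jumps ${\bf V^{(E)}}(z) = \boldsymbol{\Psi}(z)\,{\bf V^{(W)}}(z)\,\boldsymbol{\Psi}(z)^{-1}$ on $\Gamma_N^{\bf(E)}$ fall into three types visible in \eqref{super-Wjumps}: off-diagonal entries $e^{\pm n\mathcal{P}_1}$ on the inner lenses, $e^{n\mathcal{P}_2}$ on the real axis outside the outer lenses, and $e^{n\mathcal{P}_3}$ on the outer lenses. By Lemma \ref{sub-Pmathcal-lemma}(a)--(d) each of $\Re\mathcal{P}_1$ (on the relevant lenses), $\Re\mathcal{P}_2$ and $\Re\mathcal{P}_3$ is bounded away from zero with the correct sign, so each exponential decays like $e^{-cn}$ uniformly, giving the claimed bound after conjugating by $\boldsymbol{\Psi}$.

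For \textbf{(b)}, on $\partial\mathbb{D}_{b^\star}$ we have ${\bf V^{(E)}} = \boldsymbol{\Psi}\,{\bf R}(\zeta)\,\boldsymbol{\Psi}^{-1}$. Using \eqref{sub-zeta-order}, the quantity $\zeta$ on the boundary is of order $n^{(1-\gamma)/2}$, and inserting this into the estimate \eqref{sub-R-error} (with $r = Cn^\gamma$) yields
\begin{equation*}
{\bf R}(\zeta) = {\bf I} + \mathcal{O}\!\left(n^{-(1-\gamma)/2}\right) + \mathcal{O}\!\left(n^{\gamma - (K+1)(1-\gamma)/2}\right),
\end{equation*}
which matches the stated bound (the constant $k$ in the statement should read as $K$). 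Conjugation by the bounded $\boldsymbol{\Psi}$ does not affect the order. Part \textbf{(c)} follows from the defining property of the Airy parametrices, exactly as in the supercritical case, and is standard.

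Part \textbf{(d)} is the main obstacle: the factors $\zeta^{\pm r}$ and $\exp(\pm r\sum c_j^{(H)}/\zeta^{j})$ appearing in the factorization \eqref{sub-R} are individually singular as $\zeta\to 0$, so one cannot bound them block by block. The trick, just as in \eqref{VE-bound-Gamma-a*}, is to collapse the two singular conjugations against the only off-diagonal exponential in ${\bf V^{(W)}}$ (here $e^{n\mathcal{P}_1}$ on $\partial\Omega_1\cap\partial\Omega_6$, or $e^{n\mathcal{P}_1}$ and $e^{n\mathcal{P}_2}$ on $\partial\Omega_2\cap\partial\Omega_5$, according to \eqref{sub-VE-b*1}). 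After pushing $\zeta^{-r{\bf\Lambda}_{23}}\exp((r\ell_H/2 - r\sum c_j^{(H)}/\zeta^{j}){\bf\Lambda}_{23})$ into the unit upper-triangular factor, the exponent in each off-diagonal entry becomes
\begin{equation*}
n\mathcal{P}_1(z;\kappa) \pm r g_H(\zeta) \mp r\log\zeta \mp r\sum_{j=1}^{K}\frac{c_j^{(H)}}{\zeta^{j}},\qquad
n\mathcal{P}_2(z;\kappa) + r g_H(\zeta) - r\log\zeta - r\sum_{j=1}^{K}\frac{c_j^{(H)}}{\zeta^{j}},
\end{equation*}
each of which has strictly negative real part bounded away from zero by Lemma \ref{sub-Pmathcal-lemma}(e). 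The remaining sandwich is ${\bf H}_{23}$ and its inverse conjugated by $\boldsymbol{\Psi}$, both uniformly bounded in $\mathbb{D}_{b^\star}$ via the Hermite asymptotics analogous to \eqref{Hasympt}. This produces the $\mathcal{O}(e^{-cn})$ bound claimed in (d).
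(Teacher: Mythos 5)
Your proposal tracks the paper's proof almost verbatim: parts (a)--(c) by the same citations to Lemma \ref{sub-Pmathcal-lemma}, \eqref{sub-zeta-order}, \eqref{sub-R-error}, and the Airy parametrix construction; part (d) by the same factorization trick of conjugating the singular pieces of ${\bf R}$ into the off-diagonal exponentials and invoking Lemma \ref{sub-Pmathcal-lemma}(e). Two small clerical points. First, in part (d) you swapped the two cases of \eqref{sub-VE-b*1}: the jump on $(\partial\Omega_1\cap\partial\Omega_6)\cap\mathbb{D}_{b^\star}$ carries \emph{both} $e^{n\mathcal P_1}$ and $e^{n\mathcal P_2}$ (so the double-entry bound \eqref{VE-bound-Gamma-b*1} applies there), while $(\partial\Omega_2\cap\partial\Omega_5)\cap\mathbb{D}_{b^\star}$ carries only $e^{n\mathcal P_1}$; you state the reverse, and ``the only off-diagonal exponential'' contradicts your own list. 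Second, your claim that ${\bf H}_{23}$ and its inverse are ``uniformly bounded'' is not literally true on its own; what is bounded is the combination $e^{-\frac{r}{2}\ell_H\Lambda_{23}}{\bf H}_{23-}(\zeta)\,e^{-r(g_H(\zeta)-\ell_H/2)\Lambda_{23}}$ appearing in \eqref{VE-bound-Gamma-b*1}--\eqref{VE-bound-Gamma-b*2}, by the Hermite/Plancherel--Rotach asymptotics \eqref{Hasympt}; this is the same point the paper relies on, so it is worth stating precisely. The $k$-versus-$K$ ambiguity you flag in part (b) is indeed present in the paper's own notation (cf. the exponent $2k+2$ in \eqref{sub-R-error} versus $K+1$ in \eqref{super-R-error}), and your power of $n$ is the one consistent with the supercritical analogue; neither choice affects Lemma \ref{sub-E}, since in both cases the term is subdominant to $n^{-(1-\gamma)/2}$ once $K$ satisfies \eqref{sub-k-def}.
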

\begin{proof}
Part (a) is the result of Lemma \ref{sub-Pmathcal-lemma}(a)--(d) and the 
boundedness of ${\bf \Psi}(z)$.  Part (b) is from  
\eqref{sub-zeta-order}, \eqref{sub-R-error}, and the 
boundedness of ${\bf \Psi}(z)$.  Part (c) is from the construction 
of the parametrices ${\bf P}_\text{\bf Ai}^{(\alpha)}(z)$ and 
${\bf P}_\text{\bf Ai}^{(\beta)}(z)$ (see, for instance, 
\cite{Deift:1999b}).  
\\
\indent
For part (d), consider the jumps \eqref{sub-VE-b*1}.  
By \eqref{sub-R} for ${\bf R}(\zeta)$,
\eq
\label{VE-bound-Gamma-b*1}
\begin{split}
{\bf V^{(E)}}(z;\kappa) = & {\bf \Psi}(z) e^{-\frac{r}{2}\ell_H{\bf\Lambda}_{23}} {\bf H}_{23-}(\zeta) e^{-r(g_H(\zeta)-\frac{\ell_H}{2}){\bf\Lambda}_{23}} \bpm 1 & (*)_{12} & (*)_{13} \\ 0 & 1 & 0 \\ 0 & 0 & 1 \epm \times \\
& \times e^{r(g_H(\zeta)-\frac{\ell_H}{2}){\bf\Lambda}_{23}}{\bf H}_{23-}(\zeta)^{-1}e^{\frac{r}{2}\ell_H{\bf\Lambda}_{23}}{\bf \Psi}(z)^{-1}, \quad z\in(\partial\Omega_1\cap\partial\Omega_6)\cap\mathbb{D}_{b^{\star}}, 
\end{split}
\endeq
and
\eq
\label{VE-bound-Gamma-b*2}
\begin{split}
{\bf V^{(E)}}(z;\kappa) = & {\bf \Psi}(z) e^{-\frac{r}{2}\ell_H{\bf\Lambda}_{23}} {\bf H}_{23-}(\zeta) e^{-r(g_H(\zeta)-\frac{\ell_H}{2}){\bf\Lambda}_{23}} \bpm 1 & (*)_{12} & 0 \\ 0 & 1 & 0 \\ 0 & 0 & 1 \epm \times \\
& \times e^{r(g_H(\zeta)-\frac{\ell_H}{2}){\bf\Lambda}_{23}}{\bf H}_{23-}(\zeta)^{-1}e^{\frac{r}{2}\ell_H{\bf\Lambda}_{23}}{\bf \Psi}(z)^{-1}, \quad z\in(\partial\Omega_2\cap\partial\Omega_5)\cap\mathbb{D}_{b^{\star}}, 
\end{split}
\endeq
wherein
\eq
\label{sub-VE-entries}
\begin{split}
(*)_{12} = & \exp\left(n\mathcal{P}_1(z;\kappa) - rg_H(\zeta) + r\log\zeta + r\sum_{j=1}^k\frac{c_j^{(H)}}{\zeta^{2j}}\right), \\
(*)_{13} = & \exp\left(n\mathcal{P}_2(z;\kappa)+rg_H(\zeta)-r\log\zeta - r\sum_{j=1}^k\frac{c_j^{(H)}}{\zeta^{2j}}\right).
\end{split}
\endeq
This along with Lemma \ref{sub-Pmathcal-lemma}(e) and the boundedness of 
${\bf \Psi}(z)$ in $\mathbb{D}_{b^{\star}}$ gives the result (d).
\end{proof}
We can now show that ${\bf E}(z)$ is asymptotically close to the identity.  
The proof of the following lemma follows that of Lemma \ref{super-E}:
\begin{lemma}
\label{sub-E}
In the subcritical regime, for $n$ large,
$${\bf E}(z) = {\bf I} + \mathcal{O}\left(n^{-(1-\gamma)/2}\right)$$
uniformly in $z$.
\end{lemma}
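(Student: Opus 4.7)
The plan is to mirror the proof of Lemma \ref{super-E} almost verbatim, using the estimates for $\mathbf{V}^{(\mathbf{E})}$ just established in Lemma \ref{sub-VE-lemma}. First I would split $\Gamma^{(\mathbf{E})}$ into its compact and noncompact parts, $\Gamma_C^{(\mathbf{E})}$ and $\Gamma_N^{(\mathbf{E})}$. From parts (b)--(d) of Lemma \ref{sub-VE-lemma}, for $z \in \Gamma_C^{(\mathbf{E})}$ we have
\eq
\mathbf{V}^{(\mathbf{E})}(z) = \mathbf{I} + \mathcal{O}\!\left(n^{-(1-\gamma)/2}\right) + \mathcal{O}\!\left(n^{-k-1+(k+2)\gamma}\right),
\endeq
while on $\Gamma_N^{(\mathbf{E})}$ the jump is exponentially close to $\mathbf{I}$ by part (a).

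Next I would check that the requirement \eqref{sub-k-def} on $K$ ensures the second error term on $\Gamma_C^{(\mathbf{E})}$ is dominated (or at worst matched) by the first: the inequality $K \geq (3\gamma-1)/(1-\gamma)$ is exactly the condition under which $n^{-k-1+(k+2)\gamma} \leq n^{-(1-\gamma)/2}$, so the compact-contour bound simplifies to $\mathbf{V}^{(\mathbf{E})}(z) - \mathbf{I} = \mathcal{O}(n^{-(1-\gamma)/2})$. Combining this with the exponential decay on $\Gamma_N^{(\mathbf{E})}$, I obtain, for some constant $c>0$ and $n$ sufficiently large,
\eq
\|\mathbf{V}^{(\mathbf{E})} - \mathbf{I}\|_{L^2(\Gamma^{(\mathbf{E})})} + \|\mathbf{V}^{(\mathbf{E})} - \mathbf{I}\|_{L^\infty(\Gamma^{(\mathbf{E})})} \leq c\, n^{-(1-\gamma)/2}.
\endeq

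Finally I would invoke the standard small-norm theory for Riemann--Hilbert problems: writing $\mathbf{E}(z) = \mathbf{I} + \mathcal{C}_-[\mathbf{E}_-(\mathbf{V}^{(\mathbf{E})} - \mathbf{I})](z)$ in terms of the Cauchy operator on $\Gamma^{(\mathbf{E})}$ and solving by Neumann series (as in \cite{Deift:1999b} Section 7.2 or \cite{Ercolani:2003} Section 3.5), the bound on the jump controls $\mathbf{E}_- - \mathbf{I}$ in $L^2$, and the uniform estimate $\mathbf{E}(z) = \mathbf{I} + \mathcal{O}(n^{-(1-\gamma)/2})$ then follows from a standard Cauchy-integral estimate away from $\Gamma^{(\mathbf{E})}$. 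The main (though minor) obstacle is verifying the dominance of the $n^{-(1-\gamma)/2}$ term over the $n^{-k-1+(k+2)\gamma}$ term under the choice \eqref{sub-k-def}; everything else is a verbatim application of the argument already carried out for the supercritical regime.
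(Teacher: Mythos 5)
Your proposal is correct and follows essentially the same route as the paper, whose entire proof of Lemma~\ref{sub-E} is the single remark that it ``follows that of Lemma~\ref{super-E}'' --- you have simply filled in the $L^2/L^\infty$ estimates and the small-norm/Neumann-series step that the paper leaves implicit. One small imprecision: the condition $K\geq (3\gamma-1)/(1-\gamma)$ from \eqref{sub-k-def} is not ``exactly'' the condition for $n^{-k-1+(k+2)\gamma}\leq n^{-(1-\gamma)/2}$ (a direct computation gives the weaker requirement $k\geq (3\gamma-1)/(2(1-\gamma))$), but since the stated condition is strictly stronger it still guarantees the dominance you need, so the conclusion is unaffected.
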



\subsection{The subcritical kernel and proof of Theorem \ref{theorem-sub-kernel}}

\begin{proof}[Proof of Theorem \ref{theorem-sub-kernel}]
Once again, recall the kernel (\ref{mop-kernel}):
\eq
K_n(x, y) = \frac{e^{-\frac{1}{2} n ( V(x) + V(y) )}}{2\pi i(x-y)} \left( \left[ {\bf Y}(y)^{-1} {\bf Y}(x) \right]_{21} + e^{n a y} \left[ {\bf Y}(y)^{-1}{\bf Y}(x)\right]_{31} \right).
\endeq
While the function ${\bf Y}(z)$ has a 
jump for $z\in\mathbb{D}_{b^{\star}}$, the first column of ${\bf Y}(z)$ does not 
(see the 
Riemann-Hilbert problem \eqref{rhp}); observing the Riemann Hilbert Problem  for ${\bf Y}^{-1}$ we also note that  the second and third {\em rows} of ${\bf Y}^{-1}$ are entire functions.  Therefore we can pick $x$ and $y$ to 
be in a convenient region.  We choose $x$ and $y$ to be in $\Omega_1$ as 
defined in Figure \ref{sublens}.  

From the transformation \eqref{super-Y-to-W} (which is the same in the subcritical case as noted at the beginning of Section \ref{sub-steepest}) and using the new definitions \eqref{sub-mathcalP1}, \eqref{sub-mathcalP2}, \eqref{sub-mathcalP3}, we see that 
\eq
\label{Yinv-Y21}
\left[{\bf Y}(y)^{-1}{\bf Y}(x)\right]_{21} = \left[{\bf W}(y)^{-1}{\bf W}(x)\right]_{21}\exp\left(n\left({\mathfrak g}(y)+{\mathfrak g}(x)
{-\delta V(y)} 
+\ell_1\right)\right),
\endeq
\eq
\label{Yinv-Y31}
\left[{\bf Y}(y)^{-1}{\bf Y}(x)\right]_{31} = \left[{\bf W}(y)^{-1}{\bf W}(x)\right]_{31}\exp\left(n\left({\mathfrak g}(x)+
{ \delta V(y)} 
+
{l_2-\eta}\right)\right)
\endeq
for $x$ and $y$ in $\Omega_1$.  As in the supercritical case, we have 
\eq
{\bf W}(z) ={\left ({\bf I} + \mathcal{O}\left(n^{-(1-\gamma)/2}\right)\right)} {\bf \Psi}(z){\bf R}(\zeta(z))
\endeq
and
\eq
\label{sub-PsiPsiinf}
{\bf \Psi}(y)^{-1}{\bf \Psi}(x) = {\bf I} + \mathcal{O}\left((\zeta_x-\zeta_y)\kappa^{1/2}\right).
\endeq
Define $\mathcal{Q}_i(z;\kappa)$ to be 
$\mathcal{P}_i(z;\kappa)$ without the logarithm or pole terms:
\begin{eqnarray}
{\mathcal Q}_1(z;\kappa) & := & -V(z) + 2{\mathfrak g}(z;\kappa) + \ell_1, \\
{\mathcal Q}_2(z;\kappa) & := & -V(z) + az + {\mathfrak g}(z;\kappa) + 
{l_2}, \\
{\mathcal Q}_3(z;\kappa) & := & az - {\mathfrak g}(z;\kappa) + 
{l_2-\ell_1}.
\end{eqnarray}
Recall \eqref{sub-H23} that 
\be
\label{342}
{\bf R}(\zeta)  = \mathcal O(1) {\rm e}^{-n \le( \kappa \sum_{j=1}^K \frac {c_j^{(H)}}{\zeta^{j}} + \kappa \ln \zeta - \frac \kappa 2 \ell_H\ri)\mathbf \Lambda_{23}} \ \text{as} \  \zeta\to 0.
\ee
Now combining \eqref{Yinv-Y21}, \eqref{sub-PsiPsiinf}, \eqref{342}
gives
\eq
e^{-\frac{n}{2}(V(x)+V(y))}\left[{\bf Y}(y)^{-1}{\bf Y}(x)\right]_{21} = \mathcal{O}((x-y)n^{-(1-\gamma)/2})e^{n(*)}
\endeq
and
\eq
e^{-\frac{n}{2}(V(x)+V(y))+nay}\left[{\bf Y}(y)^{-1}{\bf Y}(x)\right]_{31} = \mathcal{O}((x-y)n^{-(1-\gamma)/2}) e^{n(**)}.
\endeq
Rearranging the terms from \eqref{38} and \eqref{312} we find 
\be
\delta V(z) - \kappa \ln \zeta - \kappa\sum_{j=1}^K \frac {c^{(H)}_j} {\zeta^j}  = \frac 1 2 \le(
\mathfrak g(z) - a z - \frac \kappa 2 (\zeta - \zeta_0)^2 + \kappa \ln \kappa  + \mathfrak b -l_2  + \ell_1 
\ri)\label{346}.
\ee
Using \eqref{346} we can rewrite
%
%
\bea
(\star) = 
\frac  1 2 \mathcal Q_1(x) + \frac 1 2 \mathcal Q_2(y) + \frac \kappa 4 (\zeta_y-\zeta_0)^2 -  \kappa \ln \sqrt \kappa -\frac {\mathfrak b }{2} = 
\frac 1 2 P_1(x) + \frac 1 2 P_2(y)  + \mathcal O(\kappa \ln \kappa)
\eea
and
\bea
(\star \star) = \frac 1 2 \mathcal Q_1(x) + \frac 12 \mathcal Q_2(y)  - \frac \kappa 4 (\zeta_y - \zeta_0)^2 - \frac \eta 2 + \kappa \ell_H=  \frac 12 P_1(x) + \frac 1 2 P_2(y) + \mathcal O(\kappa \ln \kappa).
\eea
%
Here we have used  Proposition \ref{propdeform-sub} to convert $\mathfrak{g}$ to $g$. 
Since $\Re P_1 (b^\star)< 0$ and also $\Re P_2(b^\star)<0$, the theorem follows.
\end{proof}

\appendix
\renewcommand{\theequation}{\Alph{section}-\arabic{equation}}

\section{The detailed analysis of Theorem \ref{rhoprop} }
\label{confmap-appendix}

In this appendix we prove the existence of the local change of variables 
used in the supercritical and subcritical cases.  We also demonstrate how 
the change of variables can be computed explicitly termwise.

\subsection{Background material}
We start recalling that the space $\mathcal H(\mathbb D(r))$ of holomorphic functions on an open connected domain $\mathbb D$ (a disk of radius $r$ for simplicity) is a Banach space with respect to  the sup norm.

The theorem of existence and uniqueness for ODEs can be extended  to any  Banach space $\mathcal M$. A sufficient condition for the integrability  is the Lipshitz property, namely   that we are given a (time-dependent) vector field $\mathcal V: \mathcal M\times J \to T\mathcal M$  which is {\bf jointly continuous} and {\bf Lipshitz}. 
Let 
\bea
&& \Omega_1:= \{\zeta:\mathbb D(r)\to \C,\  \zeta(0)=0, \|\zeta\|_\infty<\infty, \zeta \hbox{ {\bf univalent}}\}
\eea
and
\bea
&& \Omega:= \{\zeta:\mathbb D(r)\to \C,\  \zeta(0)=0, \|\zeta\|_\infty<\infty\}.
\eea

\begin{lemma}
\label{lemmaLipshitz}
The evaluation map of the inverse $\rho^{-1}$ at a point is locally Lipshitz on $\Omega_1$. More precisely:  
\be
\forall \zeta_0\in \Omega_1\ \exists C, \rho, S>0  \ s.t. \ \forall \xi\in \C,\  |\xi|<\rho\ \  \forall \zeta_1,\zeta_2 \in B_S(\zeta_0)\subset \Omega_1
\ee 
\be
|\zeta_1^{-1}(\xi) - \zeta_2^{-1}(\xi)| \leq C \|\zeta_1 -\zeta_2\|_\infty = C \ \sup_{z\in \mathbb D(r)} |\zeta_1(z)-\zeta_2(z)|\ .
\ee
\end{lemma}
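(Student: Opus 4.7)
The plan is to reduce the Lipschitz estimate for $\zeta \mapsto \zeta^{-1}(\xi)$ to a uniform quantitative lower bound on the modulus of continuity of $\zeta$ itself, valid for all $\zeta$ in a sufficiently small sup-norm ball $B_S(\zeta_0)$. Indeed, if $w_i := \zeta_i^{-1}(\xi)$, then $\zeta_1(w_1) = \xi = \zeta_2(w_2)$, so
\begin{equation}
\zeta_1(w_1) - \zeta_1(w_2) = \zeta_2(w_2) - \zeta_1(w_2),
\end{equation}
and therefore any bound of the form $|\zeta_1(w_1)-\zeta_1(w_2)| \geq c\,|w_1-w_2|$ valid uniformly in $B_S(\zeta_0)$ (on a fixed neighborhood of the origin containing the $w_i$'s) immediately yields $|w_1-w_2| \leq c^{-1}\|\zeta_1-\zeta_2\|_\infty$, which is the desired statement.

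To obtain the uniform lower bound I would proceed in two steps. First, since $\zeta_0 \in \Omega_1$ is univalent with $\zeta_0(0)=0$, we have $A_0 := \zeta_0'(0) \neq 0$; by continuity there is a closed disk $\overline{\mathbb D(\rho_0)}$ on which $|\zeta_0'(z)| \geq |A_0|/2$ and $|\zeta_0''(z)| \leq M$ for some constant $M$. Second, Cauchy's integral formula applied on a disk slightly smaller than $\mathbb D(r)$ gives, for any $\zeta \in B_S(\zeta_0)$ with $S$ small enough and for $|z|\leq \rho_0$,
\begin{equation}
|\zeta'(z)-\zeta_0'(z)| \leq \frac{\|\zeta-\zeta_0\|_\infty}{r-\rho_0} < \frac{|A_0|}{4}, \qquad |\zeta''(z)-\zeta_0''(z)| < \frac{|A_0|}{4\rho_0}.
\end{equation}
Shrinking $\rho_0$ if necessary so that $(M+|A_0|/(4\rho_0))\rho_0 < |A_0|/8$, the standard first-order Taylor estimate
\begin{equation}
\zeta(w_1)-\zeta(w_2) = \zeta'(0)(w_1-w_2) + \int_0^1 \bigl(\zeta'(w_2+t(w_1-w_2))-\zeta'(0)\bigr)(w_1-w_2)\,dt
\end{equation}
then yields $|\zeta(w_1)-\zeta(w_2)| \geq c\,|w_1-w_2|$ for all $w_1,w_2 \in \overline{\mathbb D(\rho_0)}$, with $c := |A_0|/8 > 0$ independent of $\zeta \in B_S(\zeta_0)$.

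It remains to choose $\rho > 0$ so that $\zeta^{-1}(\xi)$ lies in $\mathbb D(\rho_0)$ for every $\xi \in \mathbb D(\rho)$ and every $\zeta \in B_S(\zeta_0)$. Specialising the above bound to $w_2=0$ gives $|\zeta(w)| \geq c|w|$ on $\mathbb D(\rho_0)$, so $|\zeta(w)| \geq c\rho_0$ on $\partial\mathbb D(\rho_0)$; Rouch\'e's theorem applied to $\zeta(w)$ and $\zeta(w)-\xi$ shows that for any $|\xi| < \rho := c\rho_0$ the equation $\zeta(w)=\xi$ has a unique solution in $\mathbb D(\rho_0)$, necessarily equal to $\zeta^{-1}(\xi)$. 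Combining this with the identity of the first paragraph produces the claimed estimate with constant $C = c^{-1}$. The only mildly delicate step is the simultaneous control on $\zeta'$ and $\zeta''$ used to derive the Taylor bound, but this is a routine consequence of the Cauchy estimates on $\mathcal H(\mathbb D(r))$; everything else is a direct computation.
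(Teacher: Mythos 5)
Your approach is genuinely different from the paper's and, once a numerical slip is repaired, it works. The paper represents the inverse directly via the contour integral
\begin{equation}
\zeta^{-1}(\xi) = \frac{1}{2\pi i}\oint_{|z|=r/2}\frac{z\,\zeta'(z)\,dz}{\zeta(z)-\xi},
\end{equation}
subtracts the two integral representations, and estimates the resulting integrand algebraically, bounding the denominators away from zero by a careful choice of $\rho$ relative to $\inf_{|z|=r/2}|\zeta_0(z)|$. You instead exploit the pointwise identity $\zeta_1(w_1)-\zeta_1(w_2)=\zeta_2(w_2)-\zeta_1(w_2)$ and reduce the problem to a \emph{uniform lower Lipschitz bound} $|\zeta(w_1)-\zeta(w_2)|\geq c|w_1-w_2|$ for $\zeta$ in a small sup-norm ball, together with a Rouch\'e argument to confine the preimages to $\mathbb D(\rho_0)$. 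Your route is more elementary — it avoids differentiating an integral representation and localizes the argument around the origin — while the paper's is somewhat more compact and makes the dependence of $\rho$ and $C$ on $\zeta_0$ more explicit.

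There is, however, a quantitative flaw in the middle of your argument. You ask for $\rho_0$ small enough that $\bigl(M+|A_0|/(4\rho_0)\bigr)\rho_0 < |A_0|/8$, but this expands to $M\rho_0 + |A_0|/4 \geq |A_0|/4$, which is never smaller than $|A_0|/8$, regardless of $\rho_0$. The mistake is hard-wiring the Cauchy estimate for $\zeta''-\zeta_0''$ as $|A_0|/(4\rho_0)$: once multiplied by $\rho_0$ this contributes a \emph{fixed} amount. The cure is to untie that bound from $\rho_0$ and tie it to $S$: first fix $\rho_0$ small enough that $M\rho_0<|A_0|/16$ and that $\sup_{|w|\leq\rho_0}|\zeta_0'(w)-\zeta_0'(0)|<|A_0|/16$, and only then shrink $S$ so that $\sup_{|w|\leq\rho_0}|\zeta''(w)-\zeta_0''(w)|\rho_0 < |A_0|/16$ (or, more simply, so that $\sup_{|w|\leq\rho_0}|\zeta'(w)-\zeta_0'(w)|<|A_0|/16$ directly from the first-derivative Cauchy estimate, bypassing $\zeta''$ altogether). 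With that ordering of quantifiers the Taylor bound $|\zeta(w_1)-\zeta(w_2)|\geq c|w_1-w_2|$ goes through, and the rest of your argument — in particular the Rouch\'e step yielding $\rho=c\rho_0$ — is correct.
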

\begin{proof}
Note that $\Omega_1$ is an {\bf open} subset of the Banach vector space  of bounded analytic functions on $\mathbb D(r)$ containing the identity map.
Therefore the Banach ball of radius $S>0$ centered at $\zeta_0\in \Omega_1$ lies all within $\Omega_1$ for sufficiently small $S$.

 First we note that the forward map is locally Lipshitz; that is, let $z_0\in \mathbb D(r/2)$, then  any of the functionals $\zeta^{(n)} (z_0)$ are Lipshitz 
\bea
|\zeta_1^{(n)} (z_0) - \zeta_2^{(n)}(z_0)| = \le|
\frac {n!}{2i\pi} \oint_{|z|=2/3r} \frac {(\zeta_1(z)-\zeta_2(z))\d z}  {(z-z_0)^{n+1}} \ri| \leq \frac {1}{2\pi \le(\frac 2 3 - \frac 1 2\ri)^{n+1} r^n }  \|\zeta_1-\zeta_2\|\ .
\eea

Let now $\zeta_0(z) \in \Omega_1$ be univalent and let $0< 3\rho:= \inf_{|z|=r/2} |\zeta_0(z)|$.  Let $\xi$ be such that $|\xi|<\rho$.

Let $\zeta_1,  \zeta_2$ be two  maps in a ball around $\zeta_0$ of radius $\rho$ ($\|\zeta_j-\zeta_0\|<\rho$) and consider (all integrations are on $|z|=\frac 12 r$)
\bea
|z_0-\widetilde z_0|:= \le| \zeta_1^{-1}(\xi) - \zeta_2^{-1}(\xi)\ri| = \le|\frac 1 {2i\pi} \oint \frac {z \zeta'_1(z)\d z}{\zeta_1(z)-\xi} -  \frac 1 {2i\pi} \oint  \frac {z  \zeta_2'(z)\d z}{ \zeta_2(z)-\xi} \ri|  \\
\leq
\frac 1{2\pi} \oint \le|\frac {z\le[ (\zeta_1'(z)( \zeta_2 - \xi)-\zeta_2'(z) (\zeta_1 -\xi)\ri]\d z}{(\zeta_1-\xi)( \zeta_2-\xi)}  \ri|
\\
=
\frac 1{2\pi} \oint \le|\frac {z\le[ \xi (\zeta_2'-\zeta_1') + \zeta'_1 (\zeta_2-\zeta_1) + (\zeta'_1 -\zeta_2') \zeta_1 \ri]\d z}{(\zeta_1-\xi)( \zeta_2-\xi)}  \ri| \ .
\eea
Since the derivative evaluation on the circle $z=r/2$ is uniformly Lipshitz, the above can be easily estimated by 
\be
\frac 1{2\pi} \oint \le|\frac {z\le[ \xi (\zeta_2'-\zeta_1') + \zeta'_1 (\zeta_2-\zeta_1) + (\zeta'_1 -\zeta_2') \zeta_1 \ri]\d z}{(\zeta_1-\xi)( \zeta_2-\xi)}  \ri| \leq  \frac {C r}{\ds \inf_{|z|=r/2} |\zeta_1-\xi| \inf_{|z|=r/2}  |\zeta_2-\xi|}\|\zeta_1-\zeta_2\|
\ee
where $C = \sup \{|\xi|, \sup_{|z|=r/2} |\zeta_1'| , \sup_{|z|=r/2} |\zeta_1|\}$. 
Since $\zeta_j$ are less than $\rho$ away from $\zeta_0$ we have that the two infima in the denominator are at least $\rho$, since $|\xi|<\rho$ and 
\be
\inf_{|z|=r/2} |\zeta_j|\geq  \inf_{|z|=r/2} |\zeta_0|-\rho = 3\rho-\rho=2\rho.
\ee
\end{proof}

\subsection  {Complete proof of Theorem \texorpdfstring{\ref{rhoprop}  for the case $K=0$}{rhoprop}}
 \begin{proof}
 In this case there are no $\vec \delta$'s and no $\vec \gamma$'s; it should become clear that the general proof presents only notational complications, but is amenable to the same logic and hence the details are omitted. We also omit  explicit reference to the dependence of $f(z)$ on $\kappa$ for brevity.
We want to have
\be
  - f(z) +2 \kappa  \ln z  =-\frac 1 2 (\rho(z)-{\mathfrak a} )^2 + \mathfrak b  + 2\kappa  \ln \rho(z) \ ,
\label{identity0}
\ee
where the goal is now to show that $\mathfrak b=\mathfrak b(\kappa)$, ${\mathfrak a} ={\mathfrak a} (\kappa)$ and $\rho = \rho(z;\kappa)$ are all analytic functions of $\kappa$, with $\rho$ being univalent in a neighborhood of $z=0$ and mapping the origin to the origin.
Consider the differentiation of the above identity with respect to  $\kappa$:
\be
- \dot f  (z)+2 \ln z =  \le( {\mathfrak a}  - \rho + \frac \kappa  \rho \ri) \dot \rho + (\rho-{\mathfrak a} )\dot {\mathfrak a}   + \dot \beta+2 \ln \rho \ .
\ee
Solve for $\dot \rho$ and we find 
\be
\dot \rho(z;\kappa) = \rho \frac {(\rho-{\mathfrak a} )\dot {\mathfrak a}  + \dot {\mathfrak b} +2 \ln\le(\frac \rho z\ri)  + \dot f(z)} { \rho^2  -  {\mathfrak a} \rho - \kappa }\ . \label{1.5} 
\ee 

We want to view this equation as defining a vector field on a suitable Banach space that we define presently.
Let $\Omega_{1}$ be the Banach {\bf manifold} of {\bf univalent, analytic functions} $\rho:\mathbb D(r)\to \C$ which fix the origin $\rho(0)=0$; this is a closed Banach submanifold of all univalent analytic functions because the evaluation map is continuous.  One only has to verify that if $\rho_n$ is a sequence of univalent analytic functions on $\mathbb D(r)$ converging in the sup-norm, the limit exists and it is still univalent.  Define now 
\be
\mathcal M:= \Omega_1 \times \C^2 = \le \{{\bf p} =  (\rho,{\mathfrak a} ,{\mathfrak b}),\ \ \zeta\in \Omega_1, \ \ {\mathfrak a} ,{\mathfrak b}\in \C \ri \}\ .
\ee

Formula (\ref{1.5}) defines a vector field on $\mathcal M$: we  will first explain it in coarse terms and then refine the details.

The denominator to (\ref{1.5}) has two roots $\rho_1({\mathfrak a} ,\kappa),\rho_2({\mathfrak a}  ,\kappa)$; since $\dot \rho$ must be also analytic, we must impose that the numerator vanishes at the same points, and hence 
\bea
\dot {\mathfrak a}  &&\hspace{-18pt}=A:= \frac{ \det\begin{bmatrix} -\dot f(z_1) -  2\ln (\rho_1/z_1) & 1 \\ -\dot f(z_2) - 2\ln (\rho_2/z_2) & 1 \end{bmatrix}}{ \det\begin{bmatrix} \rho_1 - {\mathfrak a}  & 1\\
\rho_2-{\mathfrak a}  &1 \end{bmatrix}} 
\label{Adef}
\\
\dot {\mathfrak b} &&\hspace{-18pt}=B:= \frac{ \det\begin{bmatrix} (\rho_1-{\mathfrak a} )  & - \dot f(z_1) -  2\ln (\rho_1/z_1)  \\ (\rho_2-{\mathfrak a} ) &  - \dot f(z_2) -2\ln (\rho_2/z_2) \end{bmatrix}}{ \det\begin{bmatrix} \rho_1-\mathfrak a & 1\\
\rho_2 -\mathfrak a&1 \end{bmatrix}}
\label{Bdef}\\
&& \rho_{1,2} := \frac {{\mathfrak a}  \pm \sqrt{{\mathfrak a} ^2+4\kappa}}{2}.
\eea

 Here $z_1,z_2$ are the counterimages of $\rho_1,\rho_2$,  $z_j :=\rho^{-1}(\rho_j)$. Note that the expressions have analytic continuations to the case $\rho_1 = \rho_2$: indeed they are symmetric functions of the roots and therefore they can be expressed in terms  of analytic functions of ${\mathfrak a} ,\kappa$ (which play the role of elementary symmetric polynomials in the roots).
Therefore we consider the (time dependent) vector field $\mathcal V$ on the manifold $\mathcal M$ 
\bea
\label{defetavect}
\mathcal V([(\rho,{\mathfrak a} ,{\mathfrak b},\kappa)]) &&\hspace{-18pt} = \le(\eta(z), A, B \ri):= \le( \rho \frac {(\rho-{\mathfrak a} )A + B+ 2\ln\le(\frac \rho z\ri) + \dot f(z)} { \rho^2  -  {\mathfrak a} \rho - \kappa }, A, B
 \ri).
\eea
It is to be pointed out that $\eta(z) = \eta([\rho,{\mathfrak a} ,{\mathfrak b},\kappa]; z)$ is a tangent vector to $\Omega_1$, namely $\eta([\rho,{\mathfrak a} ,{\mathfrak b},\kappa ]; 0)\equiv 0$.

The initial condition for the vector field is 
\be
{\bf p}_0 = \le (\rho(z;0),{\mathfrak a} (0),{\mathfrak b}(0)\ri)  =\le(  \sqrt{2 f(z)}, 0 ,0 \ri).
\ee

Therefore the proof shall follow if we show that the vector field $\mathcal V$ is integrable in some neighborhood of the initial point and for sufficiently small values of $\kappa$ and for this to hold it is sufficient to verify the Lipshitz property.

\paragraph{Lipshitz property for $\mathcal V$.}
 To complete the proof it is sufficient to show that the vector field is locally Lipshitz in a Banach neighborhood of the initial condition. 
The initial $\rho (z)$ is univalent in a small disk --say $\mathbb D( 2 r_0)$-- around $z=0$ because $\rho '(0)\neq 0$.

By simple continuity arguments in the sup norm, there is a sup-neighborhood $\mathcal U$ of $\rho$ consisting of univalent functions on $\mathbb D(r_0)$.

We therefore shall restrict ${\mathfrak a} ,  \kappa $ in such a way that $|\rho_j|< r_0$; this guarantees that we can {\em define} the components of $\mathcal V$. It is also quite clear that the restriction $|\rho_j({\mathfrak a} ,\kappa)|<r_0$ contains a polydisk in ${\mathfrak a}  , \kappa $ (here ${\mathfrak b}$ is unrestricted). 
For example we can require
\be
|{\mathfrak a}  |< \frac {r_0}{10}\ ,\ |\kappa|< \frac {{r_0}^2} {100}\ .
\ee

Thus, the neighborhood of the initial condition $\{{\bf p}_0\}\times\{0\}$ that  we will analyze is 
\be
\mathfrak V:= \le\{({\bf p},\kappa) =(\rho, {\mathfrak a}  ,{\mathfrak b},\kappa  ) \in \mathcal M\times \R:\ \ \rho \in \mathcal U\ ,\ \ |{\mathfrak a}  |< \frac {r_0}{10}\ ,\ |\kappa|< \frac {{r_0}^2} {100}  \ri\}
\ee
The goal is now to prove that $\mathcal V$ is Lipshitz on $\mathfrak V$ for any fixed $t$. The fact that $A , B$ in (\ref{Adef}, \ref{Bdef}) are Lipshitz functions follows from Lemma \ref{lemmaLipshitz}. As for the first component, we have recall that the product of two Lipshitz {\bf bounded} functions is Lipshitz, as well as the ratio if the denominator is bounded away from zero. By construction $\eta(z)$ (\ref{defetavect})  is analytic and hence its sup-norm is achieved on the boundary of $\mathbb D(r)$ (by the maximum modulus theorem).  By the restrictions we made on $|{\mathfrak a}  |$ and $|\kappa|$, the denominator is bounded away from zero on $\pa\mathbb D(r)$ and uniformly so with respect to the choice of $\rho$ in the  Banach neighborhood $\mathcal U$ of ${\mathfrak a} $. On the other hand, the numerator is clearly Lipshitz.
\end{proof}





%

\bibliographystyle{alpha}

\begin{thebibliography}{99}

\bibitem{Adler:2009a}
M. Adler, J. Del\'epine, and P. van Moerbeke,
Dyson's nonintersecting Brownian motions with a few outliers.
\textit{Commun. Pure Appl. Math.}
\textbf{62}
(2009),
334--395.

\bibitem{Adler:2009b}
M. Adler, N. Orantin, and P. van Moerbeke,
Universality of the Pearcey process.
arXiv:0901.4520 [math-ph] (2009).

\bibitem{Aptekarev:2005}
A. Aptekarev, P. Bleher, and A. Kuijlaars,
Large $n$ limit of Gaussian random matrices with external source, part II.
\textit{Comm. Math. Phys.}
\textbf{259}
(2005),
367--389.

\bibitem{Baik:2005}
J. Baik, G. Ben Arous, and S. P\'ech\'e,
Phase transition of the largest eigenvalue for non-null complex sample covariance matrices.
\textit{Ann. Probab.}
\textbf{33}
(2005),
1643--1697.

\bibitem{private}
J. Baik, and D. Wang.  (To be published; Private communication.)


\bibitem{Baik:2006}
J. Baik,
Painleve formulas of the limiting distributions for non-null complex sample covariance matrices.
\textit{Duke Math. J.}
\textbf{133}
(2006),
205--235.

\bibitem{Baik:2008}
J. Baik,
On the Christoffel-Darboux kernel for random Hermitian matrices with external source.
\textit{Comput. Methods Funct. Theory} \textbf{9} (2009), 455--471

\bibitem{Bertola:2010}
M. Bertola, R. Buckingham, S.Y. Lee, and V. Pierce,
Spectra of random Hermitian matrices with a small-rank external source:  The critical regimes.
Preprint 2010.


\bibitem{Bertola:2009a}
M. Bertola and S. Lee,
First colonization of a spectral outpost in random matrix theory.
\textit{Constr. Approx.} \textbf{30} (2009), no. 2, 225--263.

\bibitem{Bertola:2008}
M. Bertola and S. Lee,
First colonization of a hard-edge in random matrix theory.
\textit{Constr. Approx.} 
\textbf{31} (2010), no. 2, 231--257.



\bibitem{Bertola:2009}
M. Bertola, S. Lee, and M. Yo,
Mesoscopic colonization of a spectral band.
\textit{J. Phys. A}
\textbf{ 42} (2009) no. 41, 415204, 17.

\bibitem{Bleher:2010}
P. Bleher, S. Delvaux, and A. Kuijlaars,
Random matrix model with external source and a constrained vector equilibrium problem.
arXiv:1001.1238 [math-ph] (2010).

\bibitem{Bleher:2004a}
P. Bleher and A. Kuijlaars,
Random matrices with external source and multiple orthogonal polynomials.
\textit{Int. Mat. Res. Not. IMRN}
\textbf{3}
(2004),
109--129.

\bibitem{Bleher:2004b}
P. Bleher and A. Kuijlaars,
Large $n$ limit of Gaussian random matrices with external source, part I.
\textit{Comm. Math. Phys.}
\textbf{252}
(2004),
43--76.

\bibitem{Bleher:2007}
P. Bleher and A. Kuijlaars,
Large $n$ limit of Gaussian random matrices with external source, part III:  double scaling limit.
\textit{Comm. Math. Phys.}
\textbf{270}
(2007),
481--517.

\bibitem{Brezin:1996}
E. Br\'ezin and S. Hikami,
Correlations of nearby levels induced by a random potential.
\textit{Nuclear Phys. B}
\textbf{479}
(1996),
697--706.

\bibitem{Brezin:1997a}
E. Br\'ezin and S. Hikami,
Spectral form factor in a random matrix theory.
\textit{Phys. Rev. E}
\textbf{55}
(1997),
4067--4083.

\bibitem{Brezin:1997b}
E. Br\'ezin and S. Hikami,
Extension of level-spacing universality.
\textit{Phys. Rev. E}
\textbf{56}
(1997),
264--269.

\bibitem{Brezin:1998}
E. Br\'ezin and S. Hikami,
Level spacing of random matrices in an external source.
\textit{Phys. Rev. E}
\textbf{58}
(1998),
7176--7185.

\bibitem{Daems:2007}
E. Daems and A. Kuijlaars, 
Multiple orthogonal polynomials of mixed type and non-intersecting Brownian motions.
\textit{J. Approx. Theory} 
\textbf{146} 
(2007), 
92--114.


\bibitem{Deift:1998-book}
P. Deift, 
\textit{Orthogonal Polynomials and Random Matrices:  a Riemann-Hilbert Approach}.
American Mathematial Society,
Providence,
1998.

\bibitem{Deift:1998a}
P. Deift, T. Kriecherbauer, K. McLaughlin,
New results on the equilibrium measure for logarithmic potentials in the presence of an external field.
\textit{J. Approx. Theory}
{\bf 95} (1998), 
 388--475. 

\bibitem{Deift:1999a}
P. Deift, T. Kriecherbauer, K. McLaughlin, S. Venakides, and X. Zhou,
Uniform asymptotics for polynomials orthogonal with respect to varying exponential weights and applications to universality questions in random matrix theory.
\textit{Comm. Pure Appl. Math.}
\textbf{52}
(1999),
1335--1425.

\bibitem{Deift:1999b}
P. Deift, T. Kriecherbauer, K. McLaughlin, S. Venakides, and X. Zhou,
Strong asymptotics of orthogonal polynomials with respect to exponential weights.
\textit{Comm. Pure Appl. Math.}
\textbf{52}
(1999),
1491--1552.

\bibitem{Delvaux:2010}
S. Delvaux, and A. Kuijlaars,
A graph-based equilibrium problem for the limiting distribution of nonintersecting Brownian motions at low temperature.
\textit{Constr. Approx.} (to be published)
arXiv:0907.2310 [math.CV] (2009).


\bibitem{Ercolani:2003}
N. Ercolani and K. McLaughlin,
Asymptotics of the partition function for random matrices via Riemann-Hilbert techniques and applications to graphical enumeration.  
\textit{Int. Math. Res. Not.}  
\textbf{14}
(2003),  
755--820.

\bibitem{Fokas:1992}
A. Fokas, A. Its, and A. Kitaev,
The isomonodromy approach to matrix models in 2D quantum gravity.
\textit{Comm. Math. Phys.}
\textbf{147}
(1992),
395--430.

\bibitem{McLaughlin:2007}
K. McLaughlin,
Asymptotic analysis of random matrices with external source and a family of algebraic curves.
\textit{Nonlinearity}
\textbf{20}
(2007),
1547--1571.

\bibitem{Peche:2006}
S. P\'ech\'e,
The largest eigenvalue of small rank perturbations of Hermitian random matrices.
\textit{Probab. Theory Related Fields}
\textbf{134}
(2006),
127--173.

\bibitem{Tracy:1994}
C. Tracy and H. Widom,
Level-spacing distributions and the Airy kernel.
\textit{Comm. Math. Phys.}
\textbf{159}
(1994),
151--174.

\bibitem{Wigner:1951}
E. Wigner,
On the statistical distribution of the widths and spacings of nuclear resonance levels.
\textit{Proc. Cambridge Philos. Soc.}
\textbf{47}
(1951),
790--798.

\bibitem{Zinn-Justin:1997}
P. Zinn-Justin,
Random Hermitian matrices in an external field.
\textit{Nuclear Phys. B}
\textbf{497}
(1997),
725--732.

\bibitem{Zinn-Justin:1998}
P. Zinn-Justin,
Universality of correlation functions of Hermitian random matrices in an external field.
\textit{Comm. Math. Phys.}
\textbf{194}
(1998),
631--650.

\end{thebibliography}

\end{document}